\documentclass[11pt,authoryear]{elsarticle}

\usepackage[left=2.cm, right=2.5cm, top=3.5cm, bottom=3cm, footskip=0.5cm]{geometry}
\usepackage{amsmath, amsfonts, amsthm, amssymb}
\usepackage{verbatim}
\usepackage{hyperref}
\usepackage{color}
\usepackage{xcolor}      
\usepackage{todonotes}   
\usepackage{graphicx}
\usepackage{cleveref}
\usepackage{enumerate}
\usepackage{wrapfig}
\usepackage{tikz}
\usepackage{empheq}
\usepackage{eurosym}
\usepackage{subcaption}
\usetikzlibrary{positioning}
\usepackage{stmaryrd}
\usepackage{algorithm}
\usepackage{algpseudocode}
\usepackage{ulem}
\usepackage{makecell}
\usepackage{cases} 
\usepackage{fancyhdr}
\usepackage{multirow}
\usepackage{mathrsfs} 

\usetikzlibrary{arrows.meta, positioning, shapes.geometric,
                calc, decorations.pathreplacing}

\numberwithin{equation}{section}
\numberwithin{equation}{section}
\newtheorem{theorem}{Theorem}[section]

\newtheorem{proposition}[theorem]{Proposition}

\theoremstyle{definition}
\newtheorem{definition}[theorem]{Definition}
\newtheorem{remark}[theorem]{Remark}
\newtheorem{assumption}[theorem]{Assumption}

\newtheorem{example}[theorem]{Example}
\numberwithin{equation}{section}

\newcommand{\RR}{\mathbb{R}}
\newcommand{\PP}{\mathbb{P}}
\newcommand{\FF}{\mathbb{F}}
\newcommand{\GG}{\mathbb{G}}

\newcommand{\NN}{\mathbb{N}}
\newcommand{\Ff}{\mathcal{F}}

\newcommand{\Jj}{\mathcal{I}}

\newcommand{\cW}{\mathcal{W}}

\newcommand{\cF}{\mathcal{F}}

\newcommand{\cG}{\mathcal{G}}

\newcommand{\cA}{\mathcal{A}}

\newcommand{\cR}{\mathcal{R}}

\newcommand{\EE}{\mathbb{E}}
\newcommand{\dr}{\mathrm{d}}
\newcommand{\OneN}{\{1,\hdots,H\}}

\newcommand{\VV}{\mathbb{V}}
\newcommand{\bOne}{\mathbf{1}}

\newcommand{\af}{\mathfrak{a}}

\newcommand{\ff}{\mathbf{F}}
\newcommand{\cc}{\mathfrak{c}}
\newcommand{\CC}{\mathfrak{C}}

\newcommand{\Jf}{\mathfrak{J}}

\newcommand{\aep}{\mathrm{AEP}}

\newcommand{\titre}{A stochastic SIR model for cyber contagion: application to granular growth of firms and to insurance portfolio}

\usepackage{fancyhdr}
\begin{document}

\begin{frontmatter}

\title{\titre}
\date{\today}

\author[1]{Caroline Hillairet}
\author[1]{Olivier Lopez}
\author[1,2]{Lionel Sopgoui}

\address[1]{CREST, UMR CNRS 9194, Ensae Paris, Avenue Henry Le Chatelier, 91120 Palaiseau, France}
\address[2]{Institut Louis Bachelier}

\journal{Scandinavian Actuarial Journal}

\begin{abstract}
\small
This work evaluates the impact of contagious cyber-events, over a finite horizon, on firms’ financial health and on a cyber insurance portfolio.
Our approach builds on key empirical findings from economics and cybersecurity. In economics, firm size and growth-rate distributions are non-Gaussian and exhibit heavy tails. In cybersecurity, contagion dynamics strongly depend on firm size and environmental conditions. 
To capture these features, we propose a stochastic multi-group SIR model coupled with a granular model of firm growth. This framework allows us to quantify the financial impact of cyber-attacks on firms’ revenues and on the insurer’s portfolio.
In the model, the arrival time and duration of cyber-attacks are driven by a combination of a Cox process and a Bernoulli random variable. The Cox process represents external contagion, with an intensity given by the force of infection derived from the stochastic SIR dynamics. The Bernoulli component captures contagion originating from an infected sister or subsidiary firm.
Environmental variability enables stochastic scenario generation and the computation of aggregate exceedance probabilities, a standard metric in catastrophe modeling that provides insurers with immediate insight into the financial severity of an event. We apply the framework to the LockBit ransomware attacks observed between May and July 2024. For a portfolio of 2,929 firms located in Île-de-France, the model predicts that, with 50\% probability, the insurer will need to compensate losses equivalent to up to two days of revenue over a 100-day cyber incident.
\end{abstract}

\begin{keyword}
  Cyber risk\sep granular model of firm growth\sep Aggregate exceedance probability\sep stochastic stress scenarios\sep stochastic SIR model \sep Cox process.
\end{keyword}
\end{frontmatter}

\footnotesize This research is part of the CyFi (Cyber Financialization) project, in cooperation with the cyber risk quantification tech startup \textit{Citalid} and with the financial support of \textit{Bpifrance}.


\normalsize
\section*{\label{sec:level1}Introduction}

Cybersecurity risk, alongside climate change and geopolitical instability, ranks among the most pressing emerging concerns for experts and the general population (see \cite{axa2024risks}). It encompasses malicious or accidental events that compromise the confidentiality, availability, or integrity of data and IT services. Common examples include malware infections, service-provider outages, and data breaches. Among these threats, ransomware -- a category of malware that
prevents or restricts users from accessing their system or exfiltrates sensitive data -- with lower barriers to entry, has become one of the most significant
threats. \cite{guidepoint2025} reports explosive growth in ransomware attacks: 76.8\% between 2022 and 2023, 8.72\% between 2023 and 2024, and 58\% between 2024 and 2025 in ransomware attacks. In 2025, this amounts to 7,515 publicly reported attacks attributed to up to 124 tracked ransomware groups. Victim organizations face substantial costs, including extortion payments --with \cite{axa2024risks} reporting \$1.1 billion paid to hackers in 2023--, remediation expenses, legal fees, business interruption losses, and reputational damage. That could consequently represent an increasing part in cyber insurance claims. As a result, there is a growing -- not only for businesses but also for financial institutions such as insurance companies, and even for governments -- to assess the economic and financial impact of potential future attacks. In this work, we focus on evaluating the impact of a cyber-episode on firms' financial health and on a cyber insurance portfolio.

Cyber risk modeling in finance and insurance, although relatively recent, encompasses a wide range of approaches. Early contributions rely on traditional frequency–severity frameworks \cite{eling2017data,farkas2021cyber}, while \cite{edwards2016hype} analyze trends in data breaches using Bayesian generalized linear models. More recently, Hawkes processes have been introduced to capture self-excitation and dependence in cyber events. In particular, \cite{bessy2021multivariate} and \cite{boumezoued2025cyber} propose multivariate Hawkes models to account for interdependencies between data breaches and the role of vulnerabilities in predicting attack frequency. Beyond point-process approaches, cyber-attacks share key mechanisms with biological epidemics, including the presence of a susceptible population, infection and recovery dynamics, peak intensity, prevalence, and network or cluster structures. This observation has motivated a growing literature inspired by compartmental epidemiological models. In this vein, \cite{hillairet2021propagation} adapt an epidemiological framework to model a contagious cyber event, such as a WannaCry-type attack, and apply it to an insurance portfolio to assess the impact of reaction and remediation measures. This approach is extended in \cite{hillairet2022cyber} to incorporate the network structure of the firm population. The present work builds on these contributions by designing a cyber episode based on the SIR model, a widely used epidemiological framework, as in  \cite{doenges2024sir,ji2011multigroup,bouzalmat2023stochastic}. In its simplest form, the SIR model partitions the population into three compartments: 
\begin{itemize}
    \item susceptible (S) firms that are exposed but not yet infected, 
    \item infected (I) firms that are affected and can transmit the malware, \item and removed (R) firms that have recovered or been secured.
\end{itemize}
The dynamics of the epidemic are governed by a system of ordinary differential equations describing the transitions between these compartments.

As the application of epidemiological models to cyber insurance remains at an early stage, several important features have yet to be fully incorporated. In this work, we focus on integrating key empirical stylized facts from both economics and cybersecurity. First, firm size and firm structure exhibit pronounced heterogeneity. In particular, \cite{axtell2001zipf} shows that firm size distributions are heavy-tailed, often following Zipf’s law, while \cite{stanley1996scaling} documents that firm growth-rate distributions are non-Gaussian. Second, empirical evidence in cybersecurity indicates that the dynamics of cyber-episodes depend strongly on firm size: larger firms are more frequently targeted by cyber-attacks, see \cite{BaksyCyberFirmSize2025,kamiya2021risk}. Moreover, contagion parameters are subject to environmental variability as pointed out by \cite{ECBMacroprudentialBulletin2025}. For instance, in firms organized into multiple subunits, infection rates may differ substantially depending on whether contamination originates from within the organization or from external sources. Accounting for these sources of heterogeneity is essential for realistically modeling cyber contagion and its financial consequences.

To account for these features, we propose a framework coupling a stochastic multigroup SIR model with a granular growth model for firm revenue. Following the epidemiological approach of \cite{doenges2024sir}, we study the influence of firm size and distribution on attack propagation. However, rather than the standard conversion of deterministic ODEs into Stochastic Differential Equations (SDEs), we adopt a more intuitive approach by embedding stochasticity directly into the parameters via Cox-Ingersoll-Ross (CIR) processes, as in \cite{allen2016environmental}. To model revenue, we adapt the granular growth framework of \cite{moran2024revisiting}, which treats each firm as a collection of subunits with varying degrees of independence. By extending this model into continuous time, we can capture the instantaneous revenue loss occurring at the subunit level during an active attack. The arrival of these attacks is governed by a hybrid process:
\begin{itemize}
    \item External Contagion: a Cox process with an intensity -- the "force of infection"-- derived from the SIR model dynamics.
    \item Internal Contagion: a Bernoulli random variable representing secondary infection from a sister subsidiary.
\end{itemize}
This mixed process defines both the initial arrival time and the duration of the contagion within a subunit. Finally, we define the insurer's instantaneous cost as the revenue differential between the "no-attack" and "attack" scenarios. At the portfolio level, we quantify risk using the Aggregate Exceedance Probability (AEP) to determine the probability that total losses exceed a defined threshold over a specific horizon.
Throughout the paper, 
we consider a cyber-episode over a finite horizon.

The remainder of this paper is organized as follows. In Section~\ref{sec:the granular model}, we introduce the jump-diffusion model used to determine the financial health of firms subject to cyber-attacks; we also define the Aggregate Exceedance Probability (AEP) as our primary risk measure for the insurance portfolio. Section~\ref{sec:the contagion model} is dedicated to the cyber-attack contagion model, providing analytical results that establish the existence and uniqueness of a non-negative solution for the stochastic multi-group SIR framework. In Section~\ref{sec:linked}, we bridge these frameworks by translating the contagious cyber-episode into the jump-diffusion revenue model using internal and external transmission probabilities. Finally, Section~\ref{sec:simulations} details our data sources, model calibration, and simulation results. Specifically, the contagion model is validated using data from the ransomware episode attributed to the Lockbit group (May - July 2024), and its impact is assessed on a representative portfolio of 2,929 firms in the Île-de-France region.

\section{Firm value and insurance portfolio dynamics under cyber-attacks}\label{sec:the granular model}

In this section, we extend the granular model of firm growth of \cite{wyart2003statistical,herskovic2020firm,moran2024revisiting} in a continuous time setting and in the context of cyber risk. This model has been introduced to fit some empirical facts, in particular the ones cited by \cite{stanley1996scaling,axa2024risks}: both the firm size
distribution and the growth rate distribution are non-Gaussian and
feature heavy tails.

Throughout this work we consider a probability space \((\Omega,\cA,\PP)\), where $\cA$ is a $\sigma$-algebra on $\Omega$. We fix $T > 0$ which represents the horizon of the cyber-episode. 
In the same spirit as \cite{moran2024revisiting}, within a given insurance portfolio of $H \in\NN^*$ firms characterized by exogenous production, we consider that firms are composed of a number of independent subunits, such as departments or production units, which operate within separate sub-markets and which are not necessarily of equal size. We have the following assumption.
\begin{assumption}\label{ass:firm revenue}
    The revenue of firm $i\in\OneN$ at time $t\geq 0$ is denoted by $Z_{i,t}$ and satisfies 
\begin{align}\label{eq:firm revenue}
    Z_{i,t} := \sum_{j=1}^{K_i} z_{ij,t},
\end{align}
where $K_i\in\NN^*$ represents
the number of subunits and $z_{ij,t}$, $j = 1,\hdots, K_i$, denotes the respective revenue.
\end{assumption}
\noindent We will note
\begin{align}
    \Jj := \left\{(i,j) \mid i\in\{1,\hdots,H\}\text{ and }j\in\{1,\hdots,K_i\} \right\}.
\end{align}
The firm's size is the number of its subunits, and the revenue is an indicator of firms’ and subunits' financial health. Instead of revenue, we could consider its  workforce, profits, etc. In addition, we could view the economy as a supra-firm subdivided in sectors. In the same way, each sector can be seen as a supra-firm subdivided in companies. This point of view could be interesting for calibration because public data are usually more easily available at economy/sectoral level than at firm's level.
The firm's size $K_i$, which is an integer, is a random variable assumed to be distributed according to Zipf's law (the discrete version of Pareto distribution) on $\{1,\hdots, K\}$, $K\in\NN^*$. For $k=1, \cdots, K$
\begin{equation}\label{eq:pareto sub-unit}
    \PP(K_i=k) = q\frac{\af k^{-(1+\af)}}{1-K^{-\af}};\quad 1 < \af < 2.
\end{equation}
The above assumption is motivated by the literature (see \cite{wyart2003statistical, moran2024revisiting}) and is easily verified on the data (see \Cref{fig:histogram_sub_unit} and \Cref{fig:zipf_number_sub_unit} where $\af = 1.76$ and $q = 0.784$).

\subsection{The impact of cyber-attacks on firms’ revenue}

When a cyber-attack succeeds, it causes financial loss to the company. This loss can result from paying a ransom, business disruption, repairing equipment, reputation costs, etc. Let $1\leq i\leq H$ and $1\leq j\leq K_i$. In the absence of a cyber-attack, the
revenue $z_{ij}$  of subunit $j$ of firm $i$ is modeled as a Geometric Brownian motion,  while in the presence of a cyber-attack, it is modeled as a jump-diffusive  process. We introduce the following assumption.
\begin{assumption}\label{ass:sub-unit dynamics}
    For $(i,j)\in\Jj$,
    \begin{enumerate}
        \item Without the cyber-event, the time evolution of each subunit $i$’s revenue, $\overline z_{ij}$, is characterized by the diffusion process
        \begin{align}\label{eq:sub-unit dynamics wo cyber}
    \frac{d\overline z_{ij,t}}{\overline z_{ij,t}} = \mu_{ij} \dr t + \sigma_{ij} \dr  B_{ij,t},\qquad \overline z_{ij,0} = z_{ij,0}> 0, \; \sigma_{ij} > 0, \; \mu_{ij}\in\RR,
\end{align}
where $B_{ij}$ is a Brownian motion. For a given firm $i$, $1\leq i\leq H$, the Brownian motions $(B_{ij})_{1\leq j\leq K_i}$ driving the revenues of the subunits are  correlated with correlation coefficient $\rho_i\in\RR$. But  for different firms  $i_1\neq i_2$,  
$B_{i_1 j_1}$ and $B_{i_2 j_2}$ are assumed to be independent. Integrating SDE \eqref{eq:sub-unit dynamics wo cyber}, we have
\begin{align*}
    \overline z_{ij,t} = z_{ij,0}  \exp{\left(\left(\mu_{ij} -\frac{\sigma_{ij}^2}{2} \right)t + \sigma_{ij} B_{ij,t}\right)}.
\end{align*}
    \item The random time $\tau_{ij}$ of the arrival  of the cyber-attack affecting the subunit $j$ of firm $i$,  is defined as 
    \begin{align}\label{eq:first arrival time}
        \tau_{ij} = \inf\{t\geq 0|N_{ij,t} \geq 1  \},
    \end{align}
    where $N_{ij,t}$ is a point process that will be defined later.
    \item $\delta_{ij}$ is a random time representing the duration of an infection.
    \item We introduce the process $\pi_{ij}$, taking values in $[0,1]$, which represents the random fraction of the firm’s revenue lost during the infection. 
    \end{enumerate}
    Therefore, the firm's revenue $z_{ij,t}$ evolves as follows
\begin{align}\label{eq:sub-unit dynamics}
    z_{ij,t} =  \left\{ 
    \begin{array}{l}
      \overline z_{ij,t} \qquad\qquad\qquad\text{ if } t < \tau_{ij}\\
      \left(1-\pi_{ij,t} \right) \overline z_{ij,t} \qquad\text{if } \tau_{ij} \leq t < \tau_{ij} + \delta_{ij}\\
        \overline z_{ij,t} \qquad\qquad\qquad\text{ if } t \geq  \tau_{ij} + \delta_{ij}
    \end{array}
  \right.
\end{align}
Moreover, $(\pi_{ij})_{(i,j)\in\Jj}$, $(N_{ij})_{(i,j)\in\Jj}$, and $(B_{ij})_{(i,j)\in\Jj}$ are assumed to be independent.
\end{assumption}
Let $(i,j)\in\Jj$. In the above assumption, the constant parameters $\sigma_{ij}$ (resp. $\mu_{ij}$) define the order of magnitude (resp. the drift) of  the growth fluctuations at the level of a subunit. Regarding random time $\tau_{ij}$, for $t\geq 0$ and the subunit $(i,j)$,  the event $\{\tau_{ij}\leq t\}$
corresponds to a successful breach caused by an attack on  before time $t$; otherwise, no attack has taken place yet or the potential attack is blocked and no loss occurs before time $t$. It should also be noted that the $K_i$-dimensional process $(N_{ij})_{1\leq j\leq K_i}$ is potentially correlated, because there can be internal contamination between subsidiaries, as we will see later.
Moreover, the property on  the Brownian motion $(B_{ij})_{ij}$ means that inside the same firm, the subunits' revenues are correlated but different firms' revenues are not.

Finally, the process $\pi_{ij}$ represents the severity (i.e. the instantaneous random monetary loss) of the attack. This loss may be potentially covered by an insurance contract.  It is a $[0,1]$-value process because a cyber-attack is detrimental to the subunit, but its losses remain capped at its potential earnings. The time-behavior of the loss  process may depend on the dynamics of the epidemic, on the firm’s investments in cybersecurity, and also on cyber insurance premium: the severity might begin high when infection starts, and progressively drop to zero (full recovery). Or, it could start slowly, peak, and then decline.  
Having no explicit empirical input on this feature, we limit ourselves in this work to the case where the severity does not depend on time, so $\pi_{ij}$ is a random variable instead of a stochastic process. The sequence $(\pi_{ij})_{ij}$ are assumed  independent and identically distributed with common distribution $f_\pi$ having  support in $[0,1]$, mean $\pi_\star$ and standard deviation $\sigma_\star^2$.

\begin{example}[Beta distribution]  
For the numerical analysis,  we will take $\pi_{ij} \sim B(\alpha^\pi, \beta^\pi)$ whose density probability function is $f(x;\alpha^\pi, \beta^\pi) = \frac{x^{\alpha^\pi-1}(1-x)^{\beta^\pi-1}}{B(\alpha^\pi, \beta^\pi)}$ for $x\in[0,1]$ where $B(\alpha^\pi, \beta^\pi) = \frac{\Gamma(\alpha^\pi) \Gamma(\beta^\pi)}{\Gamma(\alpha^\pi+\beta^\pi)}$ and $\Gamma$ is the Gamma function.
\end{example}

 If we are dealing with the whole portfolio (or a whole sector or an entire economy), we can also derive the total  output (revenue) $\mathbf{O}$ so that for all $t\geq 0$,
 \begin{align}\label{eq:whole economy}
     \mathbf{O}_t := \sum_{i=1}^{H} Z_{i,t},
 \end{align}
 where the individual firm revenue is given by 
\begin{align*}
    Z_{i,t} = \sum_{j=1}^{K_i} \left(1-\pi_{ij} \bOne_{ \tau_{ij} \leq t < \tau_{ij} + \delta_{ij}} \right)\overline z_{ij,t}= \sum_{j=1}^{K_i}   \left(1-\pi_{ij} \bOne_{ \tau_{ij} \leq t < \tau_{ij} + \delta_{ij}} \right) z_{ij,0} \exp{\left(\left(\mu_{ij} -\frac{\sigma_{ij}^2 }{2}\right)t + \sigma_{ij} B_{ij,t}\right)}.
\end{align*}
We also introduce the natural filtration of $(B_{ij})_{(i,j)\in\Jj}$
\begin{align}\label{eq:filtration Fb}
    \FF^B := (\cF^B_t)_{t\geq 0}.
\end{align}

\subsection{The impact of cyber-attack on insurance portfolio}

Based on the description of the attack at the policyholder level, the insurance company is
interested in aggregating these risks. We focus on the total claims of the cyber-event for the insurance company over a period. For a policyholder experiencing a cyber-attack, each day spent infected results in financial losses: their revenue decreases. The insurance company potentially compensates a part (or all) of these losses. Consider the policyholder $i\in\OneN$. For all $j\in\{1,\hdots,K_i\}$, recall the dynamics of the subunit's revenue without and with the cyber risk $\overline z_{ij}$ and $z_{ij}$ respectively in \eqref{eq:sub-unit dynamics wo cyber} and in \eqref{eq:sub-unit dynamics}.
We  then define the instantaneous claim for policyholder $i$: 
\begin{definition}
     For $1\leq i\leq H$, the process $\cc_{i}$, corresponding to the instantaneous claim for policyholder $i$ at time $t\geq 0$, is defined as
     \begin{align}\label{eq:explicit c_i}
         c_{i,t}: = -\sum_{j=1}^{K_i} \left(z_{ij,t} - \overline z_{ij,t}\right) =\sum_{j=1}^{K_i} \pi_{ij} \bOne_{ \tau_{ij} \leq t < \tau_{ij} + \delta_{ij}}z_{ij,0} \exp{\left(\left(\mu_{ij} -\frac{\sigma_{ij}^2}{2} \right)t + \sigma_{ij} B_{ij,t}\right)}.
     \end{align}
\end{definition}
\noindent This represents the sum of the differences in income for each sub-unit, with and without a cyber-attack.\\ 

A key metric for the insurer for quantifying the impact of attacks is the Cumulative Distribution Function (CDF) of the portfolio’s total losses, 
with an emphasis on extreme losses  that could threaten liquidity.
We focus first on losses over short intervals, such as a single day.


\begin{definition}
For $t,u\geq 0$, the cumulative distribution function of the portfolio  over  the period $[t,t+u]$ is denoted, for all $x\geq 0$,
\begin{align}\label{eq:cdf portfolio}
    \ff_{[t,t+u]}(x) &:= \PP\left(\sum_{i=1}^{H}C_{i,[t,t+u]} \leq x\right),
\end{align}
 where for  $1\leq i\leq H$, $C_{i,[t,t+u]}$ is  the total claim of firm $i$ between $t$ and $t+ u$
\begin{equation}\label{eq:claim in period}
\begin{split}
    C_{i,[t,t+u]} := \int_{t}^{t+u} c_{i,s} \dr s= \sum_{j=1}^{K_i} z_{ij,0} \int_{t\vee\tau_{ij}}^{(t+u)\wedge(\tau_{ij} + \delta_{ij})} \pi_{ij} e^{\left(\mu_{ij} -\frac{\sigma_{ij}^2}{2} \right)s + \sigma_{ij} B_{ij,s}}\dr s.
\end{split}
\end{equation}
\end{definition}
In order to measure the overall severity  of the attack, we next introduce a widely used metric in catastrophe risk modeling: the Aggregate Exceedance Probability or AEP (see \cite{grossi2005catastrophe,homer2017notes}). 
It is the probability that the sum of {several cyber-incidents}
losses  over a period exceeds a certain amount. In general, AEP is computed on one year, but we define here AEP on the horizon of the epidemic $T$.
For a {given} cyber-episode lasting until date $T$, the total portfolio loss due to {this} episode over $[0,T]$, is denoted $\CC_T$
\begin{align}\label{eq:c portfolio}
    \CC_T := \sum_{i=1}^{H}\left( \int_{0}^{T}c_{i,t}\dr t\right) =  \sum_{i=1}^{H} C_{i,[0,T]}.
\end{align}
The AEP calculates the impact of several identical cyber-incidents that occurred during the period $[0,T]$.

\begin{definition} Over the period $[0,T]$, the insurance portfolio can be affected by  a random number $P$ 
of cyber-events, each of them leading to a random claim amount $\CC_T^p$, $p=1,\cdots, P$. We assume the random claim amounts $(\CC_T^p)_p$ to be  independent and identically distributed according to $\ff_{[0,T]}$ (that is the distribution of $\CC_T$ defined in \eqref{eq:c portfolio}), and independent of $P$.
Then the Aggregate Exceedance Probability (AEP)  at time $T$ is defined as, for $x\geq 0$,
\begin{align}\label{def:portfolio measure}
    \aep_T(x) &:= \PP\left(\CC_T^1 + \CC_T^2 + \hdots + \CC_T^P > x\right).
\end{align}
\end{definition}
\medskip
Computing $\aep_T$  requires the convolution of  the CDF of $\CC_T$, $\ff_{[0,T]}$ defined in \eqref{eq:cdf portfolio}.
For $p \in \NN$, let
 $\ff_{\CC_T}^{(p)}$ denote the $p$-fold convolution of $\ff_{[0,T]}$ defined iteratively as
\begin{align*}
 \ff_{\CC_T}^{(1)} = \ff_{[0,T]}, \quad    \ff_{\CC_T}^{(p)}(x) = \int_{0}^{x} \ff_{\CC_T}^{(p-1)}(x-y) \mathbf{f}_{[0,T]}(y)\dr y, \quad p = 2, 3, \hdots,
\end{align*}
with  $\mathbf{f}_{[0,T]}$ the probability density function associated with $\ff_{[0,T]}$.
Then, 
given that $P$ is  independent of the claim amounts $(\CC^p_T)$, 
we have by the law of total probability, for\footnote{Obviously $\aep_T(0)=1$.} $x>0$
\begin{equation}\label{eq:portfolio measure}
\begin{split}
    \aep_T(x) &= \sum_{p=1}^{+\infty} \PP\left(\CC_T^1 + \CC_T^2 + \hdots + \CC_T^p > x\middle|P = p\right) \PP[P = p]\\
    &= 1 - \sum_{p=1}^{+\infty}  \ff_{\CC_T}^{(p)}(x)\PP[P = p]= 1 - \EE_P\left[\ff_{\CC_T}^{(P)}(x) \right]\\
    &= \EE_P\left[\PP\left(\sum_{k=1}^{P}\CC_T^k > x\right)\right].
\end{split}
\end{equation}
 In the numerical section, the latter quantity is computed  via a Monte Carlo simulation.

All the above metrics require to know the distribution  of the processes $(c_{i,t})_{t}$ for each $i$. The sources of randomness in these quantities are the processes $(B_{ij,t})$, $(N_{ij,t})$, and the variables $(\pi_{ij})$. In order to evaluate the impact of cyber incidents on an economy (using \eqref{eq:whole economy}) or on an insurance portfolio (using \eqref{eq:portfolio measure}), a crucial point is to model the cyber contagion or in other words,  the dynamics of the process $(N_{ij})_{ij}$.  Early works such as \cite{peng2017modeling} or \cite{zeller2022comprehensive} model the arrival of cyber-attacks using (in)homogeneous Poisson processes. Thereafter, \cite{bessy2021multivariate} and \cite{boumezoued2025cyber} show that Hawkes processes are particularly
suitable to capture the contagion phenomena, the clustering, and the autocorrelation of the arrival times of cyber incidents. Lastly, by noting that a cyber incident and a biological epidemic have similar characteristics, \cite{hillairet2021propagation} and \cite{hillairet2022cyber} use compartmental epidemiological models to describe cyber contagion by linking the hazard rate function of arrival times with the dynamics of infected individuals. We adapt the latter approach in the present work because our purpose is to model the dynamics of a specific cyber-contagion over a finite time horizon.

\section{The contagion model} \label{sec:the contagion model}
As suggested by \Cref{fig:GRIT_ransomware} (from the 2025 Ransomware \& Cyber Threat Report of \textit{GuidePoint Security’s Research and Intelligence Team's [GRIT]}),  biological epidemics and cyber contagion episodes have fairly similar characteristics. Much like certain contagious biological outbreaks, a firm (just as a person) could be susceptible i.e. not yet infected but vulnerable if exposed to an infectious firm, infected i.e. affected by the malware and which can transmit, and removed or recovered i.e.  whose systems are cured, secured, or patched. Moreover, when examining the dynamics of companies infected by the LockBit ransomware between May and July 2024 on \Cref{fig:Number_infected_LockBit_2024}, the number of infections starts out low, then escalates rapidly until it reaches a peak, before beginning to decline. Finally, just as infection rates differ within and between households in biological epidemics, cyber epidemics mirror this transmission pattern within and between  firms’ subunits. Subsidiaries of a given company are, by nature, more interconnected together than with external organizations.

\begin{figure}[ht!]
  \centering
  \begin{subfigure}[b]{0.59\textwidth}
    \centering
    \includegraphics[width=\textwidth]{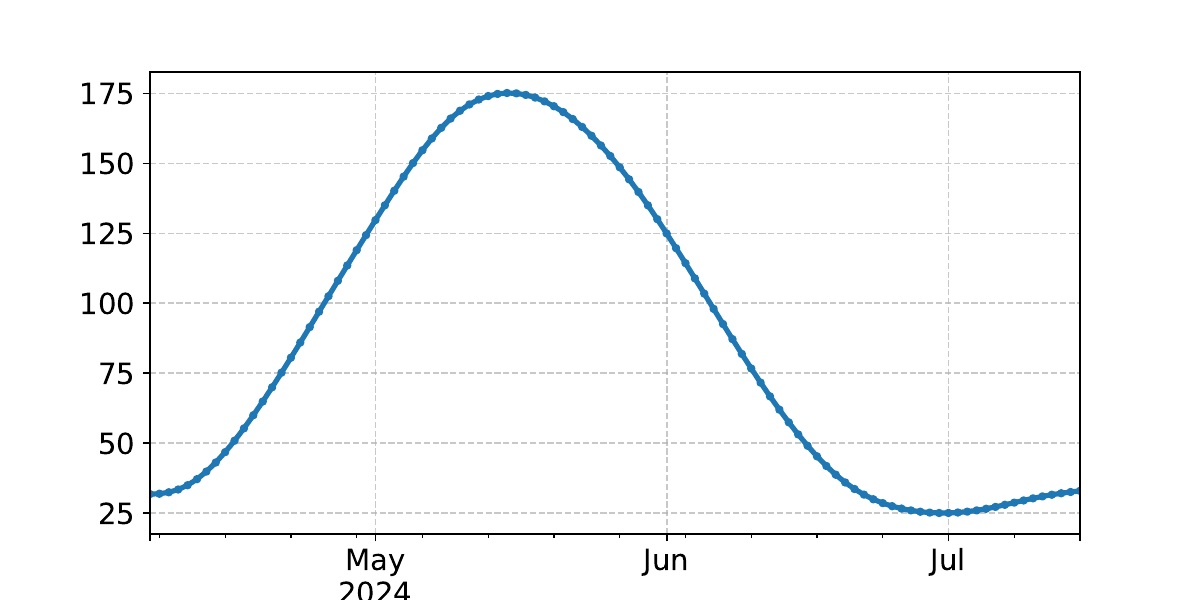}
    \caption{Number of infected from LockBit reported from May to July 2024 (Date in days on the x-axis and number of companies attacked on the y-axis)}
    \label{fig:Number_infected_LockBit_2024}
  \end{subfigure}
  \hfill
  \begin{subfigure}[b]{0.4\textwidth}
    \centering
    \includegraphics[width=\textwidth]{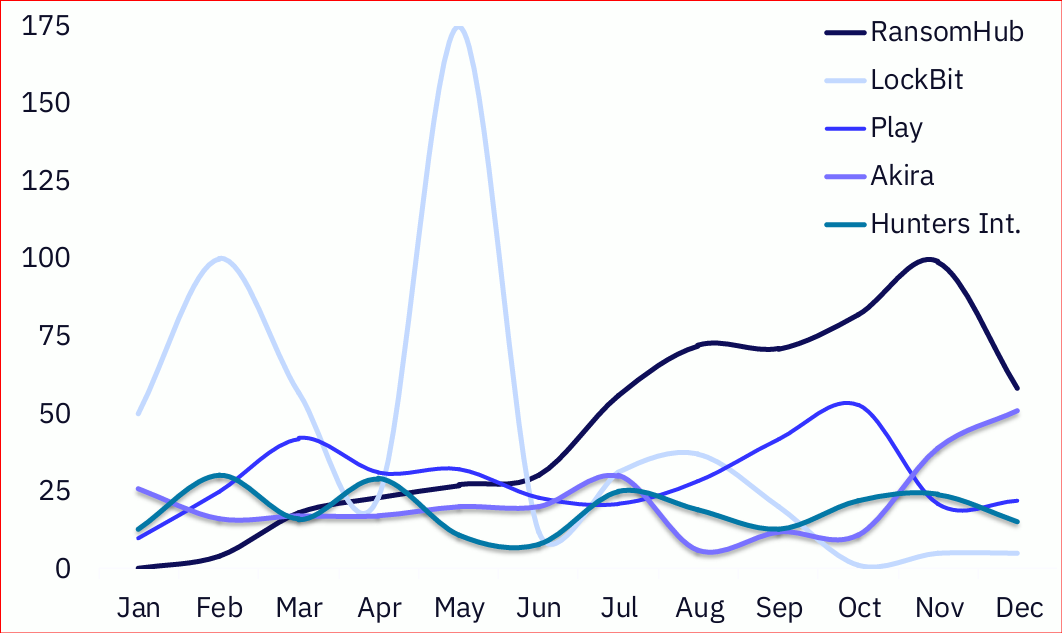}
    \caption{Most Impactful Ransomware in 2024 (the date in months on the x-axis and the number of companies attacked on the y-axis)}
    \label{fig:Most_Impactful_Ransomware_2024}
  \end{subfigure}
  \caption{GRIT 2025 Ransomware \& Cyber Threat Report (see \cite{guidepoint2025})}
  \label{fig:GRIT_ransomware}
\end{figure}

\subsection{The model}
To model the infection process within a firm, we adapt the SIR for households proposed by \cite{doenges2024sir}. Precisely, we extend their deterministic SIR into a stochastic SIR in the sense that the model parameters are no longer deterministic but instead follow a Cox–Ingersoll–Ross dynamic. As \cite{hillairet2021propagation}, we assume that contagion does not come from inside the portfolio itself, but from  outside, so that the spread of the attack is defined as a global level. We
view a cyber infectious event in a fully susceptible firm of size $k\in\{1,\hdots,K\}$  
as a splitting process generating a infected firm of size $j$, where $1 \leq j \leq k$ and a remaining
susceptible firm of size $k-j$. Let $s_j$, $i_j$ and $r_j$ denote the number of
 susceptible, infected, and removed firms of size $j$, where $1 \leq j \leq K$. The quantity $K$, introduced in \Cref{eq:pareto sub-unit}, represents the maximal size of a firm. We denote the processes
\begin{itemize}
    \item $h_j = s_j + i_j + r_j$ equals to the total number of current firms of
size $j$;
\item $h = \sum_{j=1}^{K} h_j$ is the total number of firms;
\item The total population (number of subunits) is given by $n = \sum_{j=1}^{K} j h_j$.
\end{itemize}
Rather than working with the number of susceptible/infected/removed firms which are integers and not fully adapted to the ODE, we introduce for all $1\leq j\leq K$, $S_j := s_j/h$, $I_j := i_j/h$, $R_j := r_j/h$, and $H_j := h_j/h$ the fraction of susceptible, infected, removed, and all firms of size $j$ respectively. We thus write the average size of firm as
\begin{align}\label{eq:total population}
    N = \sum_{j=1}^{K} j H_j.
\end{align}
Let us now describe how the cyber-infection spreads into the subunits. Let 
$j$ range from $1$ to $K$.
\begin{enumerate}
    \item If an initial infection is brought into a susceptible firm of size $j$, secondary infections will occur inside the firm.
    \item Each of the remaining $j-1$ firm members can get infected with equal probability $a$. This represents the "in-firm" infection rate.
    \item The probability that a primary infection in a firm of size $j$ generates in total $k$ infections inside this firm is $b_{j,k}$, where $1 \leq k \leq j$. 
    \item The secondary infections give rise to a splitting of the initial firm of size $j$ into a new, fully infected firm of size $k$ and another still susceptible firm of size $j-k$.
    \item An infected firm of size $k$ contributes to the overall force of infection between different firms and recovers with a rate $\gamma_k$. 
    \item $\beta_k$ is the ”out–firm” infection rate which refers to infection events occurring between different firms and hence outside a given single firm.  
\end{enumerate}
\begin{figure}[ht]
\centering
\scalebox{0.75}{
\begin{tikzpicture}[
    >=Stealth,
    node distance = 1.0cm,
    sus/.style={circle, draw=blue!70!black, fill=blue!20, minimum size=6mm, thick},
    inf/.style={circle, draw=red!70!black, fill=red!40, minimum size=6mm, thick},
    rec/.style={circle, draw=green!50!black, fill=green!30, minimum size=6mm, thick},
    state_label/.style={font=\small\sffamily\bfseries, align=center},
    rate_label/.style={font=\scriptsize\sffamily, fill=white, inner sep=1pt},
    split_arr/.style={->, thick, shorten >=2pt, shorten <=2pt, red!70!black, ultra thick}
]

\node[state_label] (Slabel) at (0,0) {One initial Susceptible firm of $S_5$};
\foreach \i in {1,...,5} {
    \node[sus] (S\i) at (\i*0.8-2.4,-0.8) {};
}
\node[font=\tiny] at (S1) {1};
\node[font=\tiny] at (S3) {3};
\node[font=\tiny] at (S2) {2};
\node[font=\tiny] at (S4) {4};
\node[font=\tiny] at (S5) {5};

\draw[split_arr] (0,-1.5) -- (0,-2.2) 
    node[midway,right, rate_label] {$b_{5,2}$};

\node[state_label, align=left] (Ilabel) at (-3,-2.8) {One Infected firm of $I_2$};
\node[inf] (I1) at (-3.5,-3.6) {1};
\node[inf] (I2) at (-2.5,-3.6) {3};

\draw[->, dotted, thin, red!70!black] (S1.south) to[bend right=20] (I1.north);
\draw[->, dotted, thin, red!70!black] (S3.south) to[bend left=20] (I2.north);

\node[state_label, align=left] (S3label) at (3,-2.8) {One Susceptible firm of $S_3$};
\node[sus] (S3_2) at (2.2+0.8*1,-3.6) {2};
\node[sus] (S3_4) at (2.2+0.8*2,-3.6) {4};
\node[sus] (S3_5) at (2.2+0.8*3,-3.6) {5};

\draw[->, dotted, thin, blue!70!black] (S2.south) to[bend right=10] (S3_2.north);
\draw[->, dotted, thin, blue!70!black] (S4.south) -- (S3_4.north);
\draw[->, dotted, thin, blue!70!black] (S5.south) to[bend left=10] (S3_5.north);

\node[state_label, align=left] (Rlabel) at (-3,-4.8) {One Recovered firm of $R_2$};
\node[rec] (R1) at (-3.5,-5.6) {1};
\node[rec] (R2) at (-2.5,-5.6) {3};

\draw[->, thick, green!50!black] (I1.south) -- (R1.north) 
    node[midway, left, rate_label, xshift=-2mm] {$\gamma_2$};
\draw[->, thick, green!50!black] (I2.south) -- (R2.north);

\node[state_label] (Other) at (3,-5.2) {Other Susceptible firms of $S_j$};
\draw[->, thick, red!70!black] (I2.east) .. controls (0,-4.2) and (1.5,-5.2) .. (Other.west)
    node[midway,above, rate_label] {$\beta_2$};

\end{tikzpicture}
}
\caption{Infection and Splitting: $S_5 \to I_2 + S_3$}
\label{fig:splitting}
\end{figure}
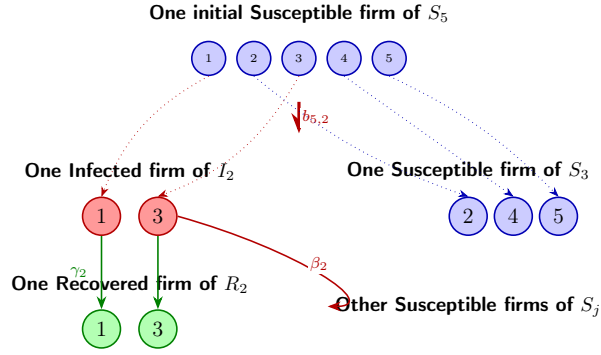 
The graph in \Cref{fig:splitting} above gives an example where $k=5$ and $j=2$. We summarize and precise these features in the following assumption.
\begin{assumption}\label{ass:SIR}
    The dynamic system governing the dynamics of the susceptible, infected
and removed firms of size $k$ reads as
\begin{empheq}[left=\empheqlbrace]{align}
    \frac{\dr S_{k,t}}{\dr t} &= Y_t \left(-k S_{k,t}+ \sum_{j=k+1}^{K} j S_{j,t}\cdot b_{j j-k, t}\right),\label{eq:susceptibles}\\
    \frac{\dr I_{k,t}}{\dr t} &= -\gamma_{k,t} I_{k,t} + Y_t \sum_{j=k}^{K} j S_{j,t}\cdot b_{jk,t} ,\label{eq:infected}\\
    \frac{\dr R_{k,t}}{\dr t} &= \gamma_{k,t} I_{k,t},\label{eq:recovered}
\end{empheq}
where
\begin{align}
    Y_t &= \frac{1}{N_t} \sum_{k=1}^{K} \beta_{k,t} \cdot k I_{k,t}\label{eq:force}
\end{align}
is the total force of infection -- the instantaneous risk that a susceptible individual becomes infected. Let us note the set of  $(2K + K^2)$ stochastic parameters
\begin{align*}
    \Psi &:= \left\{\beta_1,\hdots,\beta_K, \gamma_1,\hdots,\gamma_K, b_{1 1},\hdots,b_{1 K},b_{2 1},\hdots,b_{K-1 K},b_{K K}\right\}.
\end{align*}
We assume that a parameter  $\varphi\in\left\{b_{1 1},\hdots,b_{1 K},b_{2 1},\hdots,\allowbreak b_{K-1 K},b_{K K}\right\}$ is a $[0,1]$-valued process \\$\varphi = \frac{1}{1+e^{-\tilde{\varphi}}}$ where $\tilde\varphi$ evolves according to the Cox-Ingersoll-Ross (CIR) dynamics
\begin{align}\label{eq:CIR}
\dr\tilde\varphi_t &= \kappa_\varphi(\mu_\varphi-\tilde\varphi_t) \dr t + \Sigma_\varphi\sqrt{\tilde\varphi_t} \dr \cW_{\varphi,t}, \qquad \mu_\varphi\in \RR, \; \Sigma_\varphi>0, \; \kappa_\varphi >0, \; \varphi_0 \geq 0,
     \end{align}
     while $\varphi\in\left\{\beta_1,\hdots,\beta_K, \gamma_1,\hdots,\gamma_K\right\}$  is also assumed to be a CIR process.\\ 
     In both case, the parameters satisfy Feller conditions i.e. $2 \kappa_\varphi\mu_\varphi \geq \Sigma_\varphi^2$. Moreover, $(\cW_{\varphi})_{\varphi\in\Psi}$ is a $(2K+K^2)$-dimensional Brownian motion, independent to $(B_{ij})_{(i,j)\in\Jj}$ (defined in Assumption~\ref{ass:sub-unit dynamics}).
\end{assumption}
Recall that the expectation and variance of a CIR process \eqref{eq:CIR} is given by (see \cite{cox1985theory}): 
for  $t\geq 0$, 
\begin{align*}
    \EE[\tilde\varphi_t] = \varphi_0 e^{-\kappa_\varphi t} + \mu_\varphi (1-e^{-\kappa_\varphi t} ),
\quad 
    \VV[\tilde\varphi_t] = \varphi_0\frac{\Sigma_\varphi^2}{\kappa_\varphi}\left(e^{-\kappa_\varphi t} - e^{-2\kappa_\varphi t} \right) + \frac{\mu_\varphi \Sigma_\varphi^2}{2\kappa_\varphi}\left(1-e^{-\kappa_\varphi t}\right)^2.
\end{align*}

\begin{remark}\label{rem:total population}
    In this SIR model, the average size process defined in \eqref{eq:total population}
\begin{align*}
 N = \sum_{k=1}^{K} k (S_k + I_k + R_k),
\end{align*}
 does not depend on $t$. From now on, we will write $N_0$ instead of $N_t$ to refer to the average size.
\end{remark}

We could have used a SIRS model that allows an infected-and-removed unit to become susceptible again and be reinfected, but we can ignore this feature since we are dealing with a short-lived cyber incident. In other words, there is not enough time for a unit to be infected multiple times.\\
The details on the deterministic version of this model are detailed in \cite{doenges2024sir}. The stochasticity in the epidemiological model is motivated by the fact that  fluctuations in the environment affect models' parameters. Several works in the literature introduce randomness by simply transforming the model from a system of ODEs to a system of SDEs. This implies that the randomness comes from a single parameter (for example on transmission coefficient between compartments for \cite{ji2011multigroup} or on the the decay rate of immunity for \cite{bouzalmat2023stochastic}). 
However, since all the parameters can be subject to environmental variability, it appears more natural to consider each parameter having its own stochastic dynamics.

We define the (right continuous and complete) filtration $\FF^\cW = (\cF^\cW_t )_{t\geq 0}$  generated by the stochastic  parameters:  for $t\geq 0$,
\begin{align}\label{eq:filtration F}
    \Ff^\cW_t := \sigma\left(\left\{\cW_{\varphi,s}, \;   s\leq t\text{ and }\varphi\in\Psi\right\}\right).
\end{align}

\begin{remark}\label{rem:simplify b}
For the numerical analysis, the infections inside the firm is modeled by a Bernoulli-process with in-firm attack rate (infection probability) $a \in [0, 1]$, and the total
number of infected subunit inside the firm follows a binomial distribution, namely, 
for $1\leq j,k\leq K$, 
\begin{align}
    b_{jk} = \binom{j-1}{k-1} a^{k-1} (1 - a)^{j - k}.
\end{align}
Next, we also assume that  $a = \frac{1}{1+e^{-\tilde{a}}}\in[0,1]$ where the process $\tilde{a}$ is governed by a CIR dynamics. This assumption reduces the dimension of the set $\Psi$ from $2K+K^2$ to $2K + 1$  
and  ease the calibration process.
\end{remark}

\subsection{Existence and uniqueness of the nonnegative solution of the SIR system} 

This section is dedicated to the existence and uniqueness result of a  nonnegative solution to the system \eqref{eq:susceptibles}-\eqref{eq:infected}-\eqref{eq:recovered}. The following theorem and its proof are inspired by \cite{ji2011multigroup}.
\begin{theorem}[Existence and uniqueness]\label{theorem:existence and unicity}
    Consider the system \eqref{eq:susceptibles}-\eqref{eq:infected}-\eqref{eq:recovered}.
    \begin{enumerate}
        \item For $\omega\in\Omega$, $t_0 \geq 0$, and for any initial value $(S_{t_0}, I_{t_0}, R_{t_0}) \in\RR_+^{3\times K}$,
        there is an unique global solution $(S_t(\omega), I_t(\omega), R_t(\omega))_{t\geq t_0}$ in $\RR_+^{3\times K}$ of system \eqref{eq:susceptibles}-\eqref{eq:infected}-\eqref{eq:recovered}.
        \item $S$, $I$, and $R$ are $\FF^\cW$-adapted. 
    \end{enumerate}
\end{theorem}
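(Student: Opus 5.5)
Since the parameters $\Psi$ are continuous processes driven by $\cW$, the plan is to treat the system pathwise: fix $\omega$ and view \eqref{eq:susceptibles}--\eqref{eq:recovered} as a non-autonomous ODE in $x=(S,I,R)\in\RR^{3K}$. Its right-hand side $F(t,x,\omega)$ is continuous in $t$, because the CIR paths and their logistic transforms are continuous; the Feller condition keeps the CIR paths strictly positive, and the $b_{jk}$ lie in $[0,1]$. The right-hand side is quadratic, hence locally Lipschitz, in $x$. One point must be settled first: $N_t$ appears in the denominator of $Y_t$. By Remark~\ref{rem:total population}, $N$ is constant along solutions, so I will replace $N_t$ by $N_0$. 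If $N_0=0$ then all compartments vanish and the zero solution is trivial, so I assume $N_0>0$. A side issue is the conservation computation itself. The total $\sum_k k(S_k+I_k+R_k)$ is conserved provided $\sum_{k\le j} k\,b_{jk}+\sum_{k<j}(j-k)b_{j,j-k}$ equals $j$, i.e.\ $\sum_k b_{jk}=1$. I will invoke this as the normalization of the splitting probabilities, which holds in Remark~\ref{rem:simplify b}. The Cauchy--Lipschitz (Picard) theorem then gives a unique maximal solution on $[t_0,\tau_e(\omega))$.

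Next I will prove nonnegativity on $[t_0,\tau_e)$ using the quasi-positivity of the vector field. On the boundary face $\{S_k=0\}$ with other coordinates nonnegative, $\dot S_k=Y\sum_{j>k} jS_jb_{j,j-k}\ge0$ as long as $Y\ge0$. On $\{I_k=0\}$, $\dot I_k=Y\sum_j jS_jb_{jk}\ge0$, and on $\{R_k=0\}$, $\dot R_k=\gamma_kI_k\ge0$. To make this rigorous I will not rely on a tangency argument, which is delicate because $Y$ depends on $I$. Instead I will use the integrating-factor representation. First, $S_k(t)=e^{-k\int Y}S_k(t_0)+\int e^{-k\int Y}(\dots)$, treated by backward induction from $k=K$, where the sum is empty. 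Second, $I_k(t)=e^{-\int\gamma_k}I_k(t_0)+\int e^{-\int\gamma_k}Y\sum jS_jb_{jk}$. These are combined with a first-exit-time argument: let $\sigma$ be the first time some component becomes negative. On $[t_0,\sigma]$ we have $Y\ge0$, so the formulas show all components are $\ge0$ slightly beyond $\sigma$. This contradicts the definition of $\sigma$, so $\sigma\ge\tau_e$. I will perturb the initial data by $\varepsilon>0$ and let $\varepsilon\to0$ if strictness is needed.

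Global existence follows from the a priori bound. With all components nonnegative and $\sum_k k(S_k+I_k+R_k)=N_0$, each component is bounded by $N_0$. The solution therefore stays in a compact set, and the blow-up alternative gives $\tau_e=\infty$. For adaptedness, the solution at time $t$ is the limit of Picard iterates $x^{(n)}_t=x_{t_0}+\int_{t_0}^tF(s,x^{(n-1)}_s,\omega)\,ds$ on $[t_0,t]$. The truncation is harmless because of the bound and the local Lipschitz constant on the compact set. Each iterate is an $\cF^\cW_t$-measurable functional of the parameter paths up to $t$, i.e.\ progressively measurable, so the limit is $\FF^\cW$-adapted. Here I take the initial value to be $\cF^\cW_{t_0}$-measurable, or deterministic.

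The main obstacle is the nonnegativity step, because of the coupling through $Y$, which depends on $I$ and can only be guaranteed nonnegative once $I\ge0$ is known. The first-exit-time argument above handles this circularity. The conservation identity underlying the a priori bound also needs $\sum_k b_{jk}=1$, which I would state explicitly.
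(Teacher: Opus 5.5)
Your proposal is correct and follows the same route as the paper: pathwise Picard--Lindel\"of for the quadratic vector field, integrating-factor lower bounds for nonnegativity, global extension from conservation of $N$, and adaptedness inherited from the CIR coefficients. It is also more careful on two points the paper glosses over: your first-exit-time argument with $\varepsilon$-perturbed initial data removes the circular use of $Y\geq 0$ in the paper's positivity step, and you correctly make explicit that the conservation in Remark~\ref{rem:total population} requires $\sum_{k=1}^{j} b_{jk}=1$ (one index slip: the second sum in your conservation identity should read $\sum_{k<j} k\,b_{j,j-k}$, equivalently $\sum_{l<j}(j-l)\,b_{jl}$, not $\sum_{k<j}(j-k)\,b_{j,j-k}$).
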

\begin{proof}
 The differential system \eqref{eq:susceptibles}-\eqref{eq:infected}-\eqref{eq:recovered} is an ordinary differential equation (ODE) with stochastic coefficients sometimes abusively called stochastic differential equation (SDE).
    \begin{enumerate}
        \item Consider the "associated" deterministic ODE i.e. the one associated with a fixed sample path. For a fixed $\omega\in\Omega$, the system becomes

\begin{empheq}[left=\empheqlbrace]{align}
    \frac{\dr S_{k,t}(\omega)}{\dr t} &= -k Y_t(\omega) S_{k,t}(\omega)+ Y_t(\omega)\sum_{j=k+1}^{K} j S_{j,t}(\omega)\cdot b_{j j-k, t}(\omega) ,\label{eq:susceptibles w}\\
    \frac{\dr I_{k,t}(\omega)}{\dr t} &= -\gamma_{k,t}(\omega) I_{k,t}(\omega) + Y_t(\omega) \sum_{j=k}^{K} j S_{j,t}(\omega)\cdot b_{j k,t}(\omega) ,\label{eq:infected w}\\
    \frac{\dr R_{k,t}(\omega)}{\dr t} &= \gamma_{k,t}(\omega) I_{k,t}(\omega) ,\label{eq:recovered w}
\end{empheq}
where
\begin{align}
    Y_t(\omega) &= \frac{1}{N_0} \sum_{k=1}^{K} \beta_{k,t}(\omega) \cdot k I_{k,t}(\omega)\label{eq:force w}.
\end{align}
In the above equation, both the variables $S,I,R,Y$ and the parameters $\beta_1,\hdots,\beta_K, \gamma_1,\hdots,\gamma_K,\allowbreak b_{1 1},\hdots,b_{1 K},b_{2 1},\hdots,b_{K K}$ depend on $\omega$. However, we will not specify it further in order to lighten the notation. 
The system can be rewritten in the following form 
\begin{align*}
    (\dot{S}, \dot{I}, \dot{R}) = F (S, I, R, \omega),
\end{align*}
where $F(S, I, R, \omega)$ corresponds to the right hand side of the system \eqref{eq:susceptibles w}-\eqref{eq:infected w}-\eqref{eq:recovered w}. The function $F$ is locally Lipschitz continuous, for any given initial value $(S_{t_0}, I_{t_0}, R_{t_0}) \in\RR^{3K}_+$:  according to Picard-Lindelöf theorem, there is an unique local solution $(S_t, I_t, R_t) $ on $t\geq t_0$, for any $t_0 \geq 0$.
From \eqref{eq:infected w}, we have
\begin{align*}
    \frac{\dr I_{k,t}}{\dr t} &= -\gamma_{k,t} I_{k,t} + Y_t \sum_{j=k}^{K} j S_{j,t}\cdot b_{jk,t}
    \geq -\gamma_{k,t} I_{k,t},
\end{align*}
then $\frac{\dr I_{k,t}}{I_{k,t}}
    \geq -\gamma_{k,t} \dr t$, which implies that 
\begin{align*}
    I_{k,t} \geq I_{k,t_0} \exp{\left(-\int_{t_0}^{t}\gamma_{k,s} \dr s\right)} \geq 0.
\end{align*}
From \eqref{eq:recovered w}, by noticing that $\frac{\dr R_{k,t}}{\dr t} \geq - \alpha_{k,t} R_{k,t}$, and from \eqref{eq:susceptibles w}, by noticing that $Y_t \leq 1$ and $\frac{\dr S_{k,t}}{\dr t} 
    \geq -k S_{k,t}$,
we also conclude that $S_{k,t} \geq 0$ and $R_{k,t} \geq 0$. However, from Remark~\ref{rem:total population}, the average size of firm $N$ is constant and equals to $N_0$, 
consequently the local solution can be extended to a global one and belongs to $\RR_+^{3\times K}$.
    
    \item When a solution exists, the second item is straightforward. Since  each  coefficient $\varphi\in\Psi$ satisfies Lipschitz and linear growth conditions, there exists an unique strong solution, namely,
        \begin{equation}\label{eq:varphi solution}
        \begin{split}
            \tilde\varphi_t &= \varphi_0 e^{-\kappa_\varphi t} + \mu_\varphi\left( 1 - e^{-\kappa_\varphi t} \right) + \Sigma_\varphi \int_0^t e^{-\kappa_\varphi (t-s)} \sqrt{ \tilde\varphi_s} \, d\cW_{\varphi,s},\quad\text{for all } t\geq 0
        \end{split}
        \end{equation}
        which is $\FF^\cW$-adapted. 
        Therefore all coefficients $\varphi$  of the system \eqref{eq:susceptibles}-\eqref{eq:infected}-\eqref{eq:recovered} are $\FF^\cW$-adapted, whose unique solution $(S,I R)$ is also $\FF^\cW$-adapted.
    \end{enumerate}
\end{proof}
To analyze the spread of the cyber-attack, we can introduce additional quantities such as the prevalence which is a non-negative process representing the fraction of
recovered subunits,
\begin{align}\label{eq:prevalence}
    \mathfrak{P} := \frac{1}{N_0} \sum_{k=1}^{K} k R_k
\end{align}
and the peak, a random variable representing the maximum of total number of infected subunits,
\begin{align}\label{eq:peak}
    \Jf :=  \max_{t\geq 0}\sum_{k=1}^{K} k I_{k,t}.
\end{align}
Also
in epidemiology modeling literature, the basic reproduction number is the number of new
infections produced by a typical infective individual in a population at a  Disease Free Equilibrium (DFE), the 
 state at which a population remains in the absence of disease. It is used to determine if the disease always dies out (after some time) or if it persists (around an endemic equilibrium). When the model parameters are constant, \cite{doenges2024sir} shows that the basic reproduction number $R_0$ is 
 \begin{align}\label{eq:R_0}
      R_0 = \sum_{i=1}^{K} i \frac{\beta_i }{\gamma_i}\sum_{m=i}^{K}\frac{m S_m}{N_0}\cdot b_{m,i}.
 \end{align}
 Moreover, some works such as \cite{van2002reproduction,guo2006global} show that, if the model admits a DFE, then the latter is locally asymptotically stable if $R_0 < 1$. Similar results exist for the stochastic case but where the randomness is introduced by using a system of SDE instead of a system of ODE (see \cite{ji2011multigroup}). 
 The following theorem provides a  basic reproduction number and a local stability condition for  the dynamic contagion model developed in this paper. 
\begin{theorem}[Necessary condition for the local stability]\label{prop:basic reproduction number}
Let $t\geq 0$. If 
\begin{align}\label{eq:basic reproduction number}
    \cR_{max,t}:=\left[\max_{1\leq k\leq K}\frac{\beta_{k,t} }{\gamma_{k,t}}\right] \sum_{i=1}^{K} i\sum_{m=i}^{K}\frac{m S_{m,t} }{N_0}\cdot b_{m i,t} < 1,
\end{align}
then $\frac{\dr\log{\Jf_t}}{\dr t} < 0.$
\end{theorem}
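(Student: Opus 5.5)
Throughout, I read $\Jf_t$ as the instantaneous total number of infected subunits,
\begin{align*}
\Jf_t := \sum_{k=1}^{K} k I_{k,t},
\end{align*}
whose supremum over $t$ is the peak \eqref{eq:peak}. The plan is a direct differential inequality, carried out pathwise for fixed $\omega$, using the global nonnegative solution from Theorem~\ref{theorem:existence and unicity}. Since $\frac{\dr \log \Jf_t}{\dr t} = \frac{1}{\Jf_t}\frac{\dr \Jf_t}{\dr t}$, it suffices to prove $\frac{\dr \Jf_t}{\dr t} < 0$ and $\Jf_t > 0$.

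First I would multiply \eqref{eq:infected} by $k$ and sum over $k$:
\begin{align*}
\frac{\dr \Jf_t}{\dr t} = -\sum_{k=1}^K \gamma_{k,t}\, k I_{k,t} + Y_t \sum_{k=1}^K k \sum_{m=k}^K m S_{m,t}\, b_{mk,t}.
\end{align*}
Write $G_t := \sum_{l=1}^K \gamma_{l,t}\, l I_{l,t}$ for the total recovery flux. The key step is to bound the force of infection \eqref{eq:force} by $G_t$. Since $\beta_{l,t} = \frac{\beta_{l,t}}{\gamma_{l,t}}\gamma_{l,t}$ and every $l I_{l,t} \ge 0$, we get
\begin{align*}
Y_t = \frac{1}{N_0}\sum_{l=1}^K \beta_{l,t}\, l I_{l,t} \le \frac{1}{N_0}\left[\max_{1\le k\le K}\frac{\beta_{k,t}}{\gamma_{k,t}}\right] G_t.
\end{align*}
All $S_{m,t}$ and $b_{mk,t}$ are nonnegative, so this bound can be inserted into the gain term. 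Comparing with \eqref{eq:basic reproduction number} (relabel $i=k$), this gives
\begin{align*}
\frac{\dr \Jf_t}{\dr t} \le -G_t + \cR_{max,t}\, G_t = (\cR_{max,t}-1)\, G_t .
\end{align*}

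It remains to ensure $G_t > 0$ and $\Jf_t > 0$, which makes both the inequality strict and the logarithm well defined. This is the step needing care.

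\begin{itemize}
\item \textbf{Positivity of the rates.} By the Feller condition in Assumption~\ref{ass:SIR}, the CIR processes $\gamma_k$ stay strictly positive almost surely. This also makes the ratios $\beta_k/\gamma_k$ well defined.
\item \textbf{Positivity of the infected fractions.} The lower bound $I_{k,t} \ge I_{k,t_0}\exp\!\left(-\int_{t_0}^t \gamma_{k,s}\,\dr s\right)$ from the proof of Theorem~\ref{theorem:existence and unicity} shows that if some $I_{k,t_0} > 0$, that $I_k$ stays positive. Hence $\Jf_t > 0$ for all later $t$.
\end{itemize}

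Together these give $G_t > 0$, so $(\cR_{max,t}-1)G_t < 0$. Dividing by $\Jf_t > 0$ yields $\frac{\dr \log \Jf_t}{\dr t} < 0$.

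The only genuine obstacle is this non-degeneracy. The statement implicitly assumes the epidemic is present, i.e. $\Jf_t > 0$; if $\Jf \equiv 0$ we are at the disease-free state and the log-derivative is undefined. I would state this assumption explicitly, and note that it holds almost surely whenever the initial infected fraction is nonzero.
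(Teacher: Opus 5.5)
Your proof is correct and is essentially the paper's argument. The paper differentiates $\Jf_t=\sum_{k} kI_{k,t}$, writes $\frac{\dr \log \Jf_t}{\dr t}=-\sum_{k}\frac{k\gamma_{k,t}I_{k,t}}{\Jf_t}(1-\cR_{k,t})$ and bounds $\cR_{k,t}\le\cR_{max,t}$; this gives the same inequality you get by bounding $Y_t\le \frac{1}{N_0}\bigl[\max_k \frac{\beta_{k,t}}{\gamma_{k,t}}\bigr]G_t$ directly. The paper leaves implicit your checks that $\gamma_{k,t}>0$ (Feller condition) and $\Jf_t>0$ (exponential lower bound on $I_{k,t}$ from Theorem~\ref{theorem:existence and unicity}), but these are exactly what make the final inequality strict and the logarithm well defined.
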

\begin{proof}
Let $t\geq 0$, from \eqref{eq:peak}, we have 
\begin{align*}
    \Jf_t' &= \sum_{k=1}^{K} k I_{k,t}'=\sum_{k=1}^{K} k \left(-\gamma_{k,t} I_{k,t} + Y_t \sum_{j=k}^{K} j S_{j,t}\cdot b_{jk,t}\right).
\end{align*}
    By applying the Itô formula

\begin{align*}
        \frac{\dr\log{\Jf_t}}{\dr t} &= \frac{\dr\Jf_t}{\dr t} \frac{1}{\Jf_t}  = -\frac{1}{\Jf_t}\sum_{k=1}^{K} k \gamma_{k,t} I_{k,t} + Y_t \sum_{k=1}^{K} k\sum_{j=k}^{K} j S_{j,t}\cdot b_{jk,t}\\
    &= -\frac{1}{\Jf_t} \left(\sum_{k=1}^{K} k \gamma_{k,t} I_{k,t}-\frac{1}{N_0} \sum_{k=1}^{K} \beta_{k,t} \cdot k I_{k,t} \sum_{i=1}^{K} i\sum_{m=i}^{K} m S_{m,t}\cdot b_{m i,t}\right)\\
    &=-\frac{1}{\Jf_t}\sum_{k=1}^{K} k \gamma_{k,t} I_{k,t}\left(1 - \frac{\beta_{k,t} }{\gamma_{k,t}} \sum_{i=1}^{K} i\sum_{m=i}^{K}\frac{m S_{m,t}}{N_0}\cdot b_{m i,t}\right)\\
    &=-\sum_{k=1}^{K} \frac{k \gamma_{k,t} I_{k,t}}{\Jf_t}\left(1 - \cR_{k,t} \right),
    \end{align*}
    where we denote 
    \begin{align}\label{eq:Rk}
        \cR_{k,t} := \frac{\beta_{k,t} }{\gamma_{k,t}} \sum_{i=1}^{K} i\sum_{m=i}^{K}\frac{m S_{m,t}}{N_0}\cdot b_{m i,t}.
    \end{align}
With $\cR_{max}$ defined in \eqref{eq:basic reproduction number}, we have
    \begin{align*}
        \cR_{k,t} \leq \cR_{max,t}.
    \end{align*}
   Therefore $1- \cR_{max,t} \leq 1-\cR_{k,t}$  and  since $\frac{-k \gamma_{k,t} I_{k,t}}{\Jf_t} \leq 0$, we have
    \begin{align*}
        \frac{\dr\log{\Jf_t}}{\dr t} &\leq -\sum_{k=1}^{K} \frac{k \gamma_{k,t} I_{k,t}}{\Jf_t}\left(1- \cR_{max,t}\right),
    \end{align*}
which is negative when $\cR_{max,t} < 1$. 
\end{proof}
In addition, if $\cR_\infty := \max_{t\geq 0}\{\cR_{max,t}\}<1$, we can easily show that there exists $h<0$ such that
\begin{align}\label{eq:expon decay}
    \limsup_{T\to+\infty} \frac{\log{\Jf_t}}{T} \leq h.
\end{align}
\begin{remark} We can also make the following remarks:
    \begin{itemize}
        \item In our framework, $\cR_{max,t}$ (respectively $\cR_{\infty}$) represents the local (respectively global) basic reproduction number. 
        \item In addition to the model parameters, both $\cR_{max,t}$ and $\cR_{\infty}$ can depend on the trajectories of susceptible subunits $(S_{m,t})_{1\leq m\leq K, t\geq 0}$. When $\frac{\beta_{k,t} }{\gamma_{k,t}}$ is independent of $k$ and $t$ and $b_{m i,t}$ independent of $t$ as in \cite{doenges2024sir}, we retrieve exactly the expected $R_0$ given in \eqref{eq:R_0}.
        \item It is obvious that $\Jf \geq 0$ and \Cref{theorem:existence and unicity} implies that $\Jf' < 0$ therefore the total number of infected subunits strictly decreases locally  with time, or in other words, the epidemic is set to disappear.
        \item Better, \eqref{eq:expon decay} means that $\Jf$ almost surely converges exponentially to $0$ or the epidemic disappears exponentially.
        \item These results mean that the contagion-free equilibrium point, if it exists, is locally asymptotically stable. 
    \end{itemize}
\end{remark}

\section{From the stochastic SIR to the impact on an insurance portfolio} \label{sec:linked}
Once the cyber contagion model is described, we are interested in  the following questions concerning the distribution of the  subunits'  and firm's sizes, as well as the impact of the cyber contagion on the firm size growth, on a cyber insurance portfolio and more generally on 
the economy growth.
Recall from \eqref{eq:firm revenue} and \eqref{eq:sub-unit dynamics}, the revenue $Z_{i,t}$ of firm $i\in\OneN$ at time $t\geq 0$, satisfies 
\begin{align*}
    Z_{i,t} = \sum_{j=1}^{K_i} z_{ij,t}
     = \sum_{j=1}^{K_i}  \left(1-\pi_{ij} \bOne_{ \tau_{ij} \leq t < \tau_{ij} + \delta_{ij}} \right)\overline z_{ij,t},
\end{align*}
where $\tau_{ij}$ is the first jump time of the point process $N_{ij}$ and $\delta_{ij}$ is the random recovery time.
The arrival of  a cyber-attack can come either from outside or from a sister subunit. We thus have the following assumption.
\begin{assumption}\label{ass:lambda V_ij}
    For $(i,j)\in\Jj$ and $t\geq 0$, we assume 
    \begin{align*}
        N_{ij,t} := N_{ij,t}^{0} + \bOne_{\left\{N_{ij,\tau_i^-}^{0}=0\right\}} \bOne_{\left\{t\ge \tau_i\right\}} U_{ij},
    \end{align*}
where 
\begin{itemize}
    \item $(N_{ij}^{0})_{ij}$ is a sequence of independent Cox processes with common intensity $Y$ defined in \eqref{eq:force}.
    \item $\tau_i$ is the stopping time defined as \begin{align}\label{eq:tau_i}
        \tau_i &:= \inf\left\{t>0:\, \sum_{k=1}^{K_i} N_{ik,t}^{0}\ge 1\right\}.
    \end{align}
    \item $(U_{ij})_{ij}$ is a sequence of iid Bernoulli random variables with probability $a_{\tau_i}$.
\end{itemize}
\end{assumption}
In other words, for $(i,j)\in\Jj$ and $t\geq 0$, \Cref{ass:lambda V_ij} states that:
    \begin{itemize}
        \item When the infection of subunit $j$ of firm $i$ comes from outside between $t$ and $t+\dr t$, its probability is 
        $Y_t\dr t$
        where $Y_t$ is the force of infection defined in \eqref{eq:force}. This implies that  $\EE[N_{ij,t+\dr t}^{0} - N_{ij,t}^{0}|\cF^\cW_t] = Y_t\dr t$ recalling that $Y$ is $\FF^\cW$-adapted (with $\FF^\cW$ given in \eqref{eq:filtration F}).
        \item When there exists at least one sister subunit  of $j$ infected from outside at time $t$ i.e. $K_i > 1$ and  $\sum_{k=1}^{K_i} N_{ik,t}^{0}\ge 1$, and if the subunit $j$ is not infected from outside at the time $\tau_i$ i.e. $N_{ij,\tau_i^-}^{0}=0$, then it may have an internal contagion $U_{ij}$ with probability $a_{\tau_i}$ where $a$ is the process defined in Assumption~\ref{ass:SIR}. 
    \end{itemize}
    For each firm $1\leq i\leq H$, if $K_i > 1$, the vector of point processes $(N_{ij})_{1\leq j\leq K_i}$ is strongly correlated. We describe in the following proposition the marginal law of first jump of subunit $j$, namely $\tau_{ij}$ defined in \eqref{eq:first arrival time}.

\begin{proposition}\label{prop:survival tau_ij}
    For $(i,j)\in\Jj$, if $K_i\geq 2$, then the conditional marginal CDF function of $\tau_{ij}$ is,  \\for  $0\leq u\leq t\leq T$
    \begin{equation}\label{eq:marginal CDF tau_ij}
    \begin{split}
        F_{ij}^{|t} (u):= \PP[\tau_{ij} \leq u|\cF_t^\cW]=1-e^{-K_i \Lambda_u} - (K_i-1) e^{-\Lambda_u}\int_{0}^{u}  Y_s e^{-(K_i-1) \Lambda_s} (1-a_s)\dr s,
    \end{split}
    \end{equation}
 where $\Lambda_t := \int_{0}^{t} Y_s\dr s$.
\end{proposition}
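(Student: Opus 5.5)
Conditionally on $\cF^\cW_t$, the intensity path $(Y_s)_{s\leq t}$ is known. The Cox processes $N^0_{ik}$, $1\leq k\leq K_i$, then become independent inhomogeneous Poisson processes with intensity $Y$. Their first jump times $\tau^0_{ik}$ are therefore conditionally independent with survival function $e^{-\Lambda_u}$ for $u\leq t$. I take $a$ to be $\FF^\cW$-adapted, which is needed for $a_s$ to appear in the formula, and the Bernoulli variables $U_{ij}$ to be conditionally independent of everything else given $a_{\tau_i}$.

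With this in hand, I decompose the event $\{\tau_{ij}\leq u\}$ by its cause. By Assumption~\ref{ass:lambda V_ij}, subunit $j$ is hit either externally at $\tau^0_{ij}\leq u$, or internally at $\tau_i=\min_k\tau^0_{ik}\leq u$. The internal route requires three things: the first external hit in the firm lands on a sister $k\neq j$, subunit $j$ was not yet hit, and $U_{ij}=1$. Note that $\tau_i\leq \tau^0_{ij}$ always, so the internal hit can only occur at time $\tau_i$. I will compute the complementary probability
\[
\PP[\tau_{ij}>u\mid\cF^\cW_t]=\PP[\tau^0_{ij}>u,\ \tau_i>u\mid\cF^\cW_t]+\PP[\tau^0_{ij}>u,\ \tau_i\leq u,\ U_{ij}=0\mid\cF^\cW_t].
\]
On $\{\tau^0_{ij}>u\}$, the condition $\tau_i>u$ means that no unit of the firm was hit before $u$. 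This gives the first term, $e^{-K_i\Lambda_u}$.

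For the second term, I condition on $\tau_i=s\leq u$. The minimum $\min_{k\neq j}\tau^0_{ik}$ of the $K_i-1$ sister times has conditional density $(K_i-1)Y_s e^{-(K_i-1)\Lambda_s}$. The event $\tau^0_{ij}>u\geq s$ is independent of the sisters and has probability $e^{-\Lambda_u}$, and it forces $\tau_i$ to equal that sister minimum. Finally $\PP[U_{ij}=0\mid a_{\tau_i}]=1-a_s$. Integrating over $s\in[0,u]$ yields
\[
(K_i-1)e^{-\Lambda_u}\int_0^u Y_s e^{-(K_i-1)\Lambda_s}(1-a_s)\,\dr s .
\]
Taking the complement gives \eqref{eq:marginal CDF tau_ij}.

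The main obstacle is making the conditioning rigorous. Conditioning on $\cF^\cW_t$ for $u\leq t$ must turn the Cox processes into Poisson processes with intensity $Y$ on $[0,u]$, and $U_{ij}$ is drawn with the random parameter $a_{\tau_i}$ at a random time. I would construct $N^0_{ik}$ as time-changed standard Poisson processes, independent of $\cW$, and write $U_{ij}=\bOne_{\{V_{ij}\leq a_{\tau_i}\}}$ with $V_{ij}$ i.i.d.\ uniform and independent of the rest. The computation then reduces to Fubini over the independent $V_{ij}$ and the exponential clocks, with $Y$ and $a$ treated as fixed paths.

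Note that $u<t$ suffices because $\Lambda_u$ and $a_s$ for $s\leq u$ are $\cF^\cW_t$-measurable. As a consistency check, letting $a\equiv0$ gives $1-e^{-K_i\Lambda_u}-e^{-\Lambda_u}(1-e^{-(K_i-1)\Lambda_u})=1-e^{-\Lambda_u}$, which is the purely external case.
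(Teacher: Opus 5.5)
Your proposal is correct and follows essentially the same route as the paper. It uses the same split of $\{\tau_{ij}>u\}$ into $\{N^0_{ij,u}=0,\ \tau_i>u\}$ and $\{N^0_{ij,u}=0,\ \tau_i\le u,\ U_{ij}=0\}$. The second term is likewise obtained as the density $(K_i-1)Y_s e^{-(K_i-1)\Lambda_s}$ of the sister minimum, multiplied by $e^{-\Lambda_u}(1-a_s)$ and integrated over $[0,u]$. Your time-changed Poisson construction and the representation $U_{ij}=\bOne_{\{V_{ij}\le a_{\tau_i}\}}$ make rigorous the conditional-independence steps that the paper uses implicitly.
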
    
\begin{proof}
    Let $(i,j)\in\Jj$ and $0\leq u\leq t\leq T$. If $K_i = 1$, there is no internal contagion. We directly have 
    \begin{equation*}
    \begin{split}
        \PP[\tau_{ij} > u|\cF^\cW_t] = \EE\left[\bOne_{N_{ij,u}^{0} = 0}|\cF^\cW_t\right]= \EE\left[\EE\left[\bOne_{N_{ij,u}^{0} = 0}|\cF^\cW_u\right]|\cF^\cW_t\right] = \EE\left[e^{-\Lambda_u}|\cF^\cW_t\right] = e^{-\Lambda_u},
    \end{split}
    \end{equation*}
   since $Y$ is $\FF^\cW$-adapted.\\
    If $K_i \geq 2$,  we decompose  the set $\{\tau_{ij} > u\}$ in 2 disjoint sets $  (\{N_{ij,u}^{0} = 0 \}\cap \{\tau_{i} > u\})$ and $(\{N_{ij,u}^{0} = 0 \}\cap \{\tau_{i} \leq u\}\cap \{U_{ij} = 0 \})$. On the one hand
    \begin{align*}
        \PP[\{N_{ij,u}^{0} = 0 \}\cap \{\tau_{i} > u\} | \cF^\cW_t]
        & = \PP\left[\cap_{k=1}^{K_i} \{N_{ik,u}^{0} = 0 \}| \cF^\cW_t\right]= e^{-K_i \Lambda_u},
    \end{align*}
    where the last equality comes from the fact $\left(N_{ik}^{0}\right)_{ik}$ are iid conditionally to $\FF^\cW$. On the other hand,
    \begin{align*}
        \PP\left[\{N_{ij,u}^{0} = 0 \}\cap \{\tau_{i} \leq u\}\cup \{U_{ij} = 0 \}| \cF^\cW_t\right]&=\PP\left[N_{ij,}u^{0} = 0,\tau_{i} \leq u,U_{ij} = 0| \cF^\cW_t\right]\\
        & = \int_{0}^{u} \PP\left[N_{ij,u}^{0} = 0,\tau_{i} \in\dr s| \cF^\cW_t\right] (1-a_s).
    \end{align*}
    $\PP\left[N_{ij,u}^{0} = 0,\tau_{i} \in\dr s| \cF^\cW_t\right] = \PP\left[N_{ij,u}^{0} = 0| \cF^\cW_t\right] \PP\left[\tau_{i} \in\dr s| N_{ij,u}^{0} = 0, \cF^\cW_t\right]$, with 
  $ \PP\left[N_{ij,u}^{0} = 0| \cF^\cW_t\right] = e^{-\Lambda_u}$, and
    \begin{align*}
        \PP\left[\tau_{i} \in\dr s| N_{ij,u}^{0} = 0, \cF^\cW_t\right] &= \frac{\dr}{\dr s} \PP\left[\tau_{i} \leq s| N_{ij,u}^{0} = 0, \cF^\cW_t\right]\\
        &= -\frac{\dr}{\dr s} \PP\left[\cap_{k=1,k\neq j}^{K_i} \{N_{ik,s}^{0} = 0 \}| \cF^\cW_t\right]\\
        &= -\frac{\dr}{\dr s} e^{-(K_i-1) \Lambda_s}\\
        &= (K_i-1) Y_s e^{-(K_i-1) \Lambda_s}.
    \end{align*}
    Therefore, 
    \begin{align*}
        \PP\left[\{N_{ij,u}^{0} = 0 \}\cap \{\tau_{i} \leq u\}\cup \{U_{ij} = 0 \}| \cF^\cW_t\right] = \int_{0}^{u} (K_i-1) Y_s e^{-(K_i-1) \Lambda_s} (1-a_s) e^{-\Lambda_u} \dr s.
    \end{align*}
    Finally
    \begin{align*}
        \PP[\tau_{ij} > u | \cF^\cW_t] &= e^{-K_i \Lambda_u}+ (K_i-1) e^{-\Lambda_u}\int_{0}^{u}  Y_s e^{-(K_i-1) \Lambda_s} (1-a_s)  \dr s,
    \end{align*}
    and the conclusion follows.
\end{proof}

\begin{remark}\label{rem:lambda V_ij}
    Let $t\geq 0$. From \eqref{eq:marginal CDF tau_ij}, we can make the following remarks:
    \begin{enumerate}
    \item The right term does not depend on the subunit so that the marginal distributions of first jump of all  subunits of the same firm are identical. We will denote $F_i^{|t}$ the $\cF_t^\cW$-cumulative distribution function, instead of $F_{ij}^{|t}$.
    \item The form of $F_i^{|t}$ implies that the $\cF_t^\cW$-conditional probability that a subunit becomes infected increases with the probability of internal contagion.
     \item For a fixed firm $i$ at a fixed date $t$ and for a fixed trajectory of $\cW$,  if the function $K_i \mapsto (K_i-1)\Lambda_s - 1$ is negative for all $0\leq s\leq u$, then the function $K_i \mapsto F_i^{|t}(u)$ is non-decreasing. 
     In fact, by differentiating the function $K_i \mapsto F_{i}^{|t}(u)$, we get
    \begin{align*}
        \frac{\dr F_{i}^{|t}(u)}{\dr K_i} &= K_i e^{-K_i \Lambda_u} -  e^{-\Lambda_t}\int_{0}^{u}  Y_s e^{-(K_i-1) \Lambda_s} (1-a_s)  \dr s+ (K_i-1) e^{-\Lambda_u}\int_{0}^{u} \Lambda_s Y_s e^{-(K_i-1) \Lambda_s} (1-a_s)  \dr s,\\
        &= K_i e^{-K_i \Lambda_u} +  e^{-\Lambda_u}\int_{0}^{u} [(K_i-1)\Lambda_s-1] Y_s e^{-(K_i-1) \Lambda_s} (1-a_s)  \dr s.
     \end{align*}   
     This means that the $\cF_t^\cW$-probability that a subunit becomes infected increases with the number of subunits. This is obvious in particular when the process $(\Lambda_u)$ is bounded by $\frac{1}{K-1}$.
        \item For all firms with the same size (number of subunits), all their subunits have the same $\cF_t^\cW$-conditional probability to be infected at a given time.
        \item Let $1\leq i\leq H$ and $0\leq u\leq t$. Because
        \begin{align*}
            \PP\left[\cap_{k=1}^{K_i} \{N_{ik,u} = 0 \}| \cF^\cW_t\right] = \PP\left[\cap_{k=1}^{K_i} \{N_{ik,u}^{0} = 0 \}| \cF^\cW_t\right] =e^{-K_i\Lambda_u},
        \end{align*}
        the $\cF_t^\cW$-conditional probability that at least one subunit of firm $i$ gets infected before time $u$ is $1-e^{-K_i\Lambda_u}$ which increases with $K_i$. Therefore, the vulnerability of firms increases with their size.
    \end{enumerate}
\end{remark}
The marginal $\cF_t^\cW$-conditional CDF $(F_{i})$ are obviously $\cF_t^\cW-$measurable because $a$ and $Y$ are. In order to fully determine the firm revenue (see \eqref{eq:firm revenue}) and firm claims (see \eqref{eq:explicit c_i}) under cyber-attack, we need to characterize the random recovery time $(\delta_{ij})$ introduced in Assumption~\ref{ass:sub-unit dynamics}. We deduce $(\delta_{ij})$  from  the recovery rate process $\gamma_{i}$  in the SIR model \eqref{eq:recovered} as follows.
\begin{assumption}\label{ass:delta def}
    For $(i,j)\in\Jj$, we assume that 
    \begin{align}\label{eq:delta def}
        \delta_{ij} = \frac{1}{\gamma_{i,\tau_{ij}}}.
    \end{align}
\end{assumption}
\noindent This is motivated by the fact that when the recovery times of  infected individuals are modeled as independent exponentially distributed
random variables, the recovery rate is equal to the inverse of the expected recovery time (see \cite{doenges2024sir}).\\

Moreover, we define the filtration~$\mathbb{G}=(\cG_t)_{t\geq 0}$ by  

\begin{align}\label{eq:filtration G}
    \cG_{t} = \sigma\left(\cF^B_t \cup \cF^\cW_t \cup \sigma\left\{\pi_{ij,s}, N_{ij,s}: s\in[0, t]\text{ and } (i,j)\in\Jj\right\}\right),
\end{align}
where $\cF^B_t$ and $\cF^\cW_t$ are defined in \eqref{eq:filtration Fb} and in \eqref{eq:filtration F} respectively. 

 \begin{remark}It is straightforward that
the subunit revenue $(z_{ij})_{(i,j)\in\Jj}$, the total revenue of the portfolio $\mathbf{O}$, and the firm's claims $(c_{i,t})_{1\leq i\leq H}$, defined respectively in \eqref{eq:sub-unit dynamics}, \eqref{eq:whole economy}, and \eqref{eq:explicit c_i} are $\GG$-adapted.
 \end{remark}

With the knowledge of $(\tau_{ij})_{ij}$ and $(\delta_{ij})_{ij}$, it is now possible to compute the costs of cyber-events at the firm level $(\cc_i)_{i}$ from \eqref{eq:explicit c_i} and at the portfolio level $\aep$ from \eqref{eq:portfolio measure}. But let us note that other insurance portfolio measures can be studied, and in particular, the impact of reaction and remediation measures as in \cite{hillairet2021propagation}.

From \eqref{eq:explicit c_i}, we can identify  4 sources of randomness in the cost functions $\cc$, $\CC$, or $\aep$. The first one is the Brownian motion  $B$ governing  the economic risk and represented by $\FF^B$. The second one is the random coefficients of SIR represented by $\FF^\cW$, the third one is the random arrivals $(\tau_{ij})$ of cyber-attacks, and the last one is their random severity $\pi$. 
In fact, all these quantities have a "systemic part" which is $\FF^\cW$-adapted  coming from the cyber-attack and an "idiosyncratic/economic part" which are $(N_{ij})$, $(\pi_{ij})$, $(B_{ij})$. 
   We can assume that the insurer mutualizes well  the idiosyncratic/economic risks, as $(B_{ij})_{(i,j)\in\Jj}$ are independent between firms. Therefore we may focus on  the systemic cyber risk component and consider $\FF^\cW$ as the only source of randomness by taking the expectation conditional to the systemic risk. This leads to the notion of  conditional $\aep_T^\star$, without the idiosyncratic risk: 
\begin{align}\label{eq:proxy EAP}
    \aep_T^\star(x) &:= 1 - \EE_P\left[F_{\CC_T^\star}^{(P)}(x) \right],
\end{align}
   where 
   $\CC_T^\star$ is the total expected loss conditionally to the  random SIR  factors up to $t$, $\cF^\cW_t$
\begin{align*}
    \CC_T^\star := \sum_{i=1}^{H}\left( \int_{0}^{T}\EE\left[c_{i,t}|\cF^\cW_t\right]\dr t\right),
\end{align*}
and  $c_{i,t}$ is defined in \eqref{eq:explicit c_i}. The following proposition gives an explicit expression of $\EE\left[c_{i,t}|\cF^\cW_t\right]$.

\begin{proposition}
    For $1\leq i\leq H$ and $t,u\geq 0$,
    
    \begin{align}
        \EE\left[c_{i,t}|\cF^\cW_t\right] = \left(\pi_\star\sum_{j=1}^{K_i} z_{ij,0} e^{\mu_{ij} t}\right) \int_{0}^{t} \bOne_{\left\{\gamma_{i,u}< \frac{1}{t-u} \right\}} \dr F_i^{|t}(u),
    \end{align}
    
    where $F_{i}$ is defined in \eqref{eq:marginal CDF tau_ij}.
\end{proposition}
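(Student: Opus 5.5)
Starting from the explicit form \eqref{eq:explicit c_i}, we write
\[
c_{i,t}=\sum_{j=1}^{K_i}\pi_{ij}\,\bOne_{\{\tau_{ij}\leq t<\tau_{ij}+\delta_{ij}\}}\,\overline z_{ij,t}
\]
and use linearity of conditional expectation to treat each subunit $(i,j)$ separately. The three random ingredients are $\pi_{ij}$, the indicator (which depends on $N_{ij}$ and, through $\delta_{ij}=1/\gamma_{i,\tau_{ij}}$, on $\cW$), and $\overline z_{ij,t}$ (a functional of $B_{ij}$). By Assumption~\ref{ass:sub-unit dynamics}, $\pi$, $N$ and $B$ are independent. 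By Assumption~\ref{ass:SIR}, $\cW$ is independent of $B$, and we take $\pi_{ij}$ independent of $\cF^\cW_t$. Conditionally on $\cF^\cW_t$ the expectation therefore factorises as
\[
\EE[\pi_{ij}]\;\EE[\overline z_{ij,t}]\;\PP\left[\tau_{ij}\leq t<\tau_{ij}+\delta_{ij}\,\middle|\,\cF^\cW_t\right].
\]
The first factor equals $\pi_\star$. The second equals $z_{ij,0}e^{\mu_{ij}t}$, the usual lognormal mean obtained from $\EE[e^{\sigma B_t}]=e^{\sigma^2 t/2}$.

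For the third factor, $\delta_{ij}=1/\gamma_{i,\tau_{ij}}$ gives
\[
\{\tau_{ij}\le t<\tau_{ij}+\delta_{ij}\}=\{\tau_{ij}\le t\}\cap\{\gamma_{i,\tau_{ij}}<1/(t-\tau_{ij})\}.
\]
This is a measurable function $g(\tau_{ij})$ with $g(u)=\bOne_{\{u\le t\}}\bOne_{\{\gamma_{i,u}<1/(t-u)\}}$. Since $\gamma_i$ is $\FF^\cW$-adapted, $g$ is $\cF^\cW_t$-measurable on $[0,t]$ as a random function of $u$. We then compute $\EE[g(\tau_{ij})\mid\cF^\cW_t]$ by integrating $g$ against the conditional law of $\tau_{ij}$ given $\cF^\cW_t$, which Proposition~\ref{prop:survival tau_ij} gives explicitly as $F_{ij}^{|t}$ on $[0,t]$. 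By Remark~\ref{rem:lambda V_ij} this law equals $F_i^{|t}$ for every $j$. The result is $\int_0^t \bOne_{\{\gamma_{i,u}<1/(t-u)\}}\,\dr F_i^{|t}(u)$, independent of $j$, so it factors out of the sum. This yields the claimed formula. The case $K_i=1$ is handled in the same way using $F=1-e^{-\Lambda}$.

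The main obstacle is making the last step rigorous. We need a regular conditional distribution of $\tau_{ij}$ given $\cF^\cW_t$ in which the integrand is itself $\cF^\cW_t$-measurable (a "freezing" lemma). This requires that $\tau_{ij}$ be conditionally independent of $\cF^\cW_t$ beyond what is encoded in $F_i^{|t}$. We would justify it via the Cox construction: $N^0_{ij}$ is built from $\Lambda$ and independent exponential or Poisson randomness, and the $U_{ij}$ are drawn with parameter $a_{\tau_i}$ independently of $B$ and $\pi$. We would therefore first prove the identity for indicators $g(u)=\bOne_{\{u\le s\}}\bOne_A$ with $A\in\cF^\cW_t$, where it reduces to Proposition~\ref{prop:survival tau_ij}. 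We would then extend by a monotone class argument to jointly measurable $g(u,\omega)$, which covers $\bOne_{\{\gamma_{i,u}(\omega)<1/(t-u)\}}$ because $\gamma_i$ has continuous paths. A minor related point is that $\EE[\overline z_{ij,t}\mid\cF^\cW_t]=\EE[\overline z_{ij,t}]$ requires independence of $B$ from $\cW$ and $N$ jointly, which follows from the stated independence assumptions.
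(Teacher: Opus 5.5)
Your proposal is correct and follows the same route as the paper's proof. Both factor out $\pi_\star$ and $z_{ij,0}e^{\mu_{ij}t}$ by independence, write the indicator as $\bOne_{\{\gamma_{i,u}<1/(t-u)\}}$ evaluated at $u=\tau_{ij}\leq t$, and integrate it against the conditional law $F_i^{|t}$ from Proposition~\ref{prop:survival tau_ij}. Your monotone-class justification of that last step is more careful than the paper's, and it needs no extra conditional-independence property: integrating a $\mathcal{B}([0,t])\otimes\cF^\cW_t$-measurable integrand against the conditional law of $\tau_{ij}$ given $\cF^\cW_t$ is always valid, and that is exactly what your argument starting from $\bOne_{\{u\le s\}}\bOne_A$ proves.
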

\begin{proof}
    Let $1\leq i\leq H$ and $t,u\geq 0$. From the explicit expression of the claims at $t$ given in \eqref{eq:explicit c_i}, we have:
    \begin{align*}
        \EE\left[c_{i,t}|\cF^\cW_t\right]= \EE\left[\sum_{j=1}^{K_i} \bar z_{ij,t}\pi_{ij} \bOne_{ \tau_{ij} \leq t < \tau_{ij} + \delta_{ij}}|\cF^\cW_t\right].
    \end{align*}
    Given that $(\pi_{ij})$ are independent of $\cF^\cW_t$ with $\EE[\pi_{ij}] = \pi_\star$ and that, from Assumptions \ref{ass:sub-unit dynamics} and \ref{ass:SIR}, $\EE[z_{ij,t}|\cF^\cW_t] = \EE[z_{ij,t}] = z_{ij,0} e^{\mu_{ij} t}$, we have
     \begin{align*}
        \EE\left[c_{i,t}|\cF^\cW_t\right]= \pi_\star\sum_{j=1}^{K_i} z_{ij,0} e^{\mu_{ij} t} \EE\left[\bOne_{ \tau_{ij} \leq t < \tau_{ij} + \frac{1}{\gamma_{i,\tau_{ij}}}}\middle|\cF^\cW_t\right],
    \end{align*}
    recalling from \eqref{eq:delta def} that $\delta_{ij} = \frac{1}{\gamma_{i,\tau_{ij}}}$.
    For each $1\leq j\leq K_i$, 
    \begin{align*}
        \EE\left[\bOne_{\left\{\tau_{ij} \leq t < \tau_{ij} + \frac{1}{\gamma_{i,\tau_{ij}}}\right\}}\middle|\cF^\cW_t\right] &=\EE\left[\bOne_{\left\{t-\frac{1}{\gamma_{i,\tau_{ij}}} < \tau_{ij} \leq t\right\}}\middle|\cF^\cW_t\right] \\
        &= \int_{0}^{+\infty} \bOne_{\left\{t-\frac{1}{\gamma_{i,u}} < u \leq t \right\}} \dr \PP(\tau_{ij} \leq u|\cF^\cW_t)\\
        &= \int_{0}^{t} \bOne_{\left\{\gamma_{i,u}< \frac{1}{t-u} \right\}} \dr \PP(\tau_{ij} \leq u|\cF^\cW_t).
    \end{align*}
    We conclude using  $\PP(\tau_{ij} \leq u|\cF^\cW_t) = F_{i}^{|t}(u)$ for $0\leq u\leq t$ as given in \Cref{prop:survival tau_ij}.
\end{proof}
\noindent After further straightforward calculations, we obtain

\begin{align}\label{eq:approx C_T}
    \CC_T^\star= \sum_{(i,j)\in\Jj} \frac{\pi_\star z_{ij,0}}{\mu_{ij}} \int_{0}^{T}   \left( e^{\mu_{ij} \min \left(T, u + \frac{1}{\gamma_{i,u}} \right)} - e^{\mu_{ij} u} \right) dF_i^{|u}(u).
\end{align}
Computing $\CC_T^\star$ is faster than $\CC_T$. Indeed, instead of simulating $|\mathcal J|$ random arrival times and integrating $|\mathcal J|$ geometric Brownian motion paths, we only need to evaluate $|\mathcal J|$ deterministic integrals. This reduction becomes particularly advantageous when the portfolio is large (i.e., when $|\mathcal J|$ or $H$ is big). 
In our simulations, the size of the portfolio being reasonable,  the distribution of $C_T$ remains computationally tractable to compute the exact AEP.     In \Cref{fig:AEP_curve}, we compare these exact AEP curves to those obtained for $C_T^\star$.

\section{Numerical analysis and discussion}\label{sec:simulations} 

This section describes the data for parameters calibration and estimation, outlines the calibration methods and comments on the estimated parameters, and finally presents the simulation methodology and discusses the results.

\subsection{The dataset}
\subsubsection{Firm's data}
The dataset \cite{pappers2025} provides information of $H=2,929$ firms located in the Ile-de-France region, in particular the sector of activity (using the NAF classification as in \cite{inseeNAF2025}) and the annual revenue (in \euro million) from 2010 to 2022 (that is for  $T^f = 13$ years). \Cref{tab:firm_characteristics} summarizes firms' characteristics -- including the count, mean, and standard deviation of revenue-- by sector.
\begin{table*}[ht!]
\centering\small
\begin{tabular}{|p{6cm}|r|r|r|}
\hline
\textbf{Sector} & \textbf{Number of firms} & \textbf{Average revenue} & \textbf{Std of revenue} \\ \hline 
Manufacturing & 339 & 7.13 & 12.18 \\
Technology & 87 & 5.85 & 10.09 \\
Retail/Wholesale & 788 & 9.91 & 18.11 \\
Healthcare / Education / Government & 65 & 2.03 & 3.54 \\
Consulting / Legal & 383 & 5.01 & 9.59 \\
Construction & 559 & 5.53 & 9.94 \\
Banking/Finance & 64 & 2.52 & 6.12 \\
Miscellaneous** & 644 & 3.73 & --- \\ \hline
\textbf{Total} & \textbf{2929} & \textbf{2.87} & \textbf{13.04} \\ \hline
\end{tabular}
\caption{Firms characteristics. (**) Includes Agriculture, Mining, Transport, and Real Estate.}
\label{tab:firm_characteristics}
\end{table*}

\subsubsection{Cyber contagion data}
Each quarter, the \textit{GuidePoint Security’s Research and 
Intelligence Team} publishes a  Ransomware and Cyber Threat Insights (see~\cite{guidepoint2025} for the year 2024). This report summarizes cyber crime incidents, mainly ransomware attacks, that occurred during the year. Data collected are obtained from publicly available resources, including the sites and blogs of threat groups themselves. \Cref{fig:Most_Impactful_Ransomware_2024} and \Cref{fig:Most_Impacted_Industries_2024} summarize, respectively, the most impactful ransomware groups and the most impacted industries in 2024. 
For this numerical analysis, we calibrate the cyber-episode on the  \textbf{LockBit} ransomware  which, according to \cite{guidepoint2025}, entered 2024 as the long-standing dominant ransomware group with the highest tempo by victim, but the group faced substantial disruption in the wake of February’s international operation Cronos (see \cite{europol2024lockbit}). The calibration and validation are done on the time interval from May 1, 2024 to July 31, 2024, that is $T =100$ days.

The database gives the total infected per industry in 2024 for all the ransomware groups, the total infected per group day for each group. However, the data do not specify the number of infected firms by size. We address this by proposing a proxy in \Cref{sec:calib_contagion}, based on the estimated firm size distribution.
\begin{figure}[ht!]
\centering
\begin{minipage}{0.48\textwidth}
\centering\small
\begin{tabular}{|l|c|c|}
\hline
\textbf{Industry} & \textbf{Victims} & \textbf{Rate} \\
\hline
Manufacturing & 610 & 20.68\% \\
Technology & 390 & 13.22\% \\
Retail/Wholesale & 335 & 11.36\% \\
Healthcare & 325 & 11.02\% \\
Consulting & 275 & 9.32\% \\
Construction & 235 & 7.97\% \\
Education & 205 & 6.95\% \\
Legal & 200 & 6.78\% \\
Government & 190 & 6.44\% \\
Banking/Finance & 185 & 6.27\% \\
\hline
\end{tabular}
\captionof{table}{Most impacted industries in 2024}
\label{tab:Most_impacted_industries_2024}
\end{minipage}
\hfill
\begin{minipage}{0.48\textwidth}
\centering
\includegraphics[width=\columnwidth]{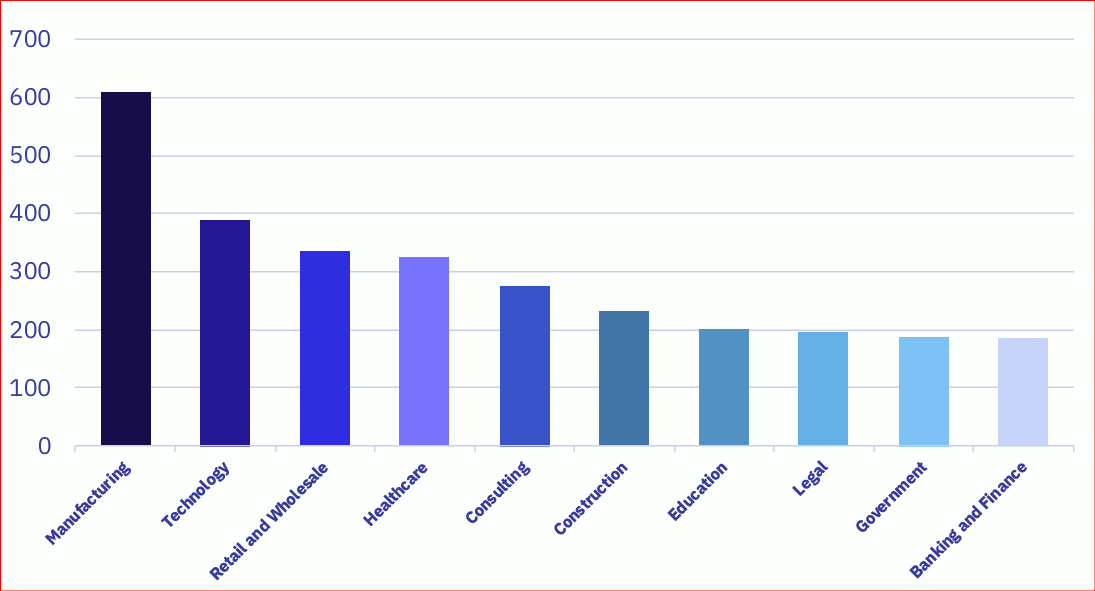}
\caption{Industries impacted (x-axis: sector, y-axis: victims)}
\label{fig:Most_Impacted_Industries_2024}
\end{minipage}
\end{figure}

\subsection{Calibration procedure}
\subsubsection{Calibration of the firms parameters}\label{sec:calib_firms_params}
Recall that there are $H = 2,929$ firms in the portfolio. For each firm of the portfolio, we have annual revenue data $Z_{i,t}$ between year $1$ (corresponding to 2010) and year $T^f =13$ (corresponding to 2023). However, since we do not have the number of subsidiaries and their revenue, we construct the following proxy for both variables.\\
The empirical average of revenue in the database is written as
\begin{align}
    \overline{Z}_t = \frac{1}{H} \sum_{i=1}^{H}Z_{i,t}.
\end{align}
We assume that company $i\in\OneN$ has a single subsidiary if its initial revenue $Z_{i,1}$ is below the initial sample average $\bar{Z}_1$; otherwise, it is assumed to have multiple subsidiaries. Formally,
\begin{align}
    K_i = \left\lceil \frac{Z_{i,1}}{\overline{Z}_1} \right\rceil \in\NN^*,
\end{align}
where $\lceil x\rceil$ is the smallest integer greater than or equal to $x$.
Then, assuming that all subsidiaries of a same firm have the same size, we have, for all $j\in\{1,\hdots,K_i\}$, 
\begin{align}
    z_{ij,t} = \frac{Z_{i,t}}{K_i}.
\end{align}
Then the estimation of the Black Scholes parameters  \eqref{eq:sub-unit dynamics wo cyber}, $(\sigma_{ij})_{(i,j)\in\Jj}$ and $(\mu_{ij})_{(i,j)\in\Jj}$, is given by
\begin{align*}
    \sigma_{ij}^{\text{year}} = \sqrt{\frac{1}{T^f-1}\sum_{t=1}^{T^f-1} (r_{ij,t} - r_{ij,*})^2}, \;
\text{ and } \;
    \mu_{ij}^{\text{year}} = r_{ij,*} +\frac{1}{2}(\sigma_{ij}^{\text{year}})^2,
\end{align*}
where for each subsidiary, 
$r_{ij,t} := \log{\frac{z_{ij,t+1}}{z_{ij,t}}}$ is the revenue growth between two consecutive years and $r_{ij,*} := \frac{1}{T^f-1} \displaystyle\sum_{t=1}^{T^f-1}r_{ij,t}$ is the average revenue growth over the period $T^f$. 
Finally, since cyber-events are tracked on a daily frequency, we convert the annual parameters to daily parameters by assuming 365 days:
\begin{align*}
    \sigma_{ij} = \frac{\sigma_{ij}^{\text{year}}}{\sqrt{365}} \quad
\text{and}\quad
    \mu_{ij}^{\text{year}} = \frac{\mu_{ij}^{\text{year}}}{365}.
\end{align*}
We use the firms' revenue dataset described in \Cref{tab:firm_characteristics} to determine the firms/subunit characteristics.\\

\Cref{tab:nb_firm_with_subunits} below indicates the number of firms per size (i.e. the number of subunits) and per sector. The  maximum firm size is $K=12$. 
\begin{table*}[ht!]
\centering\small
\begin{tabular}{|p{5.5cm}|r|r|r|r|r|r|r|r|r|r|r|r|} 
\hline
& \multicolumn{12}{c|}{\textbf{Firm size}} \\ \cline{2-13}
\textbf{Sector} & \textbf{1} & \textbf{2} & \textbf{3} & \textbf{4} & \textbf{5} & \textbf{6} & \textbf{7} & \textbf{8} & \textbf{9} & \textbf{10} & \textbf{11} & \textbf{12} \\ \hline 
Manufacturing & 249 & 51 & 20 & 7 & 2 & 2 & 1 & 1 & 3 & 1 & 2 & 0 \\
Technology & 67 & 5 & 5 & 4 & 2 & 0 & 3 & 0 & 1 & 0 & 0 & 0 \\
Retail/Wholesale & 619 & 80 & 42 & 18 & 11 & 7 & 4 & 1 & 1 & 1 & 2 & 2 \\
Healthcare/Education/Government & 47 & 12 & 1 & 1 & 0 & 1 & 0 & 2 & 1 & 0 & 0 & 0 \\
Consulting / Legal & 317 & 31 & 22 & 6 & 1 & 2 & 1 & 0 & 1 & 2 & 0 & 0 \\
Construction & 433 & 62 & 30 & 11 & 7 & 4 & 3 & 5 & 2 & 1 & 0 & 1 \\
Banking/Finance & 57 & 2 & 0 & 2 & 1 & 1 & 0 & 0 & 1 & 0 & 0 & 0 \\
Miscellaneous** & 474 & 69 & 18 & 12 & 14 & 3 & 1 & 4 & 1 & 2 & 0 & 1 \\ \hline
\textbf{Total} & \textbf{2263} & \textbf{312} & \textbf{138} & \textbf{61} & \textbf{38} & \textbf{20} & \textbf{13} & \textbf{13} & \textbf{11} & \textbf{7} & \textbf{4} & \textbf{4} \\ \hline
\end{tabular}
\caption{Number of firms per size and per sector}
\label{tab:nb_firm_with_subunits}
\end{table*}
\Cref{fig:histogram_sub_unit} and \Cref{fig:zipf_number_sub_unit} show that the firm size is distributed according to Zipf law (see \eqref{eq:pareto sub-unit}); ; the fitted parameters are $\af = 1.759$ and $q = 0.784$. A small number of very large firms account for a large share of total mass, therefore aggregate quantities -- such as AEP -- are therefore sensitive to extreme firms.
\begin{figure}[ht!]
  \centering
  \begin{subfigure}[b]{0.495\textwidth}
    \centering
    \includegraphics[width=\textwidth]{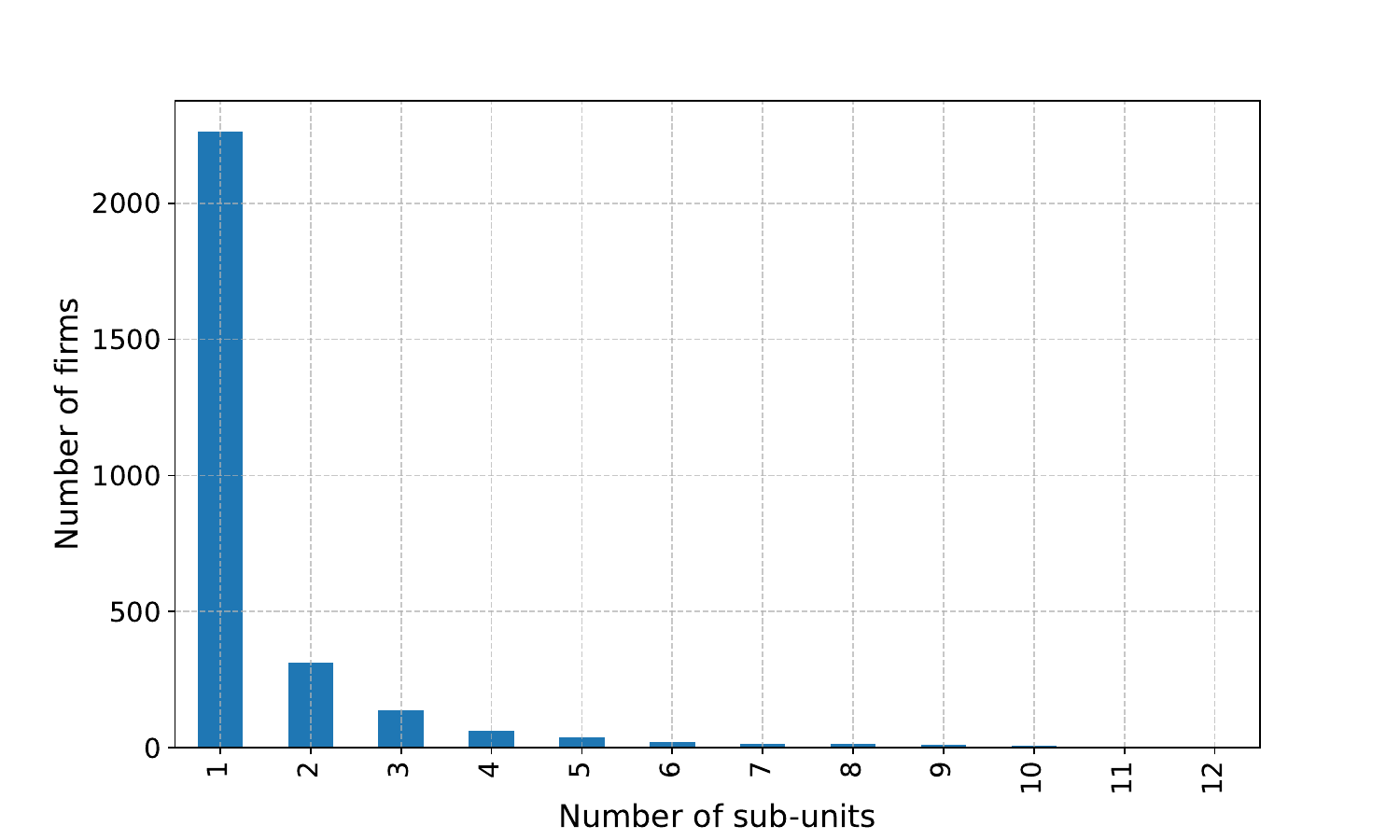}
    \caption{Histogram}
    \label{fig:histogram_sub_unit}
  \end{subfigure}
  \hfill
  \begin{subfigure}[b]{0.495\textwidth}
    \centering
    \includegraphics[width=\textwidth]{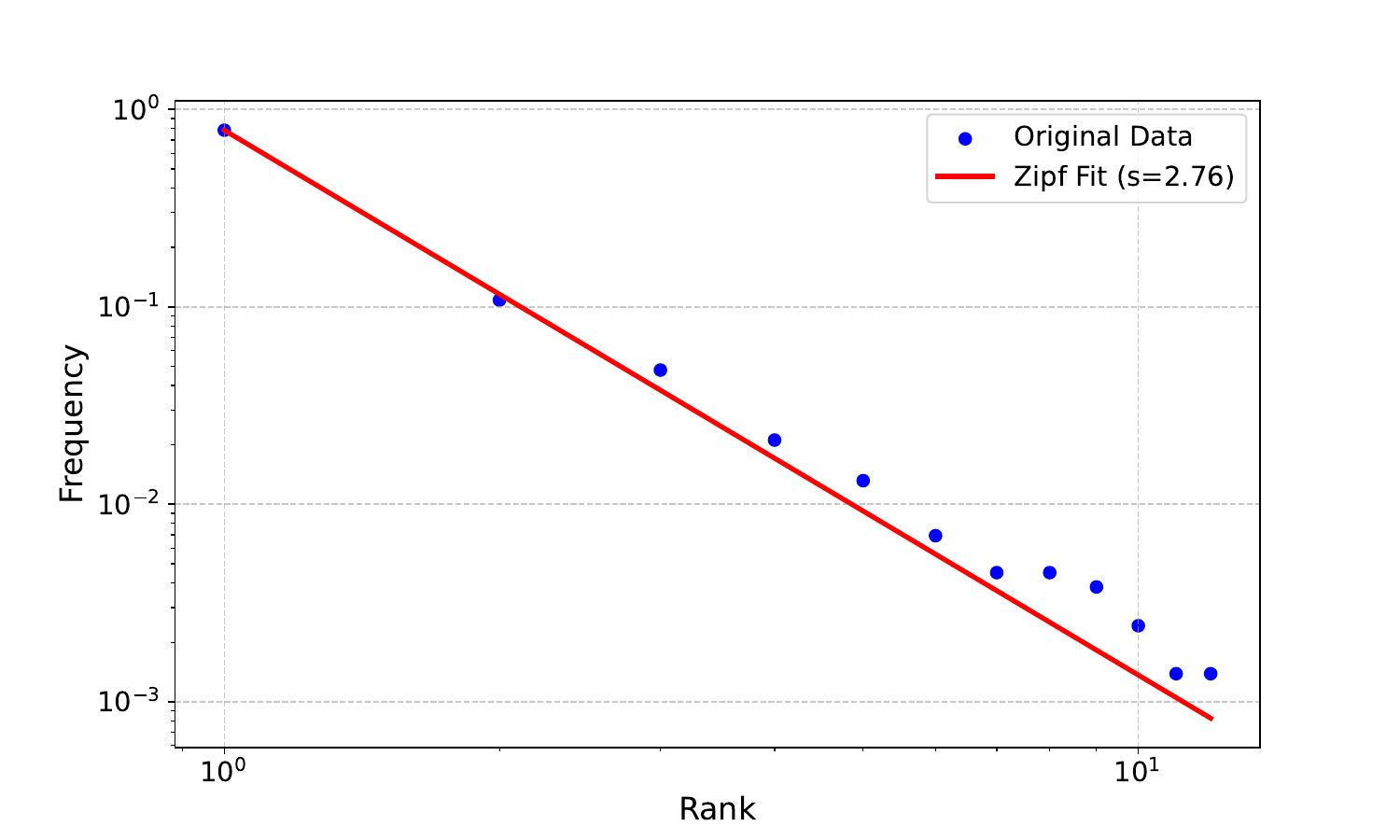}
    \caption{Fit of the Zipf law}
    \label{fig:zipf_number_sub_unit}
  \end{subfigure}
  \caption{Firm size}
\end{figure}

In \Cref{tab:firm_characteristics_size}, we summarize the average drift, the volatility, and the initial revenue, for each size.
\begin{table*}[ht!]
    \centering\footnotesize
    \begin{tabular}{|l|r|r|r|r|r|r|r|r|r|r|r|r|}
\hline
$K_i$ & 1 & 2 & 3 & 4 & 5 & 6 & 7 & 8 & 9 & 10 & 11 & 12 \\
\hline
$z_{ij}^0$ (in \euro million)   & 4.70 & 5.99 & 6.99 & 6.69 & 7.30 & 8.98 & 7.49 & 5.40 & 6.70 & 6.61 & 10.03 & 9.82 \\
\hline
$\sigma_{ij} \times 10^3$ & 12.69 & 12.30 & 10.91 & 9.70 & 16.16 & 8.16 & 19.03 & 26.31 & 8.64 & 12.71 & 4.92 & 4.79 \\
\hline
$\mu_{ij}\times 10^3$    & 0.19 & 0.37 & 0.28 & 0.26 & 0.56 & 0.24 & 1.12 & 1.45 & 0.36 & 0.28 & 0.33 & 0.27 \\
\hline
\end{tabular}
    \caption{subunits' characteristics}
    \label{tab:firm_characteristics_size}
\end{table*}

\subsubsection{Calibration of the SIR parameters}\label{sec:calib_contagion}

To build a proxy for the number of infected firms per size, we first divide the number of attacks per sector by the total number of attacks to get the attack rate per sector in 2024 (see Table~\ref{tab:Most_impacted_industries_2024}). By multiplying the latter by the total number of LockBit infections, we obtain the average number of attacks per sector. Then, based on the distribution of subsidiaries across sectors in \Cref{tab:nb_firm_with_subunits}, we derive an estimate of the number of infected firms of varying sizes (see \Cref{fig:Historical_infected}). 
\begin{figure}[ht!]
    \centering
    \includegraphics[width=0.79\linewidth]{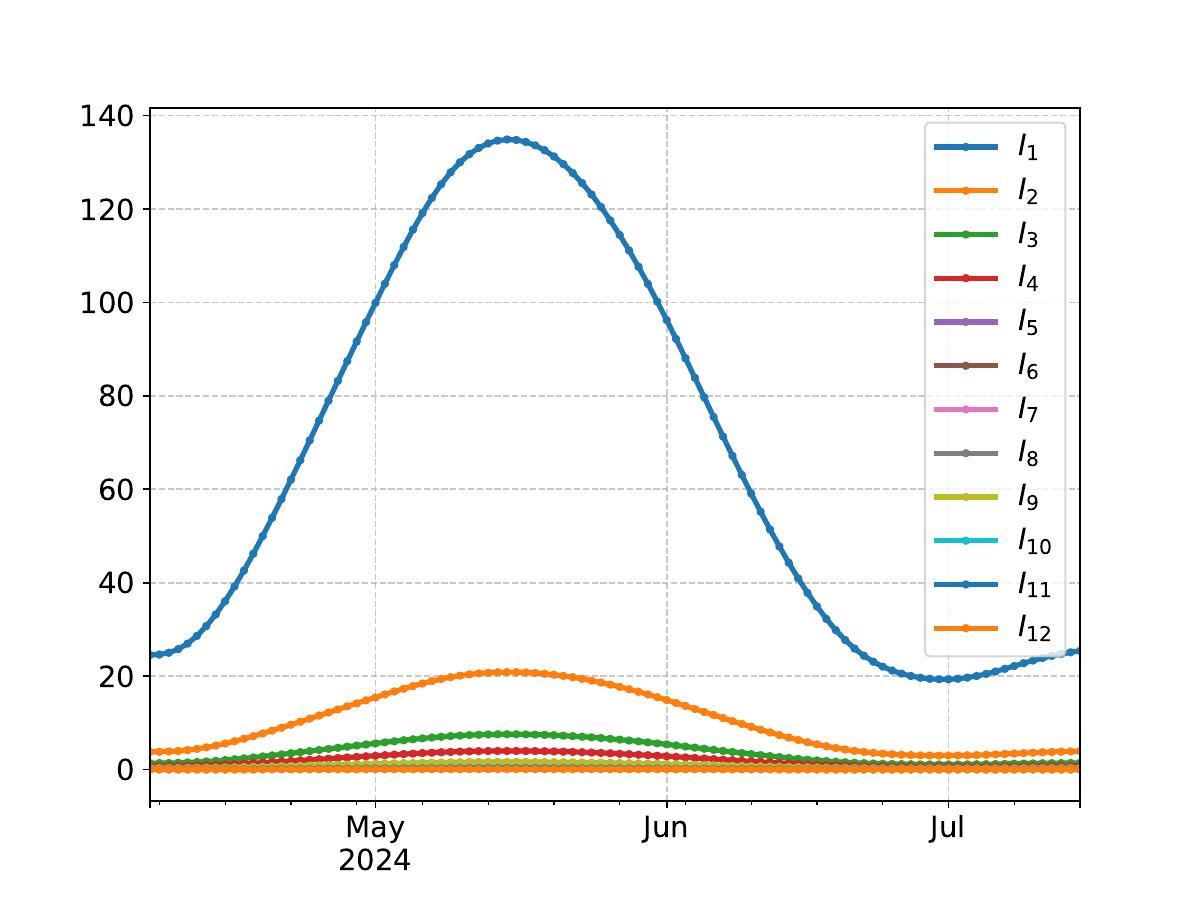}
    \caption{Number of daily new firms infected per size each day in May  - June - July 2024 ($I_k$)}
    \label{fig:Historical_infected}
\end{figure}

To calibrate the SIR parameters, we follow the Optimization based on Forward Model Simulation approach proposed by \cite{Balan2023_SIR}, which is detailed below. It requires the numbers of susceptibles and recovered individuals in the beginning of the episode, and --most importantly-- the daily counts of new infections throughout the epidemic.

From Section~\ref{sec:calib_firms_params}, the maximum firm size is $K := \max_{i\in\OneN} K_i = 12$. The cyber episode runs from 1 May to 31 July 2024; we denote the daily observation dates by $\{t_0, t_1,\hdots,t_T\}$. For calibration, we have from the dataset 
the daily detected infections, $(i_{k,t_u})_{1\leq k\leq K, 0\leq u\leq T}$. We reasonably assume that the initial removed are zero i.e. $(r_{k,t_0} = 0)_{1\leq k\leq K}$. As we do not have the initial number of susceptible population previously noted $(s_{k,t_0})_{1\leq k\leq K}$, we consider them as model parameters. \\
Assuming the firm size distribution follows a Zipf law with parameters $\af = 1.759$ and $q = 0.784$ (see \Cref{eq:pareto sub-unit}) for firm size in the insurance portfolio and the entire database, it is sufficient to consider the total number of firms $h_\star$ as a model parameter since for all $k=1\hdots K$, 
\begin{align}\label{eq:number firms per size}
    h_{k,t_0}= \left\lfloor h_\star\,\frac{\frac{1}{k^{\af}}}{\sum_{j=1}^K \frac{1}{j^{\af}}}\right\rfloor,
\end{align}
with $h_{k,t_0} = s_{k,t_0} + i_{k,t_0} + r_{k,t_0}$. Here,  $\lfloor x\rfloor$ denotes the floor function, representing the greatest integer less than or equal to $x$. \\
For the SIR model,  the main challenge consists in calibrating the coefficients $\varphi_0$,$\mu_\varphi$, $\kappa_\varphi$, and $\Sigma_\varphi$ of each stochastic SIR parameter $\varphi\in\Psi$, recalling from \eqref{eq:varphi solution} that, 
\begin{align*}
    \tilde\varphi_t = \varphi_0 e^{-\kappa_\varphi t} + \mu_\varphi\left( 1 - e^{-\kappa_\varphi t} \right) + \Sigma_\varphi \int_0^t e^{-\kappa_\varphi (t-s)} \sqrt{\tilde\varphi_s} \, d\cW_{\varphi,s}.
\end{align*}
In words, $\tilde\varphi$ is a CIR process where the speed of mean reversion is $\kappa_\varphi$, $\mu_\varphi$ the long-term mean level, $\Sigma_\varphi$ the volatility factor and, $\varphi_0$ is the initial condition. We have at this stage $4 \times (2K + K^2)$ CIR coefficients to calibrate, that is $672$ because $K = 12$. This is large relatively to the limited data at hand.
In order to simplify and speed up the calibration, we make the following additional assumptions.
\begin{enumerate}
    \item As stated in \Cref{rem:simplify b}, $(b_{j,k})$ is binomial with in-firm attack rate $a$ also governed by a CIR process.
    \item All stochastic SIR parameters, modeled as CIR processes, share the same mean reversion speed  and volatility coefficient. However, they possess different long-term mean levels, each set equal to its respective initial value, i.e. for all $\varphi\in\Psi$, $(\kappa_\varphi = \kappa_a)$, $(\Sigma_\varphi = \Sigma_a)$, and $(\mu_\varphi = \varphi_0)$.
    \item As state in \cite{doenges2024sir}, for each size, when the out-firm reproduction number ($\beta_k / \gamma_k$) is independent of the firm size and the infections are parallel, the recovery rate $\gamma_{k}$ is proportional to the inverse of the expected recovery time $\sum_{j=1}^{k}\frac{1}{j}$. We write
\begin{align}\label{eq:gamma_beta}
    \gamma_k = \frac{\gamma_1}{\sum_{j=1}^{k}\frac{1}{j}} \quad\text{and}\quad \beta_k = \frac{\beta_1}{\sum_{j=1}^{k}\frac{1}{j}},
\end{align}
where the last equality assumes that the decay rate of immunity is independent of the firm size. Therefore, it is enough to determine $\gamma_1$ and $\beta_1$. 
\end{enumerate}
We therefore need to calibrate six coefficients:  $\kappa_a$, $\Sigma_a$, $a_{t_0}$  $\gamma_{1,t_0}$, $\beta_{1,t_0}$, and $h_\star$, that are denoted as
\begin{align}
    \Theta := \left\{\kappa_a,\Sigma_a,a_{t_0},\gamma_{1,t_0},\beta_{1,t_0},h_\star\right\}.
\end{align}
The calibration procedure for $\Theta$ is as follows. A set $\Theta \in (\mathbb{R}_+)^6$ of CIR coefficients be given, we simulate $M \in \mathbb{N}^*$ trajectories of the SIR parameters over the discrete time grid $\{t_0, t_1, \dots, t_T\}$ using the exact simulation method for CIR processes described in \cite{alfonsi2015simulation}. For each parameter $\varphi \in \Psi$, we obtain the realizations $(\varphi_{t_u}^m)_{1 \leq m \leq M, 0 \leq u \leq T}$. Subsequently, for each trajectory $m\in\llbracket 1, M \rrbracket$, we evolve the SIR model using the following Euler scheme to generate $(S_k^m, I_k^m, R_k^m)_{1 \leq k \leq K}$. For $0 \leq u \leq T-1$ and $1 \leq k \leq K$, the dynamics are given by:
\begin{align}\label{eq:euler SIR}
\left\{
\begin{array}{l}
    \displaystyle S_{k,t_{u+1}}^{m} - S_{k,t_{u}}^{m} = -k Y_{t_{u}}^{m} S^{m}_{k,t_{u}} + Y_{t_{u}}^{m} \sum_{j=k+1}^{K} j S^{m}_{j,t_{u}}\cdot b^m_{j j-k,t_{u}},\\
    \displaystyle I^{m}_{k,t_{u+1}} - I_{k,t_{u}}^{m} = -\gamma^m_{k,t} I_{k,t_{u}} + Y_{t_{u}}^{m}\sum_{j=k}^{K} j S^{m}_{j,t_{u}}\cdot b^m_{j k, t_{u}} ,\\
    \displaystyle R^{m}_{k,t_{u+1}}-R^{m}_{k,t_{u}} = \gamma^m_{k,t_{u}} I^{m}_{k,t_{u}}, 
\end{array}
\right. \text{ where } Y^{m}_{t_{u}} = \frac{1}{N_0} \sum_{k=1}^{K} \beta^m_{k,t_{u}} \cdot k I^{m}_{k,t_{u}},
\end{align}    
with initial conditions $I_{k,t_{0}}^{m} = I_{k,t_{0}}$, $R_{k,t_{0}}^{m} = 0$, and $S_{k,t_{0}}^{m} = S_{k,t_{0}}$. The optimal coefficient $\Theta_\star$ minimizes the mean square error $J_2$, between the historical number of infected $(I_{k,t_u})$ plotted on \Cref{fig:Historical_infected} and the simulated trajectory $(I^{m}_{k,t_u})$, defined as
\begin{align}\label{eq:SIRS_Calibration}
    J_2(\Theta) := \frac{1}{M}\sum_{m=1}^{M}J^{m}_2(\Theta), \quad
\text{with}  \quad J^{m}_2(\Theta):= \sum_{k=1}^{K} \sum_{u=0}^{T} \left|I_{k,t_u} -I^{m}_{k,t_u} \right|^2. 
\end{align}
The procedure is  summarized in \Cref{alg:SIRS_Calibration}.
\begin{algorithm}[ht!]
\caption{SIR Calibration}
\label{alg:SIRS_Calibration}
\begin{algorithmic}[1]
\Procedure{SIR Calibration}{$K,T, (I_{k,t_u}), \af, q, M$}
    \State \textbf{Input:} Maximum firm size $K$, horizon of the cyber-episode $T$, number of infected $(i_{k,t_u})_{1\leq k\leq K, 0\leq u\leq T}$, Zipf law parameters $\af, q$.
    \ForAll{$\Theta = \left\{\kappa_a,\Sigma_a,a_{t_0},\gamma_{1,t_0},\beta_{1,t_0},h_\star\right\}$}
        \State Calculate the number of firms per size $(h_{k,t_0})$ using \eqref{eq:number firms per size}, then the initial number of susceptible using $(s_{k,t_0} = h_{k,t_0}-i_{k,t_0})$
        \State Convert numbers of susceptible into susceptible rates $S_{k,t_0} = \frac{s_{k,t_0}}{h_{k,t_0}}$, and numbers of infected into infection rates $I_{k,t_u} = \frac{i_{k,t_u}}{h_{k,t_0}}$.
        \State Calculate the average size $N_0 = \sum_{l=1}^{K} k(S_{k,t_0}+I_{k,t_0})$.
        \State Simulate $M$ trajectories $(\varphi_t^m)_{1\leq m\leq M,0\leq t\leq T}$ for each $\varphi\in\Psi$ on $\{0,1,\hdots,T\}$.
        \ForAll{$(\varphi_{t_u}^m)_{\varphi\in\Psi, 0\leq u\leq T}$}
            \State Simulate a forward SIR model with parameters $(\varphi_{t_u}^m)$ and initial condition $I_{k,t_0}^{m} = I_{k,t_0}$, $R_{k,t_0}^{m} = 0$, $I_{k,t_0}^{m} = I_{k,t_0}$, as well as $N_0$, and obtain daily time series $(S_k^{m}, I_k^{m}, R^{m}_k)_{1\leq k\leq K}$.
            \State Compute the objective function $J^{m}_2(\Theta)$.
        \EndFor
        \State Compute the objective function $J_2(\Theta)$.
    \EndFor
    \State Determine the minimum and the minimizer $\Theta$ of $J_2$.
    \State \textbf{Output:} the parameters $\Theta=(\kappa_a,\Sigma_a,a_{t_0},\gamma_{1,t_0},\beta_{1,t_0},h_\star)$.
\EndProcedure
\end{algorithmic}
\end{algorithm}
\Cref{tab:init_SIR_param} gives the 6 calibrated parameters. We observe that (1) the volatility of the stochastic SIR parameters is low, and (2) the initial in-firm infection rate, $\frac{1}{1+e^{-a_{t_0}}} = 0.586$ is relatively high.

    \begin{table}[ht!]
    \centering
    \begin{tabular}{|c|c|c|}\hline
     Initial recovery rate  & $\gamma_{1,t_0}$  & 0.6782\\ \hline
    Initial ”out–firm” infection rate & $\beta_{1,t_0}$ &0.5471\\ \hline
    Initial in-firm attack rate & $a_{t_0}$ & 0.3466\\ \hline
    Volatility & $\Sigma_\varphi$ & 0.0151\\ \hline
    Mean reversion speed  & $\mu_\varphi$ &0.4474 \\ \hline
    Total population of firms & $h_\star$ & 14,210\\ \hline
    \end{tabular}
    \caption{The SIR model parameters}
    \label{tab:init_SIR_param}
\end{table}
\noindent  The total number of firms $h_\star$ is used to calculate the initial population in each group, that we give in \Cref{tab:Hi}.
\begin{table*}[ht!]
    \centering
    \begin{tabular}{|r|r|r|r|r|r|r|r|r|r|r|r|r|}
\hline
& \textbf{$H_1$} & \textbf{$H_2$} & \textbf{$H_3$} & \textbf{$H_4$} & \textbf{$H_5$} & \textbf{$H_6$} & \textbf{$H_7$} & \textbf{$H_8$} & \textbf{$H_9$} & \textbf{$H_{10}$} & \textbf{$H_{11}$} & \textbf{$H_{12}$} \\
\hline
Size & 11144&  1646&   538&   244&   132&    80&    52&    36&    26&
           20&    15&    12\\
\hline
\end{tabular}
    \caption{The initial population per group}
    \label{tab:Hi}
\end{table*}

\subsection{Estimation procedure of revenues and claims, simulations, and discussion}

From the calibrated model coefficients $\Theta_\star$, we simulate  $M$ trajectories of the stochastic SIR parameters $(\varphi^m)_{\varphi\in\Psi,1\leq m\leq M}$ ($M= 10,000$ in this numerical analysis). For each trajectory $m$ of parameters, we simulate the cyber-contagion using the Euler scheme for the SIR described in \eqref{eq:euler SIR}. We obtain, for each scenario $1\leq m\leq M$, the force of epidemics $(Y_{t_u}^{m})$. 

\paragraph{Simulation of the arrival of cyber-attack on each firm (and on each subunit)} We fix the scenario $1\leq m\leq M$. 
For each firm $1\leq i\leq H$, we simulate $(\tau_{ij}^{m})_{1\leq j\leq K_i,0\leq u\leq T}$ as follows:
\begin{enumerate}
    \item Primary (or out-firm) infections: for each subunit $1\leq j\leq K_i$, we simulate the first jump of the Cox process $(N_{ij,t_u}^{0,m,sim})_{0\leq u\leq T}$ with intensity $(Y_{t_u}^{m})_{0\leq u\leq T}$. 
    \item Determine the time of the very first cyber-attack in the firm $i$, $\tau_i^{m}$ using \eqref{eq:tau_i} with the convention $\tau_i^{m} = T+1$ if $\displaystyle\sum_{j=1}^{K_i} N_{ij,T}^{0,m,sim} = 0$. This means that firm $i$ is not suffering from a primary infection.
    
    If $K_i = 1$, then $\tau_{i,j}^{m} = \tau_i^{m}$.
    \item If $K_i \geq 2$ and $\tau_i^{m} \leq T$ (that is firm $i$ has many subunits and at least one is externally infected), then secondary (or in-firm) infections can occur: for each $1\leq j\leq K_i$, if $N_{ij,\tau_i^{m}}^{0,m,sim} = 0$ (that is subunit $j$ is not yet infected), simulate a  random variable $U_{ij}^{m}$ with success parameter $a_{\tau_i^{m}}^{m}$
    \begin{itemize}
     \item if $U_{ij}^{m} = 1$, set $\tau_{i,j}^{m} = \tau_i^{m}$,
     \item else, set $\tau_{i,j}^{m} = T+1$ (subunit $j$ is not suffering from either a primary infection or a secondary infection).
    \end{itemize}
    \item If $K_i \geq 2$ and $\tau_i^{m} = T+1$, $\tau_{i,j}^{m} = +\infty$ for all $1\leq j\leq K_i$.
\end{enumerate}

\paragraph{Simulation of the costs}
We simulate also  $M$ realizations of $(\pi_{ij}^{m})_{1\leq m\leq M}$ of the random variables $\pi_{ij}$ for $(i,j)\in\Jj$, to  compute the Monte Carlo approximation of the total instantaneous expected revenue at each date of the cyber episode using the last equality of \eqref{eq:whole economy}: for each $0\leq u\leq T$
\begin{equation}
    \EE[\mathbf{O}_{t_u}] \approx  \sum_{i=1}^{H} \sum_{j=1}^{K_i} z_{ij,0} e^{\mu_{ij} u} \left(1- \frac{1}{M}\sum_{m=1}^{M} \pi_{ij}^{m}\bOne_{ \tau_{i,j}^{m} \leq t_u < \tau_{i,j}^{m} + \frac{1}{\gamma^{m}_{i,\tau_{i,j}^{m}}}} \right).
\end{equation}
For all $1\leq i\leq H$, the daily claims of firm $i$ for each scenario $m$ and at each date $t_u$, $\CC_{i,[t_u,t_{u+1}]}^{m}$, is given, using \eqref{eq:claim in period}, by the trapezoidal method between $t_u\vee\tau_{i,j}^{m}$ and $t_{u+1}\wedge\left(\tau_{i,j}^{m} + \frac{1}{\gamma^{m}_{i,\tau_{i,j}^{m}}}\right)$. We then approach the cumulative distribution function (cdf) of the portfolio at each date $t_u$, with $u\in\{0,1,\hdots,T\}$ 
\begin{align*}
     \ff_{[t_u,t_{u+1}]}(x) &= \PP\left(\sum_{i=1}^{H}C_{i,[t_u,t_{u+1}]} \leq x\right), \quad x\geq 0, 
\end{align*}
by the empirical cdf 
\begin{equation}\label{eq:empirical CDF}
\widehat\ff_{[t_u,t_{u+1}]}^M(x) := \frac{1}{M} \sum_{m=1}^{M} \bOne_{\left\{\displaystyle \sum_{i=1}^{H}C_{i,[t_u,t_{u+1}]}^{m} \leq x \right\}},
\end{equation}
where $\bOne_{(X > x)}$ is the indicator function that equals $1$ if $X > x$,
 and $0$ otherwise. \Cref{alg:compute_cdf} details the procedure for the  compute of $\ff_{[u,u+1]}^M(x)$.
\begin{algorithm}[h!]
    \caption{Calculate CDF $\ff$}
\label{alg:compute_cdf}
\begin{algorithmic}
    \Procedure{$\ff$}{$x, T, M,\alpha^\pi, \beta^\pi, \Theta, (S_{k,t_0}, I_{k,t_0}, R_{k,t_0})$}.
    \State \textbf{Input:} quantile $x$, horizon $T$, number of simulations $M$, the set of coefficients of the SIR parameters $\Theta$, the initial values of the SIR $(S_{k,t_0}, I_{k,t_0}, R_{k,t_0})_{1\leq k\leq K}$, parameters of severity $(\alpha^\pi, \beta^\pi)$, and the parameters of firms $(z_{ij,t_0},\mu_{ij},\sigma_{ij})_{(i,j)\in\Jj}$.
    \ForAll{Scenario $1\leq m\leq M$}
    \State From $\Theta$, simulate a trajectory of the set of SIR parameters $\Psi$.  \Comment{Line 1}
    \State Simulate a SIR model with parameter $\Psi$ and initial condition $(S_{k,t_0}, I_{k,t_0},R_{k,t_0})_{1\leq k\leq K}$.  
    \State Simulate the infection times. 
    \State Simulate a trajectory of severity with $(\alpha^\pi, \beta^\pi)$. 
    \ForAll{Date $1\leq u\leq T$}
    \State Calculate the set of infected firms at time $u$.
    \State Calculate the total claims at $t_u$ using $\sum_{i=1}^{H}C_{i,[t_u,t_{u+1}]}^m$. \Comment{Line 6}
    \EndFor
    \State Calculate $\ff$ at $u$ from \eqref{eq:empirical CDF}.  
    \EndFor
    \State \textbf{Output:} The cumulative distribution function of daily losses $\ff$.
    \EndProcedure
\end{algorithmic}
\end{algorithm}
\noindent Finally, we compute the  Aggregate Exceedance Probability (AEP) over $[0,T]$ defined in \eqref{def:portfolio measure}  by performing  Monte Carlo simulations described in~\Cref{alg:compute_aep} with $M_P = 10,000$. 
\begin{algorithm}[h!]
    \caption{Calculate $\aep$}
\label{alg:compute_aep}
\begin{algorithmic}
    \Procedure{$\aep$}{$x,\upsilon, M_P$ and parameters of $\ff$ in \Cref{alg:compute_cdf}}
    \State \textbf{Input:} AEP threshold $x$, number of simulations of AEP $M_P$, intensity of Poisson random variable $\upsilon$.
    \State Simulate $M_P$ times $P\sim \text{Poisson}(\upsilon)$ and get $(P_m)_{m\in\llbracket 1,M_P\rrbracket}$.
    \ForAll{$P_m$}
        \State Simulate $P_m$ the loss $\CC_T$ by repeatedly executing \textit{Line 1} to \textit{Line 6} of \Cref{alg:compute_cdf} on the interval $[t_0, t_T]$ instead of $[t_u, t_{u+1}]$ and get $(\CC_T^{l,m})_{1\leq l\leq P_m}$.
    \EndFor
    \State Calculate $\displaystyle \aep_T(x) = \displaystyle\frac{1}{M_P} \sum_{m=1}^{M_P} \bOne_{\left\{\displaystyle \sum_{l=1}^{P_m} \CC_{T}^{l,m} > x \right\}}$.
    \State \textbf{Output:} The aggregate exceedance probability $\aep_T$ at $T$.
    \EndProcedure
\end{algorithmic}
\end{algorithm}

\subsubsection{Results of the contagion model}
To compare the realized dynamics of the cyber‑contagion (see \Cref{fig:Historical_infected}) with the simulated ones, we focus on the daily number of infected subunits, the peak defined in \eqref{eq:peak}, and the date at which it occurs.  \Cref{fig:Peak_with_confident_interval} illustrates the Lockbit trajectory (black) alongside the 99\% confidence interval (blue) for the simulated dynamics over $T = 100$ days, based on $M = 10,000$ trajectories. Initial inspection of the results indicates that the model aligns closely with the observed epidemic trajectory: both the observed and mean simulated trajectories begin from the same initial condition (32 infected firms  equivalent to  49 infected subunits) and reach their peak -- defined as the maximum of the averaged trajectory -- on day 38. While the peak date is identical, our model overestimates the observed peak by 17.15\%.
\begin{figure}[ht!]
    \centering
    \includegraphics[width=0.75\linewidth]{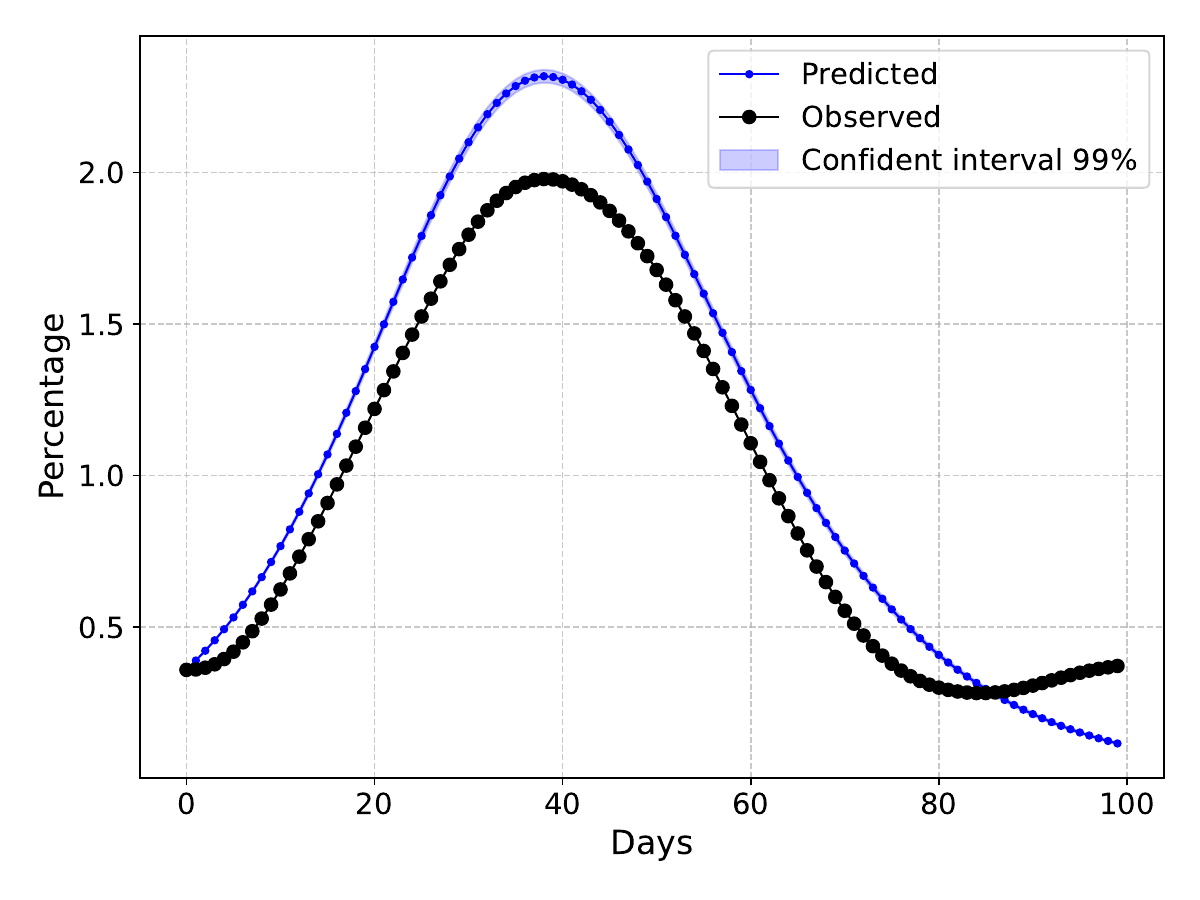}
    \caption{The dynamics of the total number of infected subunits, as percentage of the total population $h_\star$ }
    \label{fig:Peak_with_confident_interval}
\end{figure}
As in \cite{hillairet2021propagation}, the peak (and the date it is reached) can be used to model the limited capacity of the insurance company to respond to an incident. If the number of policyholders needing help is too large, the insurer's assistance teams can become overloaded. \Cref{fig:Peak_date__histogram} displays the distribution of peak dates across all simulated trajectories. The predicted peak is in close alignment with the realized peak: it predicts 317 infected subunits (to be compared to  273 observed), representing a 16.12\% overestimation. Notably, while the observed peak occurs on day 38, the model accurately predicts the same peak date, with a maximum absolute deviation of 12 days across all simulations.
\begin{figure}[ht!]
    \centering
    \begin{subfigure}[b]{0.45\textwidth}
        \centering
        \includegraphics[width=\textwidth]{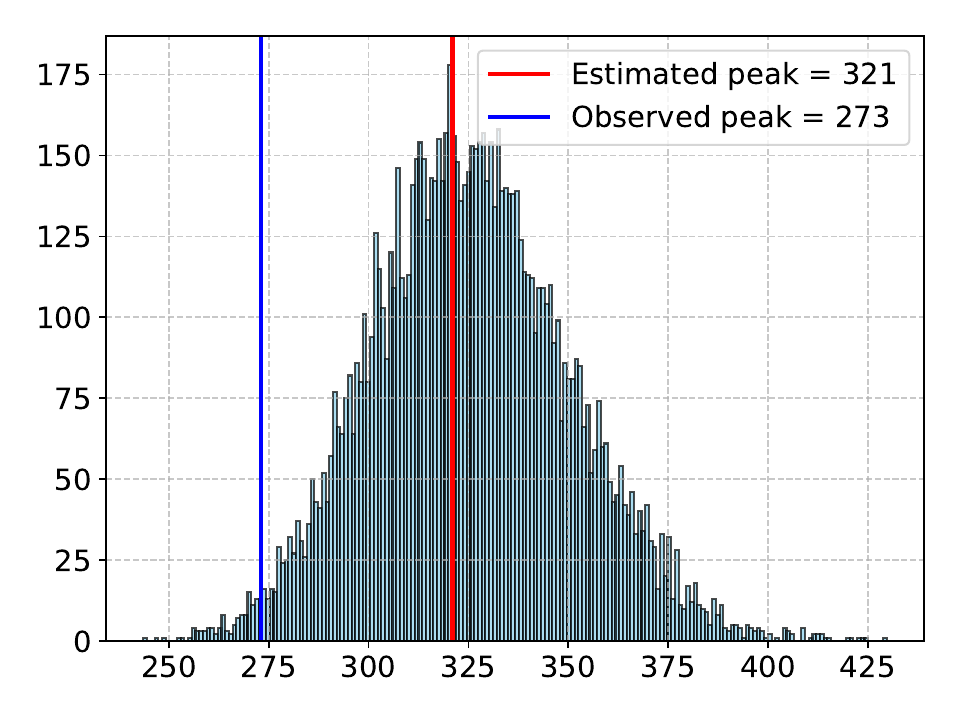}
        \caption{The value of the peak}
        \label{fig:Peak_height_histogram}
    \end{subfigure}
    \hfill
    \begin{subfigure}[b]{0.45\textwidth}
        \centering
        \includegraphics[width=\textwidth]{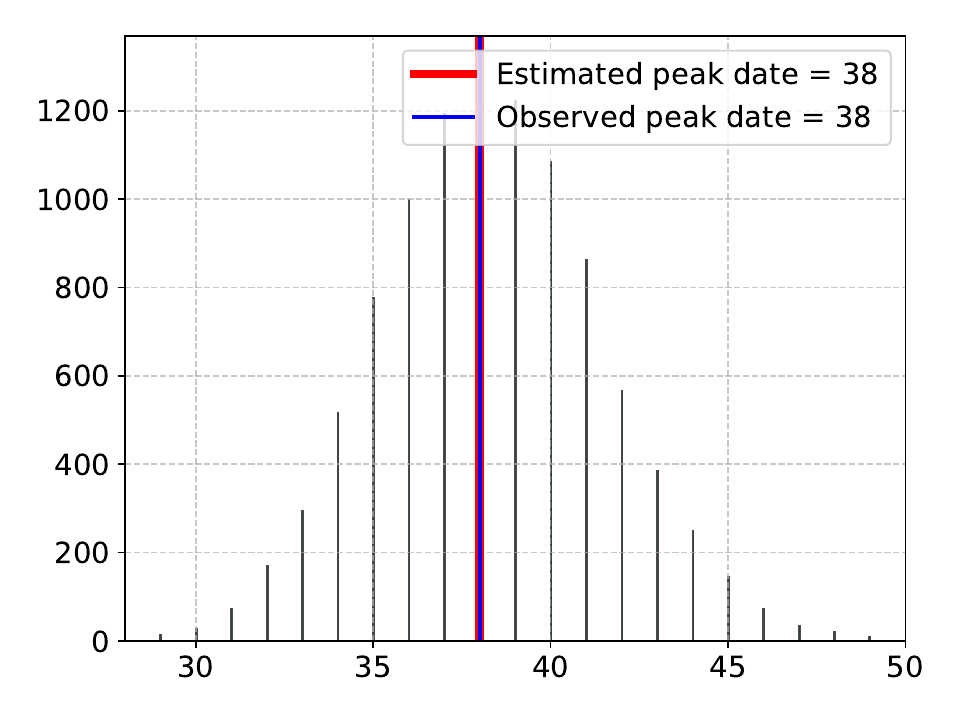}
        \caption{The date at which the peak is reached}
        \label{fig:Peak_date__histogram}
    \end{subfigure}
    \caption{Histograms}
    \label{fig:peak_height}
\end{figure}
In fact, it is preferable to slightly overestimate both the peak and underestimate the date at which it is reached, thereby yielding conservative risk measures.\\
We now examine the dynamics of contagion by firms size. \Cref{fig:SIR} represents the evolution of 
\begin{figure}[ht!]
    \centering
    \begin{subfigure}[b]{0.325\textwidth}
        \centering
        \includegraphics[width=\textwidth]{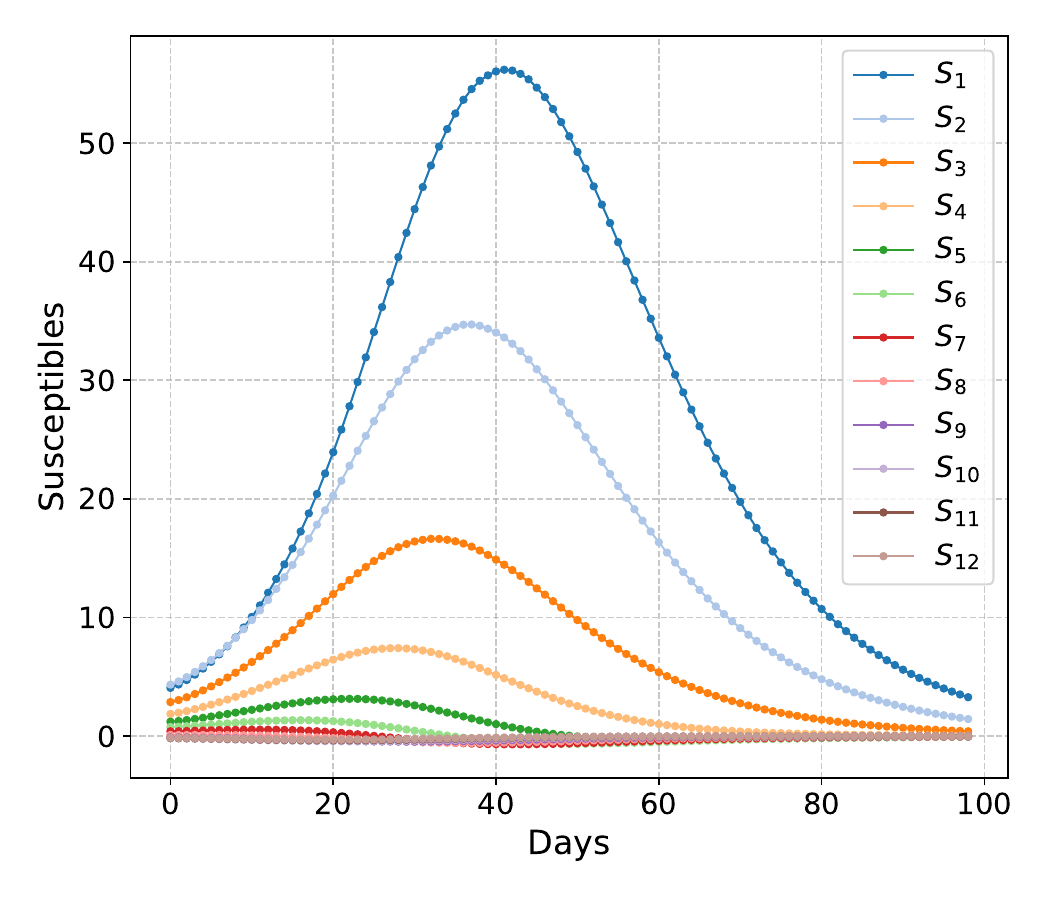}
        \caption{New susceptible}
        \label{fig:Susceptible}
    \end{subfigure}
    \hfill
    \begin{subfigure}[b]{0.325\textwidth}
        \centering
        \includegraphics[width=\textwidth]{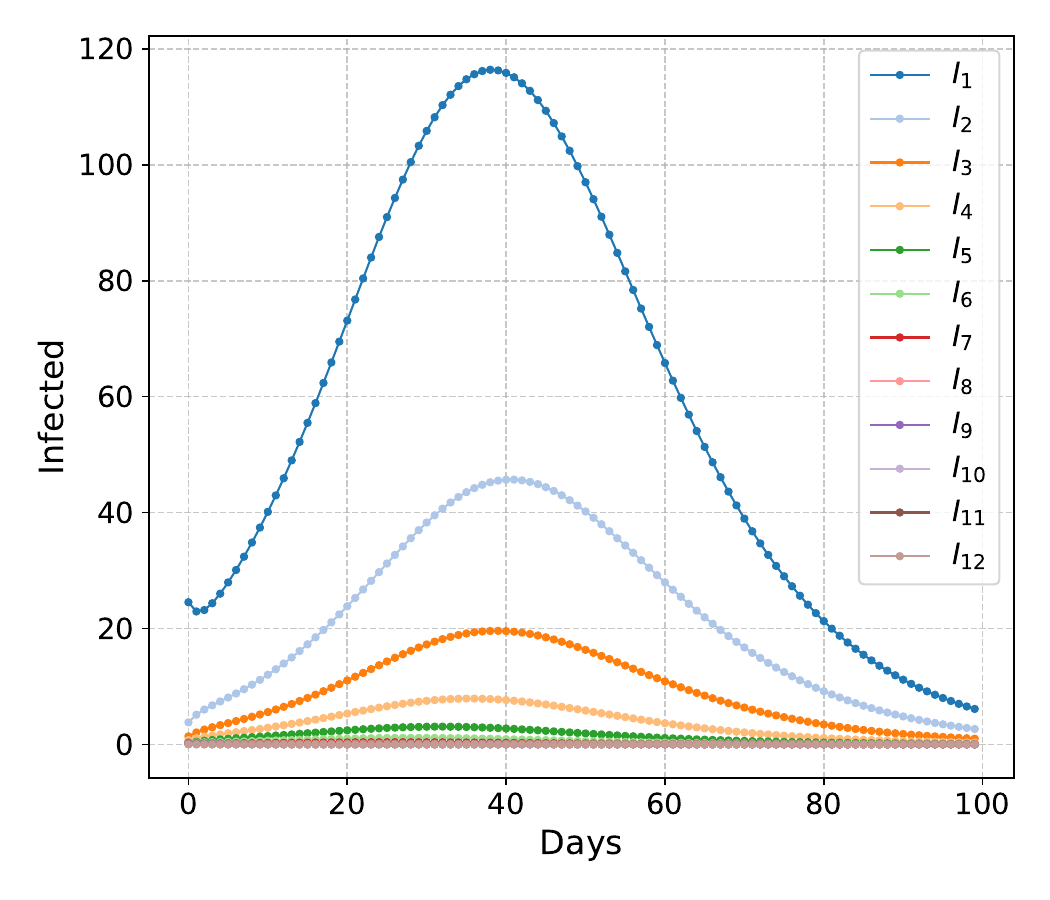}
        \caption{Infected}
        \label{fig:Infected}
    \end{subfigure}
    \hfill
    \begin{subfigure}[b]{0.325\textwidth}
        \centering
        \includegraphics[width=\textwidth]{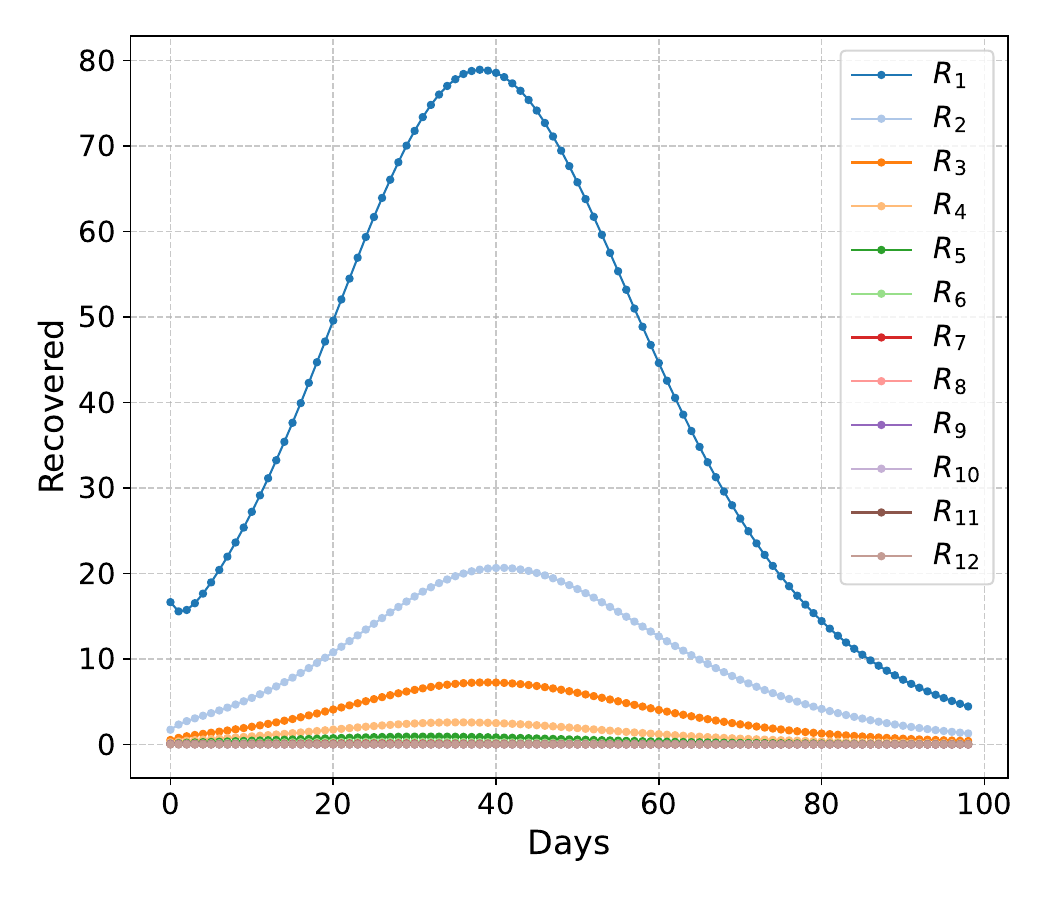}
        \caption{New recovered}
        \label{fig:Recovered}
    \end{subfigure}
    \caption{Evolution of the population in  each group as percentage of the initial population in the group (see \Cref{tab:Hi}) }
    \label{fig:SIR}
\end{figure}
susceptible (\Cref{fig:Susceptible}), infected (\Cref{fig:Infected}), and recovered (\Cref{fig:Recovered}) firms by size. The three dynamics follow a similar pattern -- characterized by rapid growth toward a peak followed by a steady decline. However, different sizes reach their respective peaks at different dates. The peak dates by firm size are summarized in \Cref{tab:peak_height_date_per_group}. 
\begin{center}
    \begin{table*}[ht]
    \centering
    \begin{tabular}{r|r|r|r|r|r|r|r|r|r|r|r|r}
$k = $& \textbf{1} & \textbf{2} & \textbf{3} & \textbf{4} & \textbf{5} & \textbf{6} & \textbf{7} & \textbf{8-12} \\
\hline
New susceptibles height date & 41 &37& 33& 28& 22& 16 &11 & 0\\
\hline
Infected height date &38& 40& 39& 35& 32& 28& 26&  0\\
\hline
    Recovered height date & 38& 40& 39& 35& 32& 28& 26&  0\\
\end{tabular}
    \caption{The date of the peak's height per size}
    \label{tab:peak_height_date_per_group}
\end{table*}
\end{center}
For each group, the epidemic dynamics exhibit a consistent temporal ordering. For firms of size 1, the peak is closest or equal to the peak of epidemics (i.e. the maximum of the total number of infected subunits for all the size). This is consistent with the fact that this group represents 78\% of companies. For firms of size 2 to 8, the peak in the number of infected and recovered individuals occur at the same time. This is consistent because the recovery rate is particularly high ($\gamma_{1,t_0} = 0.678$ in \Cref{tab:init_SIR_param}) which, according to \cite{doenges2024sir}, is proportional to the inverse of the expected recovery time. The peak in the susceptible population for each size  occurs before the peak of the total infected population. The peak decreases as the size increases, indicating a  collapse of the epidemics in the larger organizational units first. This is partly due to the structure of the model, in which large firms feed into small ones. Moreover, since firm's size follows Zipf's law, there are much less firms with big size than small size. In our example, there are exactly 109 firms from size 8 to 12 which is $0.77\%$ of the population (see \Cref{tab:Hi}). Thus, large companies very quickly reach their peak (right from the start of the epidemic here as shown in the last column of \Cref{tab:peak_height_date_per_group}).
Additional analyzes of the contagion model -- especially regarding parameter sensitivity -- can be conducted, drawing inspiration from  \cite{doenges2024sir}. 

\begin{figure}[ht!]
        \centering
        \includegraphics[width=0.55\textwidth]{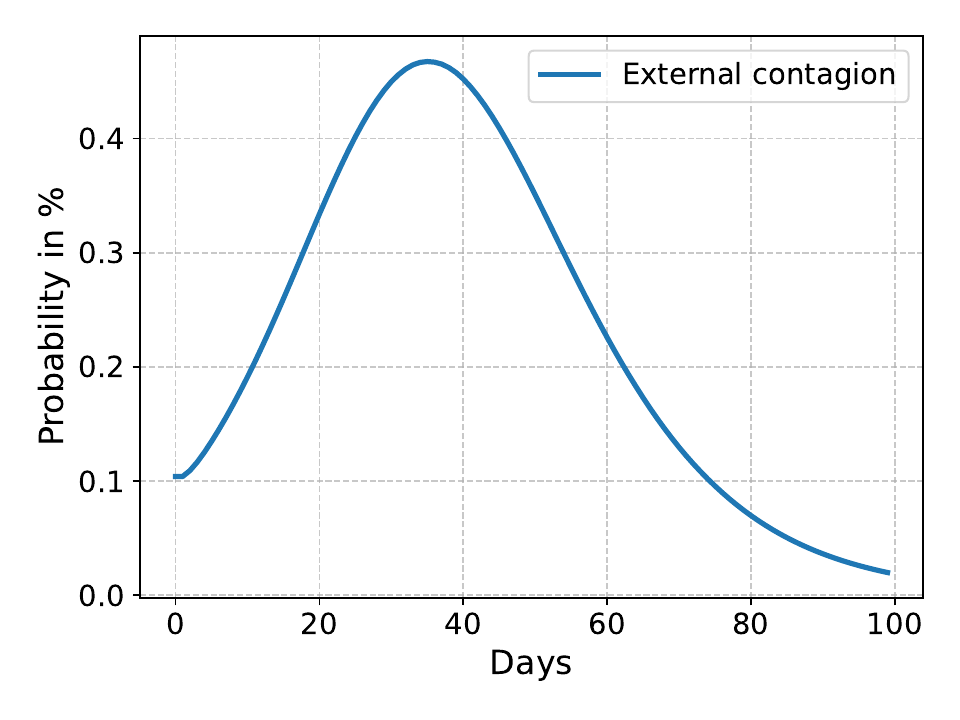}
        \caption{Probability of external infection}
        \label{fig:Y_average_trajectory}
\end{figure}
We now turn to the probabilities of internal and external contagion. \Cref{fig:Y_average_trajectory} shows, for each day of the epidemic, the probability that the infection of a subunit originates outside the firm. This probability is low, though not zero and mirrors the epidemic dynamics, with a peak around day 38. By contrast, the probability that the infection of a subunit comes from another subunit within the same firm that we denoted $\frac{1}{1+e^{-a}}$,
is high ($>58\%$ calculated using the parameters \Cref{tab:init_SIR_param}) and, as assumed, independent of the progression of the epidemic. In summary, the likelihood that a firm becomes infected is small; however, once infection occurs, there is a significant chance that multiple subunits will be affected.

\subsubsection{Impact of the contagion on a given firm}

This section aims to examine the broader impact of contagion on the economy (see \eqref{eq:whole economy}) or its representative firms (i.e. the arithmetic mean of firms per  size $1\leq k\leq K$, $\overline{Z}_{t}^{(k)} := \frac{\sum_{i=1}^{H} \mathbf{1}_{\{K_i=k\}}\,Z_{i,t}}{\sum_{i=1}^{H}\mathbf{1}_{\{K_i=k\}}}$). It is important to recall that contagion effects stem from the organizational structure of the firms within the portfolio -- specifically, the distribution of firm sizes and the presence of subsidiary networks. While current data constraints preclude the estimation of firm-specific or sectoral cost differences, the model is designed with the flexibility to incorporate these factors should the data become available.

As introduced in \Cref{ass:sub-unit dynamics}, when a subunit undergoes a cyber-attack, its revenue negatively jumps by $\pi\sim\mathcal{B}(\alpha^\pi, \beta^\pi)$. For the following simulations, we assume $\alpha^\pi = 50$ and $\beta^\pi = 10$ giving a mean loss of $0.833$ with a standard deviation of $0.048$. 
We start by calculating $K_i$ for each firm $i$ and $z_{0,ij}$, $\mu_{ij}$, $\sigma_{ij}$ for each subunit $j$ of firm $i$. Then we run $M = 10,000$ scenarios corresponding to the simulation of the first arrival times of the cyber-event $\tau_{ij}$ and of the severity $\pi_{ij}$. 

We plot on \Cref{fig:First_infection_time} the probability density function of the infection time for different firm size i.e. $(\tau_k)_{1\leq k\leq 12}$. Since we focus on the first 100 days of the epidemic, the event $\{\tau_k \geq 100\}$ -- whose probability is reported in \Cref{tab:First_infection_time}-- corresponds to ‘no infection’ within the first 100 days.. 

\begin{center}
    \begin{table*}[ht]
    \centering\footnotesize
    \begin{tabular}{r|r|r|r|r|r|r|r|r|r|r|r|r}
$k = $& \textbf{1} & \textbf{2} & \textbf{3} & \textbf{4} & \textbf{5} & \textbf{6} & \textbf{7} & \textbf{8} & \textbf{9} & \textbf{10} & \textbf{11} & \textbf{12} \\
\hline
    $\PP[\tau_k> 100]$ & 0.794&0.623&0.500&0.394&0.310& 0.249& 0.194& 0.150& 0.129& 0.095& 0.069& 0.060\\
\end{tabular}
    \caption{Probability of no infection ($\tau_k > 100$)}
    \label{tab:First_infection_time}
\end{table*}

\begin{figure}[ht!]
        \centering
        \includegraphics[width=0.75\textwidth]{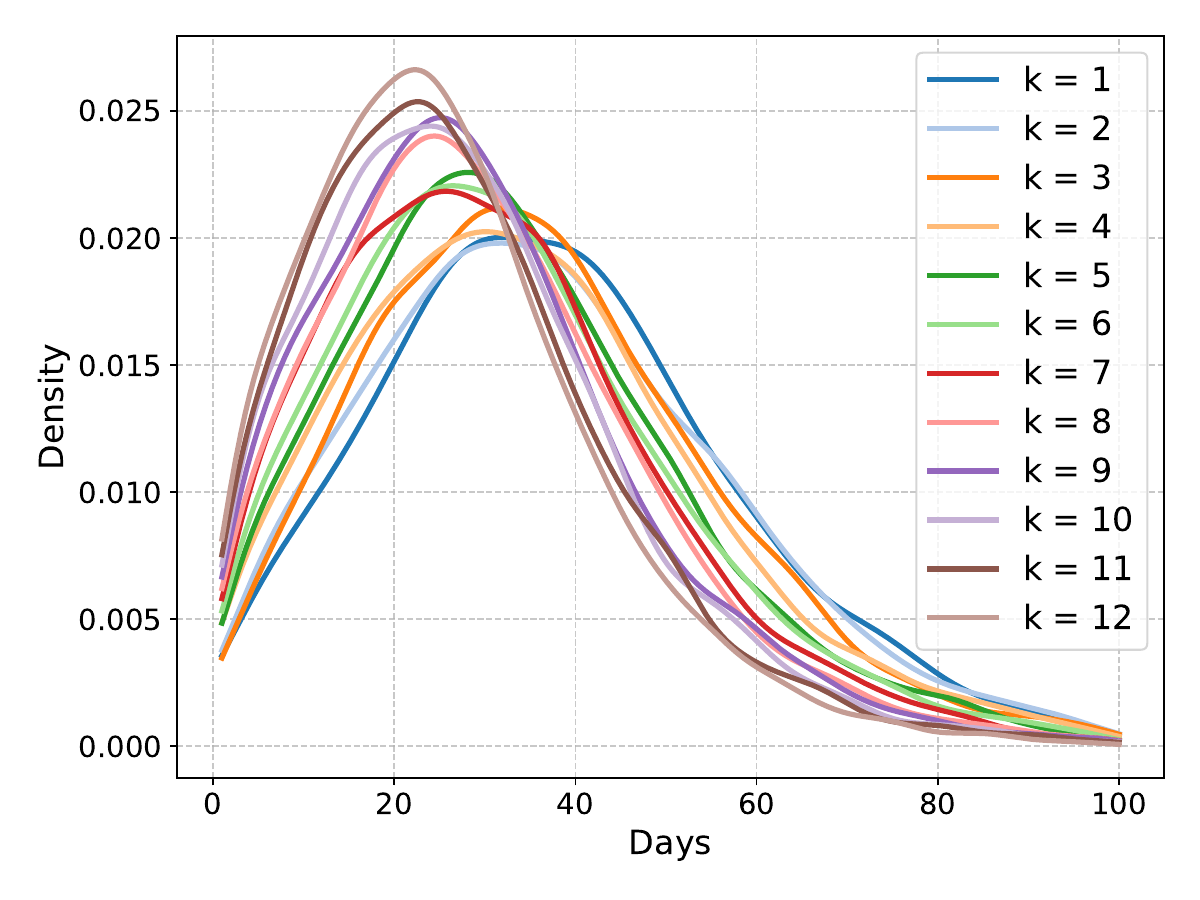}
        \caption{PDF of the infection time}
        \label{fig:First_infection_time}
\end{figure}
\end{center}

From \Cref{tab:First_infection_time}, we observe that the larger the firm, the lower the probability that it becomes infected. In \Cref{fig:First_infection_time},  the distribution mode -- the point at which the probability density reaches its maximum -- is reached well before (on day 24) the peak of the epidemic (which only occurs on day 38). This mode naturally decreases with the size of the company. This means that the larger the size, the greater the chance that the firm will be infected before the peak is reached. Therefore, our model reproduces the stylized fact of cyber-episodes mentioned in \cite{BaksyCyberFirmSize2025, kamiya2021risk}.

For each group $k=1, \hdots, 12$, we plot on \Cref{fig:Firm_revenue_loss} the average daily firm's claim as a percentage of the daily revenue, calculated across $M = 10,000$ simulations. 
\begin{figure}[ht!]
    \centering
    \begin{subfigure}[b]{0.48\textwidth}
        \centering
        \includegraphics[width=\columnwidth]{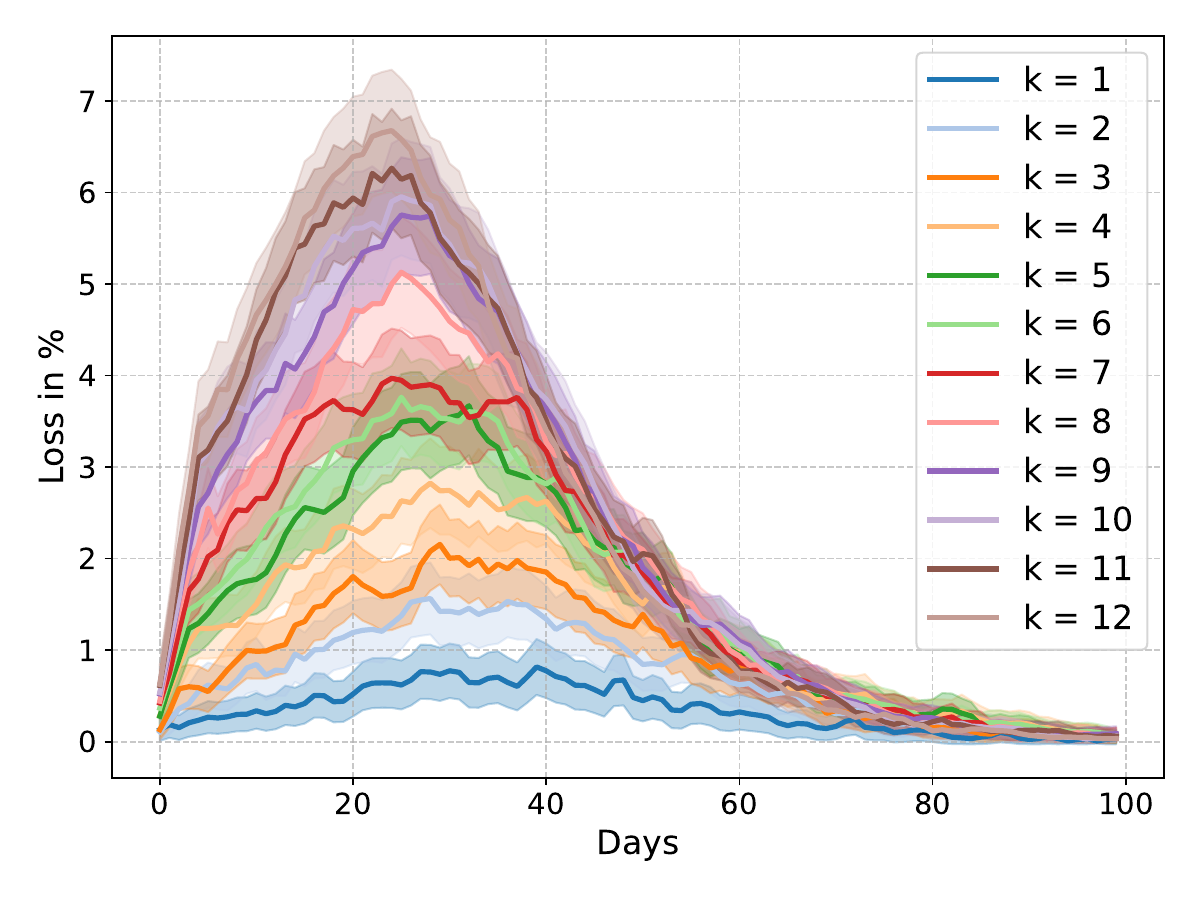}
        \caption{Mean across $10,000$ simulations}
        \label{fig:Firm_revenue_loss}
    \end{subfigure}
    \hfill
    \begin{subfigure}[b]{0.48\textwidth}
        \centering
        \includegraphics[width=\columnwidth]{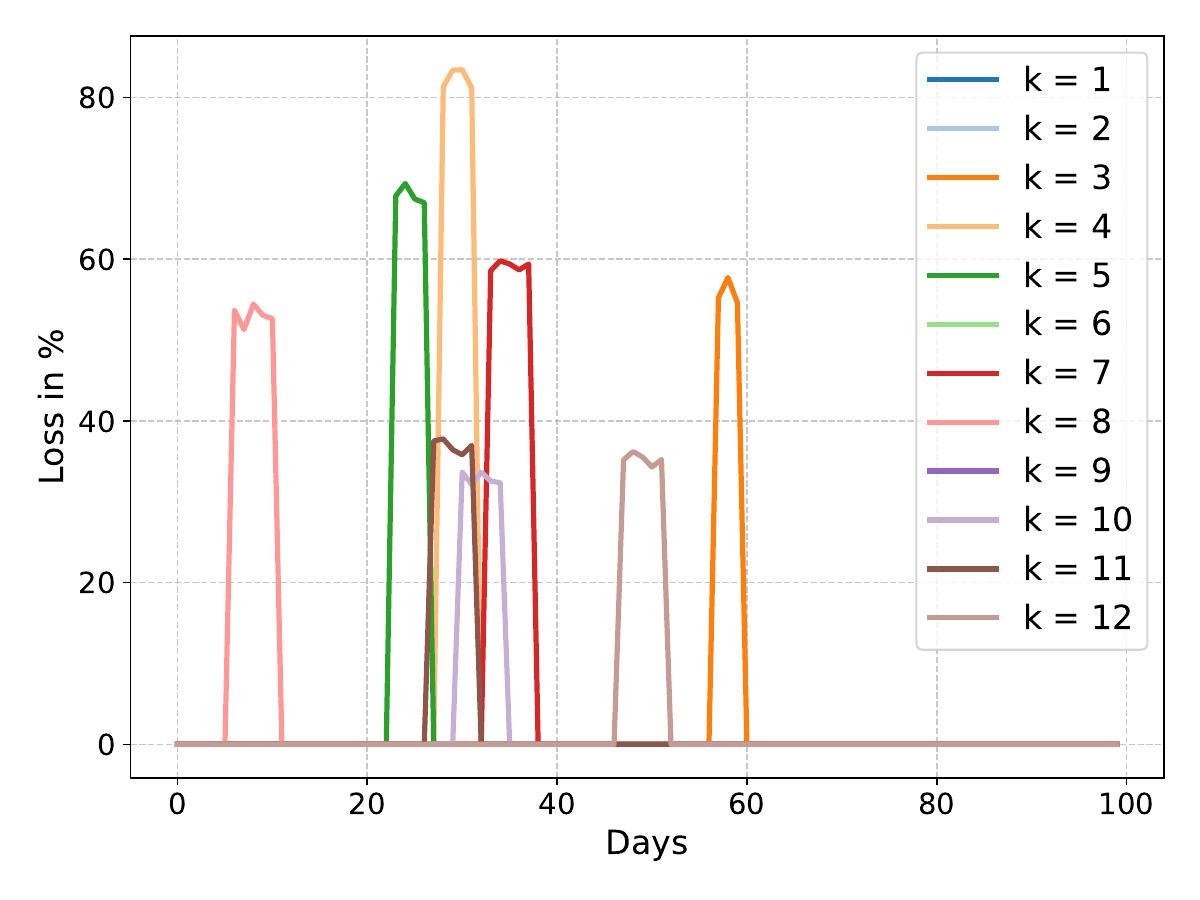}
        \caption{Single simulation trajectory}
    \label{fig:losses_1attack}
    \end{subfigure}
    \caption{Daily revenue loss by firm size}
    \label{fig:losses}
\end{figure}
As expected, for all sizes, the potential loss of firm is naturally correlated with the date of initial infection. Furthermore, the losses increase with the number of subunits, clearly implying that size is an aggravating factor. This is due to internal contamination between subsidiaries whose probability is  58.6\%. 

While an average daily loss peaking at $7\%$ may appear low, this figure masks the underlying volatility. If we move beyond the mean of $10,000$ simulations and consider a single contagion scenario, the picture changes significantly. As illustrated in \Cref{fig:losses_1attack}, even with sporadic attacks, the claims relative to daily revenue can be substantial, occasionally reaching 75\%. This extreme variance underscores the critical need for insurance. Furthermore, the duration of the infection increases with company size, as reflected by the recovery period $\delta_{ij} = \frac{1}{\gamma_{i,\tau_{ij}}}$ defined in \eqref{eq:delta def}.

\subsubsection{Impact of the contagion on an insurance portfolio}

We now proceed to the computation of the Aggregate Exceedance Probability (AEP) outlined in \Cref{alg:compute_aep}. We begin by analyzing the PDF and the tails of the portfolio daily losses using their CDF computed in \Cref{alg:compute_cdf}. Given that small businesses typically hold fewer insurance policies, we focus on companies with a size greater than or equal to 2 ($K_i \geq 2$). This subset comprises 621 firms whose total revenue on the first day of the cyber-episode is \euro 39.13 million.

\paragraph{Probability density function (PDF) of portfolio daily losses}

We simulate $M = 10,000$ scenarios of the epidemic which gives $M$ trajectories of the daily total losses $\sum_{i=1}^{621} C_{i,[t_u,t_{u+1}]}$ defined in \eqref{eq:claim in period}. We analyze results on 5 different dates: $t_u=7$ (7 days after the beginning of the epidemic), $t_u =24$ (14 days before the peak), $t_u=38$ (the peak date), $t_u=52$ (14 days after the peak), and $t_u=93$ (7 days before the end of the epidemic).\\
\begin{figure}[ht!]
    \centering
    \begin{subfigure}[b]{0.48\textwidth}
        \centering
    \includegraphics[width=\textwidth]{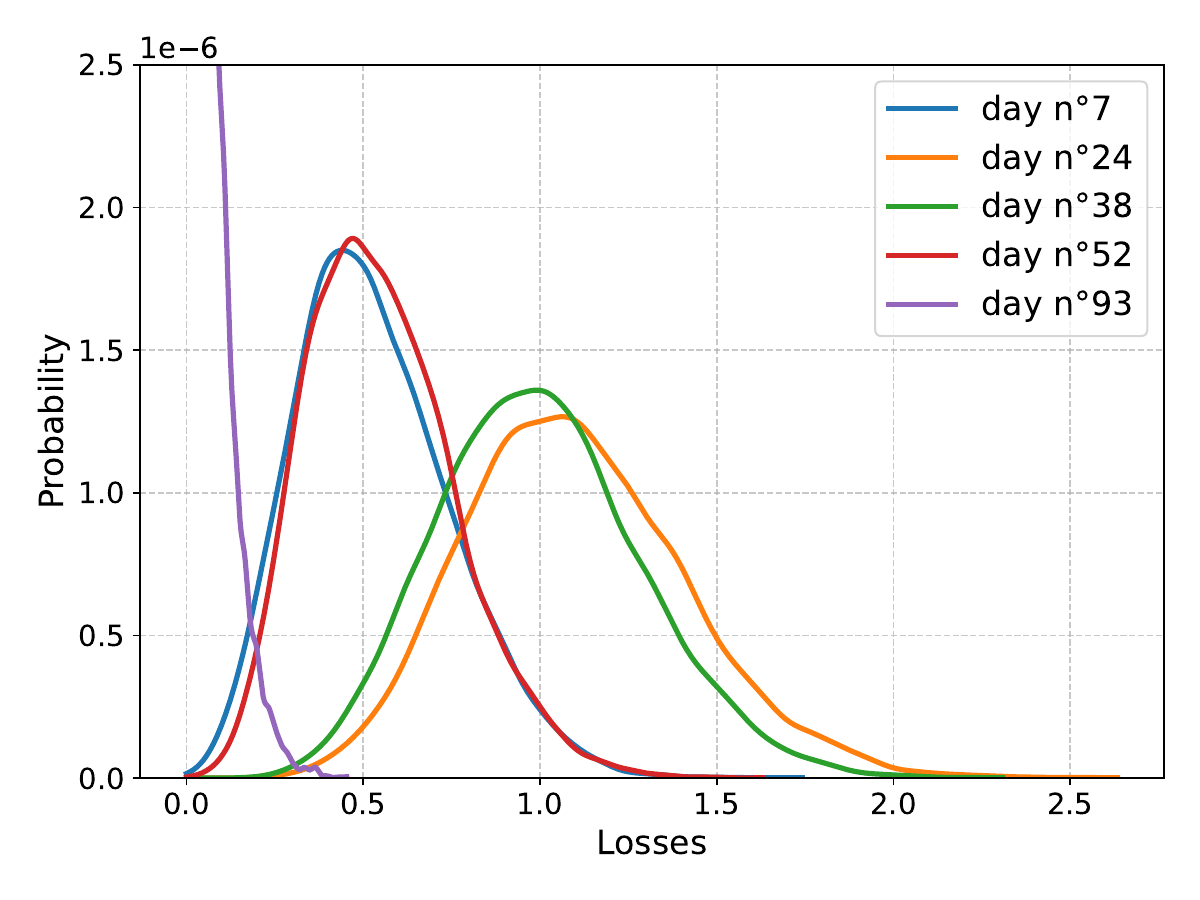}
    \caption{PDF curve}
\label{fig:PDF_costs_whole_portfolio}
    \end{subfigure}
    \hfill
    \begin{subfigure}[b]{0.48\textwidth}
        \centering
    \includegraphics[width=\columnwidth]{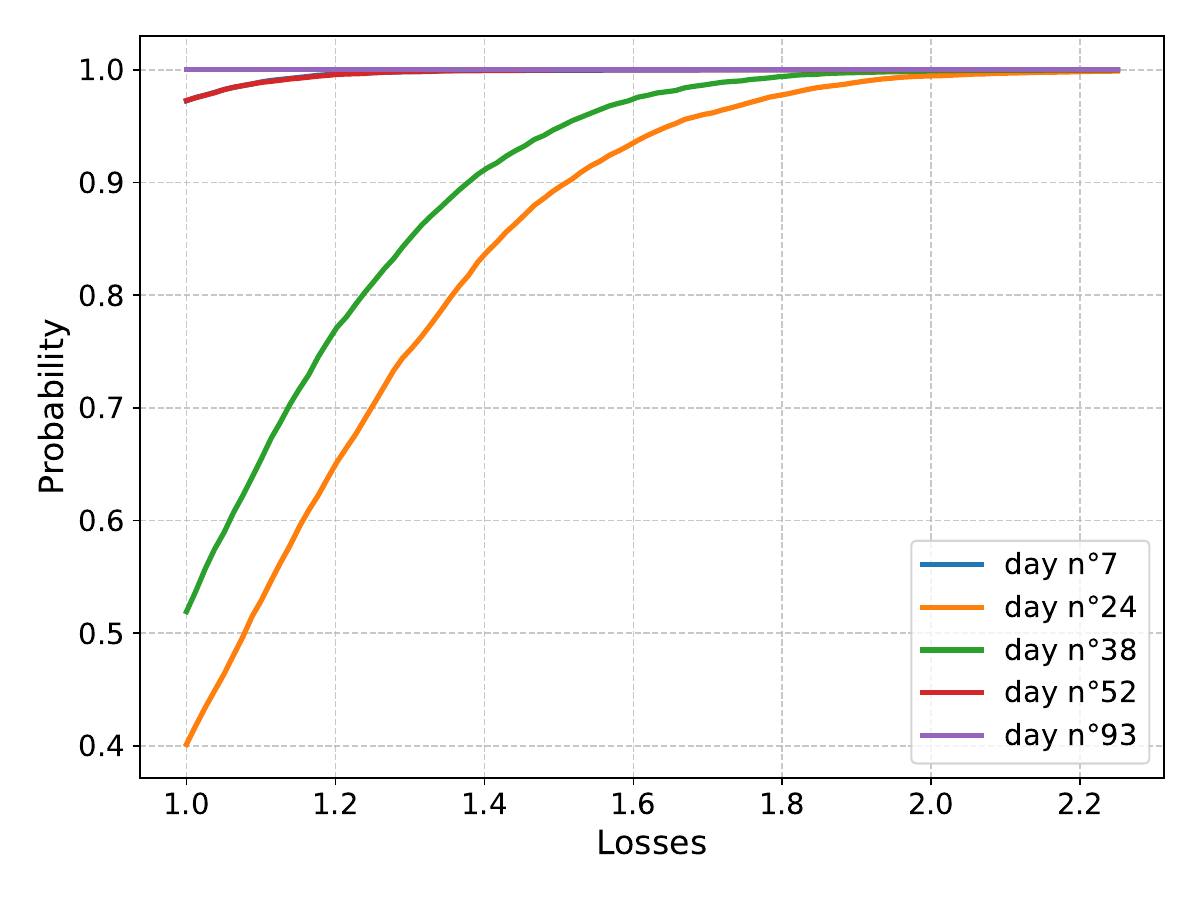}
    \caption{Right tails of the CDF}
    \label{fig:CDF_costs_whole_portfolio}
    \end{subfigure}
    \caption{Daily financial impact in \euro million}
    \label{fig:pdf_and_cdf}
\end{figure}
\Cref{fig:PDF_costs_whole_portfolio} and \Cref{tab:Characteristics_distribution_support} provides the PDF of portfolio daily losses. We observe that the closer we are to the peak of the cyber-event, the wider and flatter the total claims distribution is. This increased dispersion is most pronounced around day 24, coinciding with the period of highest infection probability -- the mode of the infection-time distribution -- as shown in \Cref{fig:First_infection_time}. The distribution’s support widens near the peak of the cyber‑contagion, and both its mean and mode are higher around that point than on other days. It can also be noted that the mode and the mean of the distribution at $t_u = 7$ are larger than at $t_u=93$, reflecting a rapid decline. To summarize, the cyber-event induces a deformation in the distribution of total losses: as the contagion nears its peak (roughly between days 24 and 52), the PDF of daily losses becomes increasingly flattened, reflecting a broader dispersion around the mode.

\begin{table}[ht]
    \centering
    \begin{tabular}{c|c|c|c|c|c}
        \textbf{Day} & \textbf{7} & \textbf{24} & \textbf{38} & \textbf{52} & \textbf{93}\\ \hline
        \textbf{Lower bound} & 0  & 0  & 0 & 0 & 0
 \\ \hline
        \textbf{Upper bound} & 1.742& 2.634& 2.309& 1.630&  0.453\\ \hline
         \textbf{Mode} & 0.441& 1.062& 0.991& 0.470& 0.001\\ \hline
    \textbf{Mean} &0.871& 1.317&1.154& 0.815& 0.227 \\
    \end{tabular}
    \caption{Characteristics of the distribution's support (in \euro million )}

\label{tab:Characteristics_distribution_support}
\end{table}
The losses on x-axis are in million euros and  should be compared to the average daily revenue of a company in the portfolio, i.e. \euro 39.13 million. In other words, we could expect a daily claim of up to 6.729\% of \euro 39.13 million. Let zoom on the right tail of the distribution.

\paragraph{Right tails of distribution of the portfolio daily losses}
 We consider claims from \euro 1 million (2.56\% of the average daily revenue of the portfolio) to \euro 2.634 million (6.729\%) using the empirical CDF given in \eqref{eq:empirical CDF}. \Cref{fig:CDF_costs_whole_portfolio} illustrates the right tails of the portfolio daily losses at 5 different dates. Consistent with the PDF deformation on \Cref{fig:PDF_costs_whole_portfolio}, the right tail of distribution of daily losses thickens with the severity of the cyber-event. On day 24 when the mode of first arrival time is reached, the tail is the heaviest and the probability that the portfolio daily losses exceed \euro 1 million reaches 60\% whereas it remains negligible around the initial and terminal phases of the epidemic.\\
\noindent Similar as in \Cref{fig:pdf_and_cdf},   \Cref{fig:PDF_cum_costs_whole_portfolio} and \Cref{fig:CDF_cum_costs_whole_portfolio}
provides the  PDF and the right tails of the total losses incurred by the sub-portfolio of 621 firms (with size $K_i$ greater or equal 2) $\CC_T = \sum_{i=1}^{621} C_{i,[t_0,t_{100}]}$, over the 100-day duration of the cyber-episode.
\begin{figure}[ht!]
    \centering
    \begin{subfigure}[b]{0.485\textwidth}
        \centering
        \includegraphics[width=\textwidth]{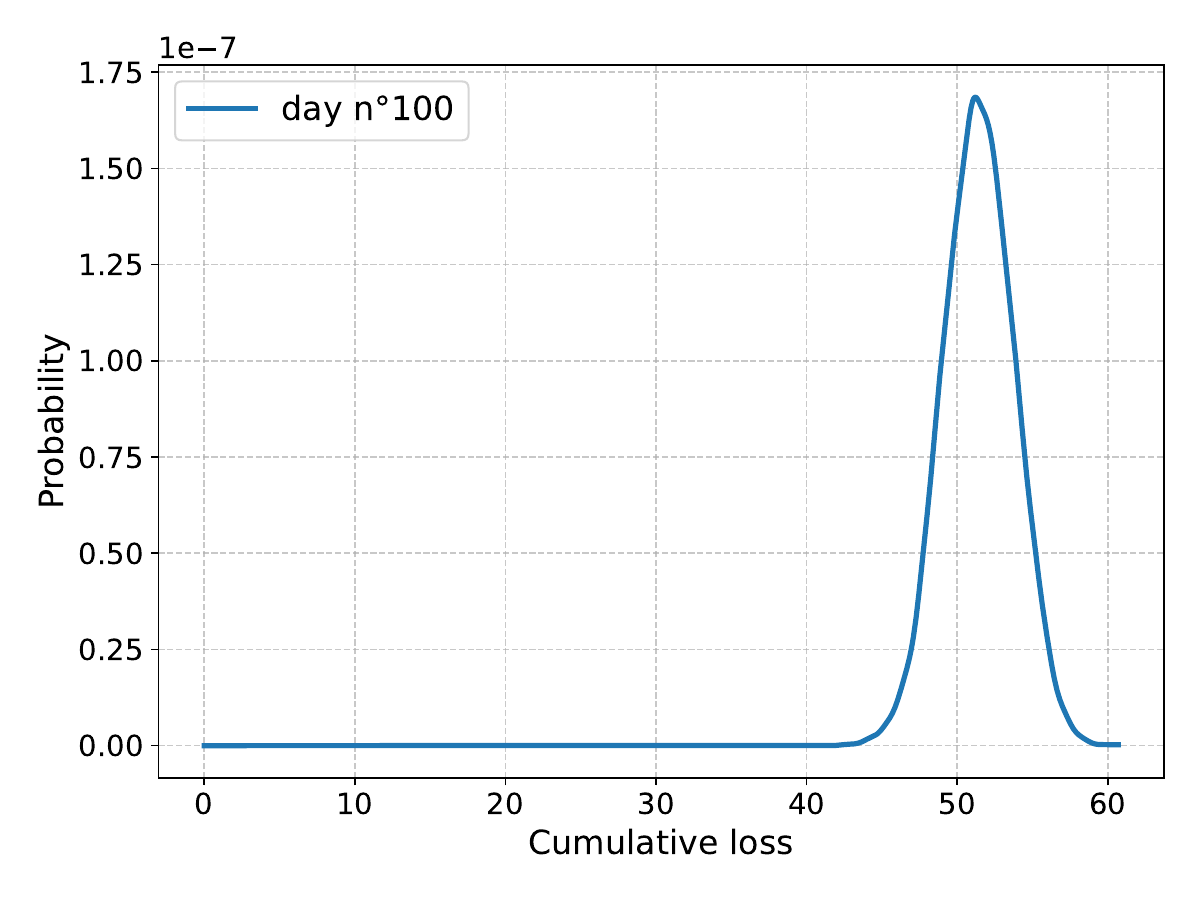}
        \caption{PDF curve}
        \label{fig:PDF_cum_costs_whole_portfolio}
    \end{subfigure}
    \hfill
    \begin{subfigure}[b]{0.485\textwidth}
        \centering
        \includegraphics[width=\textwidth]{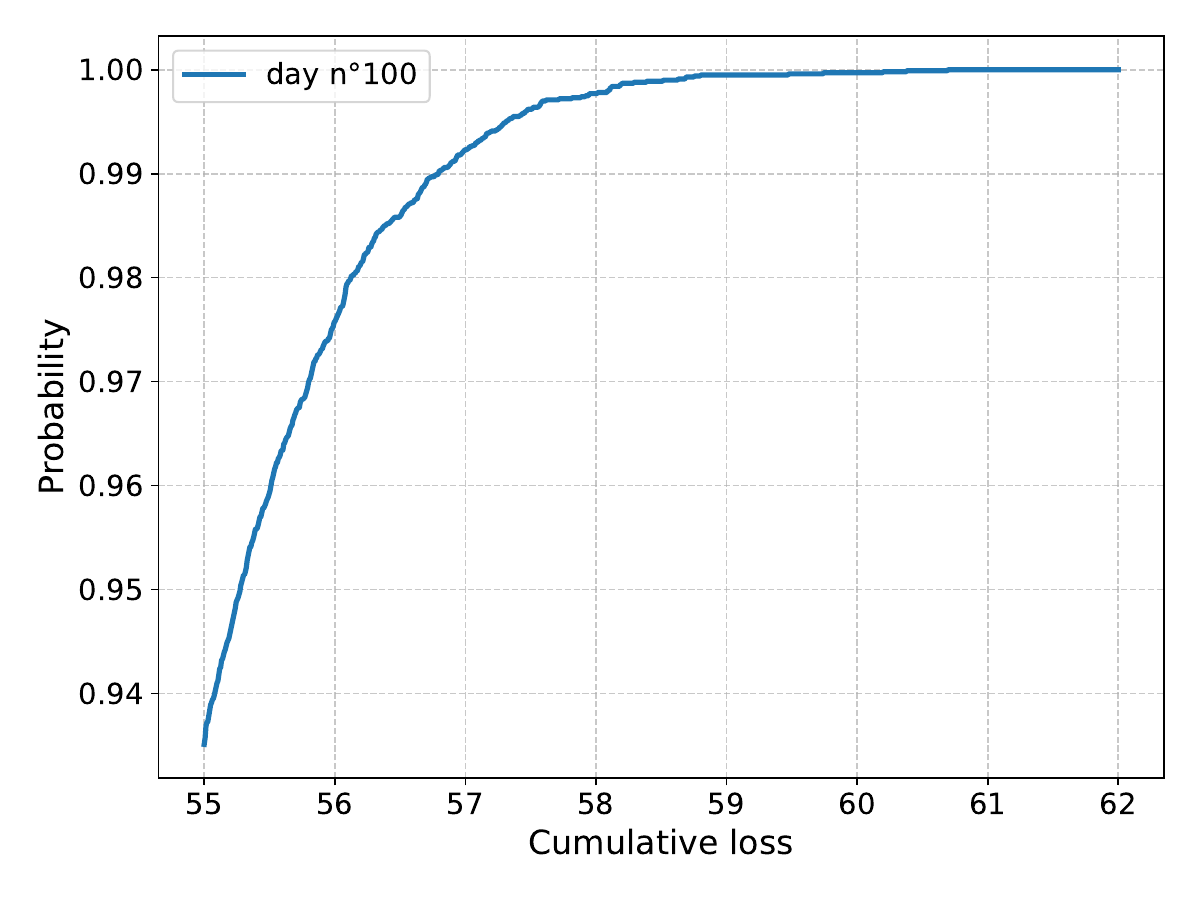}
        \caption{Right tail of the CDF}
        \label{fig:CDF_cum_costs_whole_portfolio}
    \end{subfigure}
    \caption{Financial impact after 100 days of the epidemic}
\end{figure}
The PDF curve in \Cref{fig:PDF_cum_costs_whole_portfolio} suggests that the total claims after 100-day may range approximately up to \euro 60.44 million, with a median greater than \euro 51.47 million and a mode equals to \euro 51.43 million. Moreover, the right tail in \Cref{fig:CDF_cum_costs_whole_portfolio} indicates that the total losses after 100 days exceeds \euro 55 million with a probability around 6\%. In practical terms, given a total daily revenue of \euro 39.13 million, there is a risk that the total claims from a single 100-day cyber-episode will exceed 1.5 days of aggregate revenue for all firms in the sub-portfolio.

 \paragraph{Aggregate Exceedance Probability (AEP) of the whole portfolio} \Cref{fig:AEP_curve} displays the exact AEP curve (in blue), computed via $\CC_T$ in \eqref{eq:c portfolio}, alongside the approximated AEP curve (in red), computed via $\CC_T^\star$ in \eqref{eq:approx C_T}. These results are obtained using \Cref{alg:compute_aep} with $M_P = 10,000$ and an average of $\upsilon = 0.105$ cyber-episodes per $T=100$ days. This means that there is a 10\% probability that a contagious cyber-episode occurs within 100 days (see \Cref{tab:poisson}).
\begin{table}[h!]
\centering
\begin{tabular}{c|cccc}
\hline
 & $P=0$ & $P=1$ & $P=2$ & $P\ge 3$ \\
\hline
Probability (\%) & 90.0 & 9.45 & 0.496 & 0.054 \\
\hline
\end{tabular}
\caption{ Poisson distribution with parameter $\upsilon = 0.105$.}
\label{tab:poisson}
\end{table}

\begin{figure}[ht!]
    \centering
    \begin{subfigure}[b]{0.48\textwidth}
        \centering
        \includegraphics[width=\columnwidth]{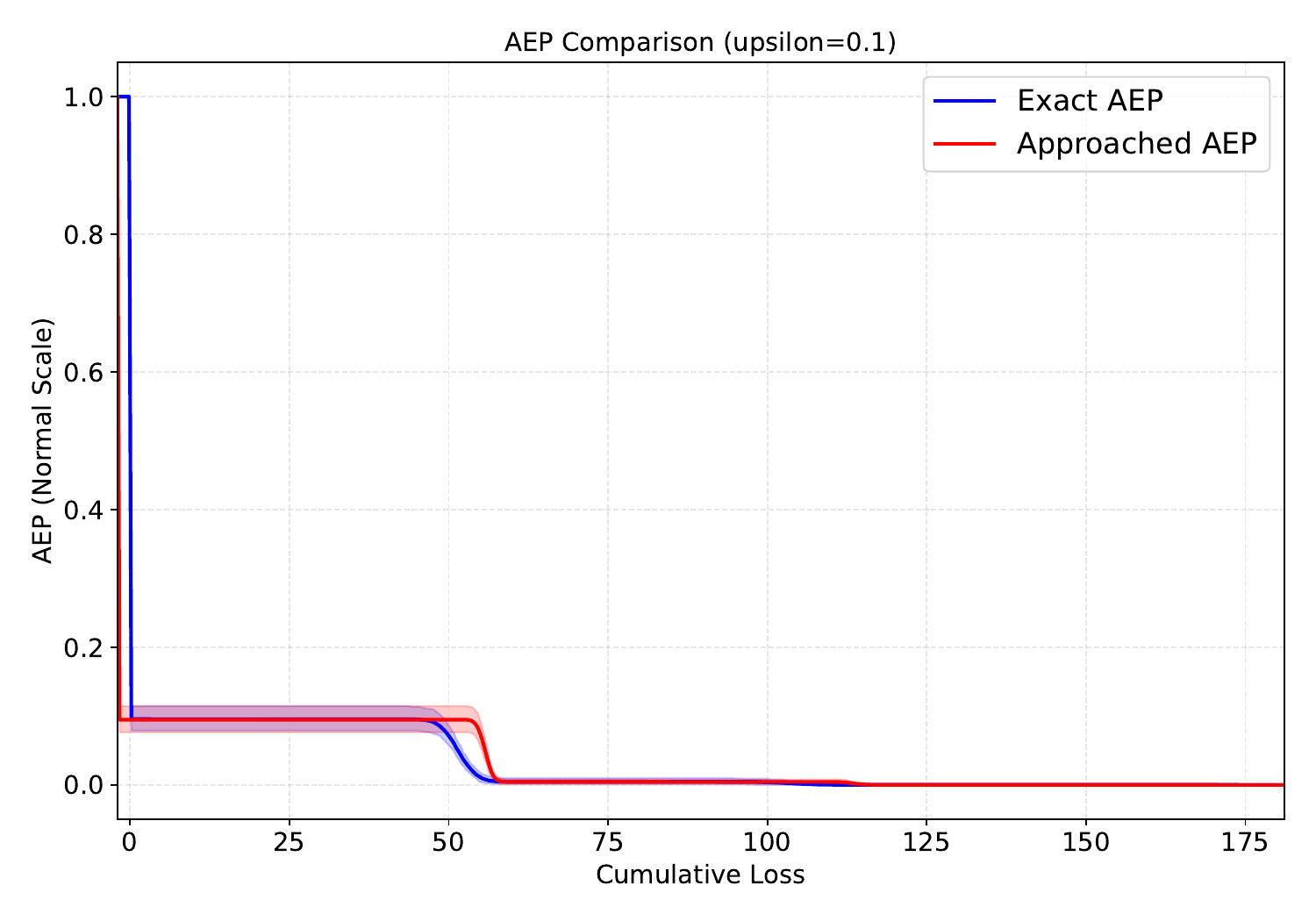}
        \caption{Normal scale}
        \label{fig:AEP_cum_costs_whole_portfolio}
    \end{subfigure}
    \hfill
    \begin{subfigure}[b]{0.48\textwidth}
        \centering
        \includegraphics[width=\columnwidth]{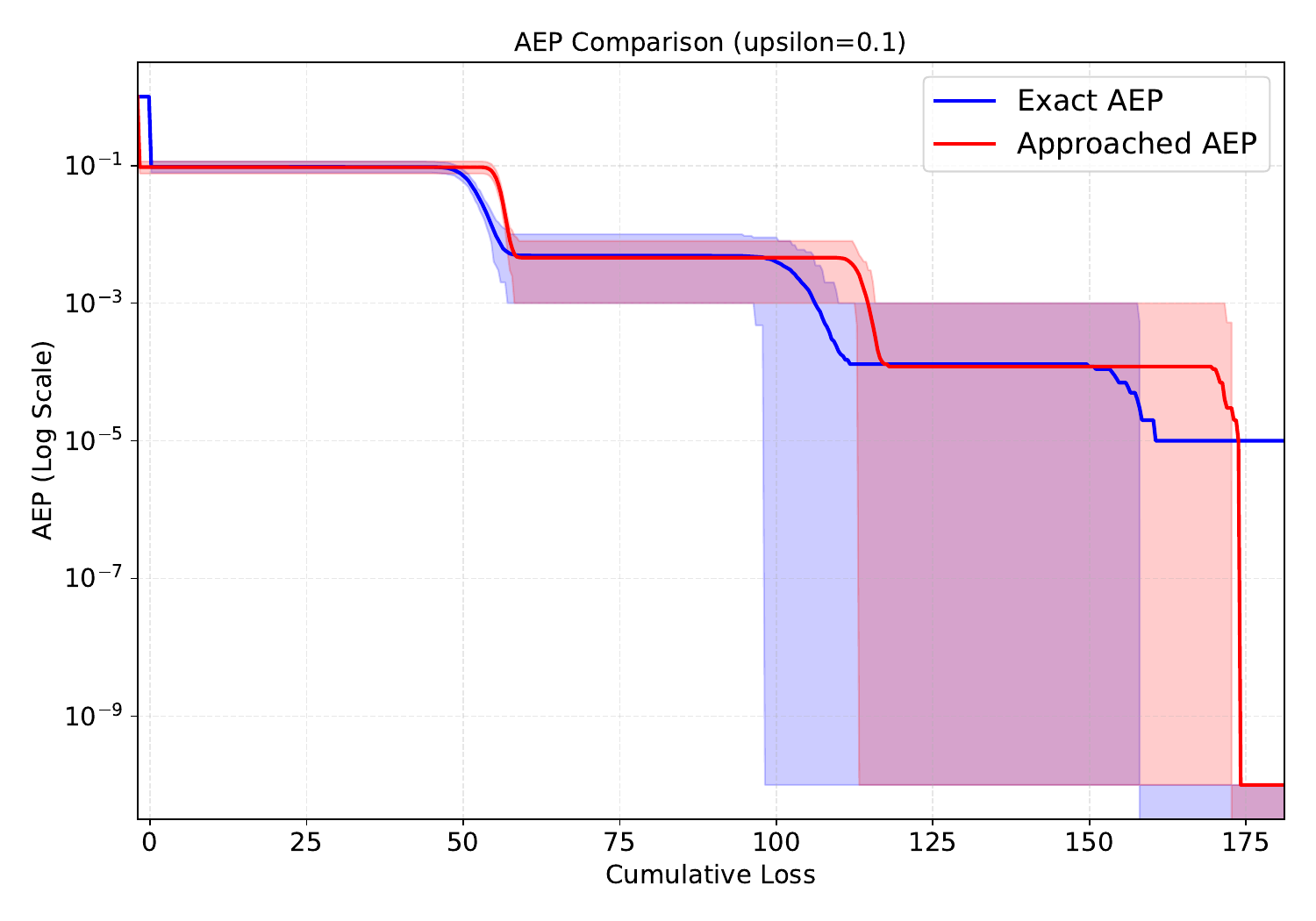}
        \caption{Log-scale}
        \label{fig:AEP_cum_whole_portfolio_log}
    \end{subfigure}
    \caption{Exact (in blue) and approached AEP (in red) curves (the sum of losses are in \euro{} million)} 
    \label{fig:AEP_curve}
\end{figure}
The exact AEP curve 
on \Cref{fig:AEP_cum_costs_whole_portfolio} shows that, since the probability that no event occurs is about $90\%$, most of the mass of the distribution is zero. Moreover, even with an average of only $0.1$ cyber-episode over $100$ days, distribution support appears to extend to \euro180 million -- roughly four days of revenue for the entire portfolio. We subsequently adopt a log-scale to better visualize this tail and to analyze the behavior conditional on the occurrence of cyber-episodes. {The exact EAP curve} on Figure~\ref{fig:AEP_cum_whole_portfolio_log} illustrates that, when an event occurs, total losses range from $0$ to \euro180 million with a decreasing, sometimes very small but strictly positive probability. As expected from the definition of $\aep$, the curve is composed of successive segments: the part between $0$ and \euro60 million corresponds to $F_{\CC_{100}}$; the section between roughly \euro60 and \euro120 million reflects the 2-fold convolution of $F_{\CC_{100}}$; and the tail between \euro120 and \euro180 million corresponds to the 3-fold convolution. {Additionally}, the initial flat portion of the curve stems directly from the shape of the single-episode loss distribution: conditional on an event, the probability that total losses fall below \euro40 million is zero (see \Cref{fig:PDF_cum_costs_whole_portfolio}). Finally, the approached AEP (the red curve in \Cref{fig:AEP_curve}, computed from $\CC_T^\star$ defined in \eqref{eq:approx C_T}) provides a good approximation of the exact AEP: it remains in the same range (up to \euro180 million) and exhibits the same jump structure. However, the jumps are noticeably steeper, and the approximation (that does not take into account the idiosyncratic risk) slightly underestimates the probability of the most extreme events. 
In our example, the portfolio is relatively small ($H=621$
 firms), so the AEP approximation is less relevant and the computational gain is negligible. For large-scale portfolios, by contrast, idiosyncratic risk mutualization operates more effectively and averaging out that risk becomes meaningful. The resulting computational efficiency is then significant enough that the slight underestimation of the AEP becomes an acceptable trade-off.   \\

The analysis can be extended  by calculating the AEP and PDF by activity's sector, by company size, and according to many other factors. We may also adopt different frequencies to examine how losses evolve on a weekly or monthly basis. Ultimately, the appropriate frequency and level of aggregation depend on the specific structure of the insurance portfolio.

\section*{Conclusion}
 This paper proposes a stochastic multigroup SIR model coupled with a granular model of firm growth, incorporating stylized facts from economics and cybersecurity to describe the propagation of cyber‑episodes within a cyber-insurance portfolio. We provide quantitative analysis of the impact of such an event  on firms’ revenue dynamics and on an insurance portfolio. The model shows that the infection of a subunit most likely originates from another subunit within the same firm, not from outside. Moreover, larger firms experience more internal transmission, leading to greater losses and higher insurance claims. This extends the previous works of \cite{hillairet2021propagation,hillairet2022cyber} and provides a more comprehensive framework to assess  the impact of a cyber-event on an economy or an insurance portfolio, under multiple scenario configurations.


\small
\bibliographystyle{apalike}
\bibliography{apssamp}

\end{document}